\newcommand{\ind}{\mathds{1}}
\renewcommand{\P}{\mathbb{P}}
\renewcommand{\var}{\text{Var}}
\newcommand{\cov}{\,\text{Cov}}
\newcommand{\B}{\mathcal{B}}
\newcommand{\nprop}{n,\text{prop}}
\newcommand{\TV}{\text{TV}}
\newcommand{\Exp}{\text{Exp}}
\newcommand{\X}{\mathcal{X}}
\newcommand{\Y}{\mathcal{Y}}
\newcommand{\N}{\mathbb{N}}
\newcommand{\R}{\mathbb{R}}
\newcommand{\sph}{\mathbb{S}}
\renewcommand{\d}{\text{d}}
\newcommand{\epsi}{\varepsilon}
\newcommand{\diag}{\text{diag}}
\renewcommand\det{\,\text{det}}
\newcommand{\GL}{\,\text{GL}}
\newcommand{\Pc}{\mathcal{P}}
\newcommand{\A}{\mathcal{A}}
\newcommand{\F}{\mathcal{F}}
\newcommand{\M}{\mathcal{M}}
\newcommand{\Nc}{\mathcal{N}}
\newcommand{\U}{\mathcal{U}}
\renewcommand{\O}{\mathcal{O}}
\newcommand{\fatzero}{\mathbf{0}}
\newcommand{\ra}{\rightarrow}
\newcommand{\lora}{\longrightarrow}
\newcommand{\asconv}{\stackrel{\normalfont\text{a.s.}}{\lora}}
\newcommand\ccint[2]{\,[#1,#2]}
\newcommand\ooint[2]{\,]#1,#2[}
\newcommand\coint[2]{\,[#1,#2[}
\theoremstyle{plain}
\newtheorem{theorem}{Theorem}[section]
\newtheorem{lemma}[theorem]{Lemma}
\theoremstyle{definition}
\newtheorem{definition}[theorem]{Definition}
\newtheorem{example}[theorem]{Example}
\newtheorem{remark}[theorem]{Remark}
\icmltitlerunning{Parallel Affine Transformation Tuning of MCMC}
\begin{document}

\twocolumn[
\icmltitle{Parallel Affine Transformation Tuning of Markov Chain Monte Carlo}

\begin{icmlauthorlist}
	\icmlauthor{Philip Schär}{jena}
	\icmlauthor{Michael Habeck}{jena}
	\icmlauthor{Daniel Rudolf}{passau}
\end{icmlauthorlist}

\icmlaffiliation{jena}{Microscopic Image Analysis Group, Friedrich Schiller University Jena, Jena, Germany}
\icmlaffiliation{passau}{Faculty of Computer Science and Mathematics, University of Passau, Passau, Germany}

\icmlcorrespondingauthor{Daniel Rudolf}{daniel.rudolf@uni-passau.de}

\icmlkeywords{MCMC, adaptive MCMC, latent space, slice sampling}

\vskip 0.3in
]

\printAffiliationsAndNotice{}

\begin{abstract}
	The performance of Markov chain Monte Carlo samplers strongly depends on the properties of the target distribution such as its covariance structure, the location of its probability mass and its tail behavior. We explore the use of bijective affine transformations of the sample space to improve the properties of the target distribution and thereby the performance of samplers running in the transformed space. In particular, we propose a flexible and user-friendly scheme for adaptively learning the affine transformation during sampling. Moreover, the combination of our scheme with Gibbsian polar slice sampling is shown to produce samples of high quality at comparatively low computational cost in several settings based on real-world data.
\end{abstract}

\section{Introduction} \label{Sec:intro}

A variety of methods in probabilistic inference and machine learning relies on the ability to generate (approximate) samples from a high-dimensional probability distribution. But methods generating samples of high quality tend to be slow, so that the sampling can pose a severe computational bottleneck. Here we develop a new approach to address the general black-box sampling problem on $\R^d$ for arbitrary dimensions $d \in \N$.

Let us start with the concrete formulation of the aforementioned problem. Whenever random variables are introduced in the following, we implicitly require them to be defined on a sufficiently rich common probability space $(\Omega,\F,\P)$. The target distribution $\nu$ on $(\R^d,\B(\R^d))$ is given through a potentially unnormalized density ${\varrho: \R^d \ra \coint{0}{\infty}}$, meaning
\begin{equation}
	\nu(A)
	= \frac{\int_A \varrho(x) \d x}{\int_{\R^d} \varrho(x) \d x} ,
	\qquad A \in \B(\R^d) .
	\label{Eq:def_nu}
\end{equation}
We may evaluate $\varrho$ at any given point, but we generally cannot evaluate $\nu$ and do not even know the normalization constant of $\varrho$ (the denominator in \eqref{Eq:def_nu}). The task is to sample -- approximately -- from $\nu$, i.e.~to generate realizations $x_0,x_1,\dots \in \R^d$ of random variables $X_0,X_1,\dots$ on $(\R^d, \B(\R^d))$ such that the distribution $\P^{X_n} := \P \circ X_n^{-1}$ of $X_n$ is close to $\nu$ in some sense, at least for large enough $n$.

A powerful approach for solving the sampling problem is \textit{Markov chain Monte Carlo} (MCMC). An MCMC sampler generates a truncated realization $x_0, \dots, x_m$ of a Markov chain $(X_n)_{n \in \N_0}$ with some initial distribution $\xi$ and a transition kernel\footnote{For basic knowledge regarding the definitions and properties of transition kernels see e.g.~\citet{RudolfDiss}.} $P$ on $\R^d \times \B(\R^d)$ that leaves $\nu$ invariant. For commonly used MCMC methods, there are theoretical results ensuring that $\P^{X_n}$ converges to $\nu$, meaning
\begin{equation*}
	\TV(\P^{X_n}, \nu)
	= \TV(\xi P^n, \nu)
	\stackrel{n \ra \infty}{\lora} 0 ,
\end{equation*}
under weak assumptions on $\varrho$ and $\xi$, where $\TV$ denotes the total variation distance.

In real-world applications of MCMC methods, information about essential properties of the target distribution, such as the covariance structure or the number, location and relative importance of modes, is often not available. Therefore, MCMC users typically resort to generic off-the-shelf samplers like \textit{random walk Metropolis} (RWM) \cite{Metropolis}, the \textit{Metropolis-adjusted Langevin algorithm} (MALA) \cite{MALA}, \textit{Hamiltonian Monte Carlo} (HMC) \cite{BayesianNeuralNets} or \textit{hybrid uniform slice sampling} \cite{SSNeal}. Although these methods are in principle able to address sampling problems with very little a priori information about the target, their performance is often suboptimal, especially in high dimensions, resulting in slow convergence and long-lasting autocorrelations. If some information about the target is available, ``informed'' MCMC methods that can explicitly incorporate this knowledge into their transition mechanisms often vastly outperform the generic uninformed methods. Instances of informed MCMC samplers include \textit{independent Metropolis-Hastings} (IMH) \cite{Tierney}, the \textit{preconditioned Crank-Nicolson Metropolis-Hastings algorithm} (pCN-MH) \cite{MCMC_fct}, \textit{elliptical slice sampling} \cite{EllipticalSS} (ESS) and \textit{Gibbsian polar slice sampling} (GPSS)\footnote{As GPSS does not have a user-specified proposal distribution, it is not immediately obvious how it can make use of information about the target. We will clarify this later on.} \cite{GPSS}.

Related to this, \textit{adaptive MCMC} \cite{ExAda} provides a general framework to iteratively acquire knowledge about the target distribution during sampling, and incorporate this information into the transition mechanism. At any iteration the transition mechanism may change depending on the whole ``past'', for instance transitioning from $X_{n-1}$ to $X_n$ can be performed by a kernel $P_n$, where $P_n$ may depend on $X_0,\dots,X_{n-1}$. Numerical experiments indicate that this may improve the sample quality.

In this paper, we consider \textit{affine transformation tuning} (ATT), a general principle of performing MCMC transitions via a latent space that is connected to the sample space by a bijective affine transformation. In particular, we discuss how to construct adaptive MCMC methods based on ATT, where the adaptivity is used to iteratively learn a ``good'' affine transformation. When learning the affine transformation, say $\alpha(y) = W y + c$, the goal is to bring the pushforward of the target distribution $\nu$ under the inverse transformation $\alpha^{-1}(x) = W^{-1}(x - c)$ as close as possible to \textit{isotropic position} (i.e.~mean zero and identity covariance, or a suitable analogue in very heavy-tailed settings). We can then exploit the approximate isotropy of the transformed target distribution to enhance the performance of informed MCMC samplers.

Our framework is very general and can be applied to a large variety of informed MCMC methods, such as IMH, pCN-MH, ESS or GPSS. Nonetheless, we focus on applying ATT to GPSS. This is motivated by the fact that GPSS is able to effectively acquire information about the target distribution at early stages of a run (unlike IMH, which typically performs poorly until it has acquired a substantial amount of information about the target), and known to work well for various types of targets in isotropic position, including both light-tailed and heavy-tailed ones (unlike pCN-MH and ESS, which are designed for settings with Gaussian priors).
As the application of an affine transformation $\alpha$ to the target distribution $\nu$ does not change the tail behavior (e.g.~Gaussian tails will stay Gaussian, heavy tails will stay heavy), this makes GPSS uniquely well-suited for ATT. We refer to Appendix \ref{SubApp:exp_methods} for an explanation of GPSS.

The empirical evidence for the good performance of GPSS in isotropic settings is complemented by theoretical results \cite{PSS_paper,kPSS_paper} regarding the underlying ``ideal'' method \textit{polar slice sampling} (PSS) \cite{PolarSS}, in conjunction with a theoretical result \cite{GSSS_contraction} regarding \textit{geodesic slice sampling on the sphere} (GSSS) \cite{SphericalSS}. Essentially, \citet{GSSS_contraction} demonstrated that the use of GSSS within the GPSS transition does not prevent GPSS from achieving the dimension-independent performance of PSS established in the two former works.

The remainder of our paper is structured as follows.
Section \ref{Sec:ATT} introduces ATT as a general principle for MCMC transitions. In Section \ref{Sec:par_sched}, we propose parallel ATT (PATT), a flexible framework for setting up samplers that run multiple parallel ATT chains and let them share information to more quickly learn a suitable transformation. In Section \ref{Sec:theo_just}, we present a theoretical result that justifies the use of PATT. The results of several numerical experiments with PATT are presented in Section \ref{Sec:experiments}. We conclude the paper's main body with some final remarks in Section \ref{Sec:conclusion}.

In addition, we offer complementary information on ATT and related topics in the supplementary material:
Appendices \ref{App:adj_types}, \ref{App:param_choices} and \ref{App:schedule} provide detailed considerations and guidelines regarding the choices of the adjustment types, transformation parameters and update schedules defined in Sections \ref{Sec:ATT} and \ref{Sec:par_sched}. Theses appendices may serve as a ``cookbook'' for implementing and/or applying ATT or PATT.
In place of a related work section, we give a detailed overview of connections between our method and various others in Appendix \ref{App:connections}.
In Appendix \ref{App:equiv_trad_ada}, we prove that in certain cases a simple adaptive MCMC implementation of ATT is equivalent to other, more traditional adaptive MCMC methods, in that the respective transition kernels coincide.
The proof of our theoretical result from Section \ref{Sec:theo_just} is provided in Appendix \ref{App:theo_just_proof}.
In Appendix \ref{App:exp_details} we elaborate on the models and hyperparameter choices for the experiments behind the results presented in Section \ref{Sec:experiments}, and provide some further results.
Appendix \ref{App:ablation} presents a series of ablation studies demonstrating that each non-essential component of PATT can, in principle, substantially improve its performance. Appendix \ref{App:plots} offers more plots illustrating the main experiments as well as the ablation studies.

\section{Affine Transformation Tuning} \label{Sec:ATT}

\subsection{Formal Description} \label{SubSec:formal_desc}

For the sake of a systematic formalization, let us name two copies of $\R^d$, the \textit{sample space} $\X := \R^d$ and the \textit{latent space} $\Y := \R^d$. Suppose we have an affine transformation
\begin{equation*}
	\alpha: \Y \ra \X, \; y \mapsto W y + c ,
\end{equation*}
with a shift $c \in \R^d$ and an element $W \in \GL_d(\R)$ of the \textit{general linear group} (over $\R$ in dimension $d$),
\begin{equation*}
	\GL_d(\R)
	:= \{W \in \R^{d\times d} \mid \det(W) \neq 0\} .
\end{equation*}
By the invertibility of $W$, this transformation has the inverse
\begin{equation}
	\alpha^{-1}: \X \ra \Y, \; x \mapsto W^{-1} (x - c) .
	\label{Eq:alpha_inv}
\end{equation}
The transformation and its inverse therefore establish a one-to-one correspondence between points in $\X$ and $\Y$.

Suppose now that we are given a target distribution $\nu$ on $(\X, \B(\X))$ via an unnormalized density ${\varrho: \X \ra \coint{0}{\infty}}$ (as in \eqref{Eq:def_nu}), then the corresponding transformed distribution on $(\Y,\B(\Y))$ is the pushforward measure $\nu_{\alpha} := (\alpha^{-1})_{\#} \nu$. This results in the following intuitive relation between $\nu$ and $\nu_{\alpha}$: Let $X \sim \nu$, $Y \sim \nu_{\alpha}$, then
\begin{align*}
	&\P(\alpha(Y) \in A)
	= \P(Y \in \alpha^{-1}(A)) \\
	&= \nu_{\alpha}(\alpha^{-1}(A))
	= \nu(A)
	= \P(X \in A) ,
	\quad A \in \B(\X) ,
\end{align*}
meaning $\alpha(Y)$ and $X$ have the same distribution.

This simple construction is the basis of what we call an \textit{affine transformation tuning} (ATT) \textit{transition}. An ATT transition works by moving to the latent space via the inverse transformation, taking a step in the latent space and then moving back to the sample space. To express this formally, denote by $P_{\alpha}: \Y \times \B(\Y) \ra \ccint{0}{1}$ the transition kernel of an MCMC method on the latent space that leaves the transformed target distribution $\nu_{\alpha}$ invariant. In the following, the MCMC method taking this role will be referred to as the \textit{base sampler} (of the resulting ATT sampler). Now an ATT transition for target $\nu$, based on transformation $\alpha$ and auxiliary kernel $P_{\alpha}$, from the current state ${x_{n-1} \in \X}$ to a new state ${x_n \in \X}$, is implemented by Algorithm \ref{Alg:ATT_transition}.

\begin{algorithm}[tb]
	\caption{ATT transition}
	\label{Alg:ATT_transition}
	\textbf{Input:} transformation $\alpha: \Y \ra \X$, inverse transformation $\alpha^{-1}: \X \ra \Y$, transition kernel $P_{\alpha}$ on $(\Y,\B(\Y))$ targeting transformed target $\nu_{\alpha}$, current state $x_{n-1} \in \X$ \\
	\textbf{Output:} new state $x_n \in \X$
	\begin{algorithmic}[1]
		\STATE Map current state to latent space: $y_{n-1} := \alpha^{-1}(x_{n-1})$.
		\STATE Take a step in latent space according to $P_{\alpha}$, i.e.~draw a new state $Y_n \sim P_{\alpha}(y_{n-1}, \cdot)$, call the result $y_n \in \Y$.
		\STATE Map new state to sample space: $x_n := \alpha(y_n)$.
	\end{algorithmic}
\end{algorithm}

The transition kernel $P$ corresponding to the transition $x_{n-1} \ra x_n$ performed by Algorithm \ref{Alg:ATT_transition} is given by
\begin{align}
	\begin{split}
		P(x, A)
		&= \int_{\Y} \ind_A(\alpha(y)) P_{\alpha}(\alpha^{-1}(x), \d y) \\
		&= P_{\alpha}(\alpha^{-1}(x), \alpha^{-1}(A))
	\end{split}
	\label{Eq:transition_kernel_formula}
\end{align}
for any $x \in \X$, $A \in \B(\X)$.

A key observation regarding ATT is that the map $\alpha$ has constant Jacobian $J_{\alpha}(y) = W$, hence
the distribution $\nu_{\alpha}$ has unnormalized density
\begin{equation*}
	y
	\mapsto \; \abs{\!\det(J_{\alpha}(y))} \, \varrho(\alpha(y))
	= \abs{\!\det(W)} \, \varrho(\alpha(y)) ,
\end{equation*}
so that
\begin{equation*}
	\varrho_{\alpha}: \Y \ra \coint{0}{\infty}, \; y \mapsto \varrho(\alpha(y))
\end{equation*}
is also an unnormalized density of $\nu_{\alpha}$. In particular, an evaluation of the target density $\varrho_{\alpha}$ on the latent space is only as computationally costly as one of the untransformed target density $\varrho$ plus one of the transformation $\alpha$.

What are the roles of the parameters $c$ and $W$ of the transformation $\alpha(y) = W y + c$? Clearly, the parameter $c$ controls the center of the distribution. Specifically, if $\nu$ is centered around $c$, then the point to which this center corresponds in the latent space is ${\alpha^{-1}(c) = W^{-1}(c - c) = 0}$ (for arbitrary $W$), so that $\nu_{\alpha}$ is centered around the origin. The parameter $W$ on the other hand affects the covariance structure of the target distribution. Specifically, if $X \sim \nu$ has covariance matrix $\cov(X)$ with Cholesky decomposition ${\cov(X) = L L^{\top}}$, then $Y := \alpha^{-1}(X) \sim \nu_{\alpha}$ has covariance matrix
\begin{equation*}
	\cov(Y) = W^{-1} \cov(X) (W^{-1})^{\top} = (W^{-1} L) (W^{-1} L)^{\top} .
\end{equation*}
Thus, $\cov(Y)$ is the identity matrix whenever $W^{-1} L$ is orthogonal, the simplest case being $W = L$.

\subsection{Connections to Traditional Adaptive MCMC}

A straightforward way to deploy ATT in practice is to implement it as an adaptive MCMC method that uses the adaptivity to learn a suitable transformation $\alpha$ during the run, and performs each iteration by calling Algorithm \ref{Alg:ATT_transition} with the latest version of $\alpha$. Intuitively, the resulting method adaptively learns how best to transform the target distribution in order to simplify it in the eyes of its base sampler.
This is in contrast to ``traditional'' adaptive MCMC approaches, where the adaptivity is typically used to adjust the parameters of an underlying sampler's proposal distribution, thereby adjusting this sampler to better suit the target distribution.

Nevertheless, there are cases in which the transition kernels of an adaptive ATT chain  coincide with those of a corresponding traditional adaptive MCMC chain, where the latter uses a parametrized version of the former's base sampler as its underlying sampler. In Appendix \ref{App:equiv_trad_ada}, we introduce \textit{ATT-friendliness} as a property that precisely encapsulates which methods produce such an equivalence when used as the underlying/base sampler. We then consider several samplers in detail, namely RWM, IMH (with Gaussian proposal) and ESS, and verify for each of them that it is ATT-friendly.

\subsection{Adjustment Types} \label{SubSec:adj_types}

Motivated by considerations regarding the computational overhead introduced by ATT, we examine ways to restrict our previous specification of $\alpha$ as an a priori arbitrary bijective affine transformation. We frame these restrictions by the type of adjustments to the target distribution that they permit. To this end, we introduce the following terms.

\textit{Centering} is performed by transformations that shift the origin, which -- in absence of other adjustments -- are of the form $\alpha(y) = y + c$ for some $c \in \R^d$. \textit{Variance adjustments} are made by transformations that alter the target's coordinate variances without affecting the correlations between variables, which -- in absence of centering -- are transformations of the form $\alpha(y) = \diag(v) y$ for some $v \in \R^d$ with $v_i \neq 0 \; \forall i$. \textit{Covariance adjustments} are the unrestricted analogue to variance adjustments, meaning that -- again in absence of centering -- they encompass all transformations $\alpha(y) = W y$ with $W \in \GL_d(\R)$. Of course both variance and covariance adjustments can also be combined with centering, leading to transformations of the form $\alpha(y) = W y + c$, with $c \in \R^d$ and either $W = \diag(v)$ or $W \in \GL_d(\R)$. Note that the latter combination recovers the original generality, in that it allows $\alpha$ to be an arbitrary bijective affine transformation.

For some considerations on how to decide on adjustment types for a given problem, we refer to Appendix \ref{App:adj_types}. The question of how to choose the transformation's parameters for a given adjustment type will also be considered later.

\section{Parallelization and Update Schedules} \label{Sec:par_sched}

\subsection{Parallelized ATT} \label{SubSec:PATT}

Since ATT uses samples to approximate global features of the target distribution, we can utilize non-trivial parallelization to improve its performance by running $p \in \N$ parallel chains $(X_i^{(j)})_{i \in \N_0}$, $j=1,\dots,p$ that all rely on the same transformation $\alpha$. Whenever the parameters $c$ and $W$ are to be updated, say in iteration $n$, their new values can be computed based on all $n \cdot p$ samples $(X_i^{(j)})_{0 \leq i < n, 1 \leq j \leq p}$ generated up to this point. For $p \gg 1$, this should lead to a substantially faster (in terms of iterations per chain) convergence of the transformation parameters to their asymptotic values than with $p$ independent chains. 

However, if not used with caution, this parallelization approach can substantially reduce the method's iteration speed\footnote{By iteration speed we mean the number of iterations a method completes per time unit, e.g.~per second of CPU time.}: For example, an iteration of our default base sampler GPSS involves a stepping-out procedure and two shrinkage procedures (see Appendix \ref{SubApp:exp_methods} or \citet{GPSS} for details), each involving a random number of target density evaluations. Moreover, this number depends not only on the target distribution and the choice of GPSS's hyperparameter, but also (and often predominantly) on the random threshold drawn in each iteration to determine a slice. The computing time required for an iteration of GPSS is therefore also random and may vary substantially. Running the parallelized ATT approach with parameter updates in every iteration would require a synchronization of the $p$ chains for each update, meaning that all chains have to first terminate the current iteration before the parameters can be updated. Because each chain requires a random and largely varying amount of computing time to complete the iteration, the slowest chain will generally take far longer than an average chain to complete the iteration. In other words, the parallel scheme would spend a substantial amount of computing time waiting for the slowest chain to complete the iteration.

\subsection{Update Schedules}

A simple remedy is to update the transformation parameters not after every single iteration, but only after a larger number of iterations has been completed. The $p$ chains still need to be synchronized before every parameter update. But, if the number of iterations between updates is large enough (compared to both $p$ and the variance of the runtime of a base sampler iteration in the given setting), the variance of the accumulated runtimes of individual chains will be quite small compared to the total runtime, so that comparatively little time is spent idling. To formalize the approach of only occasionally updating the parameters, we introduce the following notions.

\begin{definition}
	An \textit{update schedule} is a strictly increasing sequence of positive integers $S := (s_k)_{k \in I}$, where $s_k \in \N$, $s_1 \geq 2$ and either $I = \N$ or $I = \{1,2,\dots,n_I\} \subset \N$ for some $n_I \in \N$. For a given update schedule $S = (s_k)_{k \in I}$, we call $n \in \N$ an \textit{update time} if and only if there exists $k \in I$ such that $s_k = n$. We write $n \in S$ if $n$ is an update time and $n \not\in S$ otherwise.
\end{definition}

The idea is now to define a suitable update schedule and then run the parallelized ATT approach, while updating the transformation parameters only in those iterations that are update times. This general approach will in the following be referred to as PATT (for \textit{parallel ATT}). In order to enable a concrete description of PATT, we require the following auxiliary definitions.

\begin{definition} \label{Def:alpha_family}
	We define $\A$ to be the family of bijective affine transformations on $\R^d$, i.e.~of all ${\alpha: \R^d \ra \R^d}$, ${y \mapsto W y + c}$ with $W \in \GL_d(\R), c \in \R^d$.
\end{definition}

\begin{definition}
	We define an \textit{ergodic base sampler} for a given target distribution $\nu$ to be a family of transition kernels $(P_{\alpha})_{\alpha \in \A}$ on $(\R^d, \B(\R^d))$, where for each $\alpha \in \A$, the kernel $P_{\alpha}$ is ergodic towards $\nu_{\alpha}$, meaning
	\begin{equation}
		\lim_{n \ra \infty} \TV(P_{\alpha}^n(y,\cdot), \nu_{\alpha}) = 0 \qquad \forall y \in \R^d .
		\label{Eq:tv_ergodicity}
	\end{equation}
\end{definition}

Note that \eqref{Eq:tv_ergodicity} is a relatively weak assumption on $P_{\alpha}$ and that there are a number of easily verified constraints on $\nu$ and $P_{\alpha}$ that imply it, see for example \citet{Tierney}, Section 3.

\begin{algorithm}[tb]
	\caption{PATT}
	\label{Alg:PATT}
	\textbf{Input:} ergodic base sampler $(P_{\alpha})_{\alpha \in \A}$ for the target distribution $\nu$, number of parallel chains $p \in \N$, initial states $x_0^{(1)},\dots,x_0^{(p)} \in \X$, update schedule $(s_k)_{k \in I}$ \\
	\textbf{Output:} samples $(x_i^{(j)})_{i \geq 0, 1 \leq j \leq p}$
	\begin{algorithmic}[1]
		\STATE Set $s_0 := 1$ and $\alpha_1: \Y \ra \X, \; y \mapsto y$.
		\FOR{$k = 1,2,\dots \in I$}
		\FOR{all $p$ chains in parallel, indexed by $j$}
		\FOR{$i = s_{k-1},\dots,s_k-1$}
		\STATE $x_i^{(j)} := \hyperref[Alg:ATT_transition]{\text{ATT\_transition}}(\alpha_k, \alpha_k^{-1}, P_{\alpha_k}, x_{i-1}^{(j)})$ \\
		\ENDFOR
		\ENDFOR
		\STATE Choose a new transformation $\alpha_{k+1} \in \A$ based on the available samples $(x_i^{(j)})_{0 \leq i < s_k, 1 \leq j \leq p}$.
		\ENDFOR
		\IF{$\abs{I} < \infty$}
		\STATE Set $n := \abs{I}$ and $\alpha := \alpha_{n+1}$.
		\FOR{all $p$ chains in parallel, indexed by $j$}
		\FOR{$i = s_n,s_n+1,\dots$}
		\STATE $x_i^{(j)} := \hyperref[Alg:ATT_transition]{\text{ATT\_transition}}(\alpha, \alpha^{-1}, P_{\alpha}, x_{i-1}^{(j)})$ \\
		\ENDFOR
		\ENDFOR
		\ENDIF
	\end{algorithmic}
\end{algorithm}

Based on these definitions, we can formulate PATT as Algorithm \ref{Alg:PATT}. It is not immediately clear how exactly one should choose the parameters of the transformations $\alpha_k$ (line 8 of Algorithm \ref{Alg:PATT}), even if one has already decided on which adjustment types to use. We therefore provide detailed suggestions on how to specify and efficiently update these parameters in Appendix \ref{App:param_choices}. It is also not obvious how to choose a good update schedule. On the one hand, a large delay between updates will reduce the overall waiting time. On the other hand, more frequent updates will reduce the number of iterations to reach a specific sample quality, so there is a trade-off between these two choices. How the transformation parameters $c$, $W$ themselves are chosen (e.g.~as sample mean and sample covariance, we refer to this as \textit{parameter choice} in the following) should also be considered when designing an update schedule. Specifically, returning to the notation of Section \ref{Sec:ATT} for the moment, if the parameter choice does not permit incorporating a new sample $x_n \in \R^d$ in a complexity independent of the number $n$ of previously incorporated samples $x_0,\dots,x_{n-1}$, then update schedules not designed to take this into account may result in computational costs that continue to grow in an unlimited fashion with increasing number of iterations. An example is the coordinate-wise sample median as a choice for updating the center parameter $c$, see Appendix \ref{SubApp:sample_median} for details.
In Appendix \ref{App:schedule} we provide some guidelines for designing setting-specific update schedules and propose default update schedules depending on the types of adjustments used, the sample space dimension $d$ and the number of parallel chains $p$. For an exemplary analysis of how the use of interacting parallel ATT chains (rather than independent parallel ones) and update schedules (rather than updates after every iteration) affects sample quality and run time of the resulting sampler, we refer to the ablation study in Appendix \ref{SubApp:par_US}.

\subsection{Initialization Burn-In} \label{SubSec:init_burn_in}

The main purpose of PATT is to allow samplers to work better in the long run. In particular, it is not very helpful (and can even be detrimental) in the early stages of sampling, where each chain needs to find a region of high probability mass, starting from a potentially poorly chosen initial state. Therefore, unless the user is very confident in the initialization of the parallel chains, they should first let the base sampler carry them forward for a reasonable number $n_{\text{burn}}$ of iterations. After this \textit{initialization burn-in} phase, the user can apply PATT by using the final state of each burn-in chain as the initial state of the corresponding PATT chain. To avoid further complicating our description of PATT, we consider this early stage a process entirely separate from the remaining sampling.

Although asymptotically irrelevant, the use of such an initialization burn-in period can have a substantial positive impact in the short- and mid-term of the sampling procedure, see also the ablation study on this effect in Appendix \ref{SubApp:IBP}.

\section{Theoretical Justification} \label{Sec:theo_just}

In this section, we examine the convergence of PATT with finite update schedules $S = (s_k)_{k \in I}$ where $\abs{I} < \infty$. Our motivation for this is twofold. First, we will demonstrate below that a finite update schedule necessitates only a weak assumption on the underlying base sampler to guarantee convergence of the corresponding PATT sampler. Second, after a sufficiently large number of iterations, the transformation parameters will be close to their optimal values. It may then be more economical to stop updating the parameters to rid oneself of the computational cost incurred from the update itself (which can be quite substantial, particularly in the case of covariance adjustments in high dimensions).

In order to properly express this section's result, we require some new notation encoding a PATT sampler's choice of the transformation parameters. Suppose in the following that the dimension $d$ of the sample space and the number of parallel chains $p$ are both clear from context.

\begin{definition} \label{Def:adj_schemes}
	We define a \textit{centering scheme} to be a family of functions $(\mathbf{c}_n)_{n \in \N}$ where $\mathbf{c}_n$ maps $\R^{(n \cdot p) \times d} \ra \R^d$, and a \textit{(co)variance adjustment scheme} to be a family of functions $(\mathbf{W}_n)_{n \in \N}$ where $\mathbf{W}_n$ maps $\R^{(n \cdot p) \times d} \ra \GL_d(\R)$.
\end{definition}

Intuitively, $\mathbf{c}_n$ and $\mathbf{W}_n$ take as input all the samples the PATT sampler has generated up to iteration $n$ and output the new transformation parameters $c \in \R^d$ and $W \in \GL_d(\R)$. Note that the option not to use centering is encoded in the above definition by $\mathbf{c}_n \equiv 0$ for all $n \in \N$ and the option not to use (co)variance adjustments is encoded by $\mathbf{W}_n \equiv I_d$ for all $n \in \N$. 

We now state a general result on the ergodicity of PATT for finite update schedules. As the use of an initilization burn-in period is clearly irrelevant to this property, we ignore it in our analysis.

\begin{theorem} \label{Thm:finite_US_ergodicity}
	Let $S = (s_k)_{k \in I}$ with $I = \{1,\dots,n_I\}$ for some $n_I \in \N$ be an arbitrary finite update schedule. Then for any collection of
	\begin{itemize}
		\item a target distribution $\nu$ on $(\R^d, \B(\R^d))$ given by an unnormalized density $\varrho: \R^d \ra \coint{0}{\infty}$ as in \eqref{Eq:def_nu},
		\item a number $p \in \N$ of parallel chains,
		\item a centering scheme $(\mathbf{c}_n)_{n \in \N}$,
		\item a (co)variance adjustment scheme $(\mathbf{W}_n)_{n \in \N}$, and
		\item an ergodic base sampler $(P_{\alpha})_{\alpha \in \A}$ for $\nu$,
	\end{itemize}
	the resulting PATT sampler is ergodic in the sense that the samples $(X_i^{(j)})_{i \in \N_0, 1 \leq j \leq p}$ generated by it satisfy
	\begin{equation*}
		\frac{1}{n p} \sum_{i=0}^{n-1} \sum_{j=1}^p f(X_i^{(j)})
		\asconv \int_{\R^d} f(x) \nu({\normalfont \d} x)
	\end{equation*}
	as $n \ra \infty$, for any $\nu$-integrable function $f: \R^d \ra \R$ and any choice of the initial states $X_0^{(1)}, \dots, X_0^{(p)}$.
\end{theorem}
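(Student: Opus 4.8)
The key structural observation is that a finite update schedule partitions the time axis into finitely many blocks: on block $k$ (iterations $s_{k-1} \le i < s_k$), all $p$ chains evolve according to the fixed transition kernel $P_{\alpha_k}$, and after the last update time $s_{n_I}$, all chains evolve forever according to a single fixed kernel $P_{\alpha}$ with $\alpha = \alpha_{n_I+1} \in \A$. The plan is to reduce the ergodic average over all iterations and all chains to an ordinary law of large numbers for a Markov chain with a fixed transition kernel, by (a) discarding the finitely many ``burn-in'' blocks before $s_{n_I}$ as asymptotically negligible, and (b) handling the randomness of $\alpha$ (which depends on all samples generated up to time $s_{n_I}$) by conditioning.

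First I would set $n_0 := s_{n_I}$ and split the double sum as
\begin{equation*}
	\frac{1}{np}\sum_{i=0}^{n-1}\sum_{j=1}^p f(X_i^{(j)})
	= \frac{1}{np}\sum_{i=0}^{n_0-1}\sum_{j=1}^p f(X_i^{(j)})
	+ \frac{1}{np}\sum_{i=n_0}^{n-1}\sum_{j=1}^p f(X_i^{(j)}) .
\end{equation*}
The first term is a fixed random variable divided by $np$, hence converges to $0$ almost surely as $n \to \infty$; strictly I should note $f(X_i^{(j)})$ is a.s.\ finite since $f$ is $\nu$-integrable and each $X_i^{(j)}$ has a distribution absolutely continuous in the relevant sense (or simply observe finiteness holds a.s.\ because the number of terms is finite). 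So it suffices to show the second term converges a.s.\ to $\int f \,\d\nu$. Rescaling, $\frac{1}{np}\sum_{i=n_0}^{n-1} = \frac{n-n_0}{n}\cdot\frac{1}{(n-n_0)p}\sum_{i=n_0}^{n-1}$ and $\frac{n-n_0}{n}\to 1$, so it is equivalent to prove the a.s.\ convergence of $\frac{1}{(n-n_0)p}\sum_{i=n_0}^{n-1}\sum_{j=1}^p f(X_i^{(j)})$.

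Now condition on the $\sigma$-algebra $\G := \sigma\big((X_i^{(j)})_{0 \le i \le n_0,\, 1 \le j \le p}\big)$ generated by everything up through time $n_0$; this $\sigma$-algebra determines $\alpha$ and the states $X_{n_0}^{(1)},\dots,X_{n_0}^{(p)}$. Conditionally on $\G$, the $p$ processes $(X_i^{(j)})_{i \ge n_0}$, $j = 1,\dots,p$, are independent Markov chains, each with the fixed transition kernel $P_{\alpha}$ (restricted to the event that $\G$ takes a given value), started from the deterministic point $X_{n_0}^{(j)}$. By \eqref{Eq:transition_kernel_formula}, $P_{\alpha}$ is the ATT kernel and is connected to the base sampler kernel $P_{\alpha}$ on the latent space via $\alpha$; since $(P_{\alpha})_{\alpha \in \A}$ is an ergodic base sampler, $P_{\alpha}$ on the latent space is ergodic towards $\nu_{\alpha}$ in the sense of \eqref{Eq:tv_ergodicity}, and transporting through the bijection $\alpha$ shows the ATT kernel is ergodic towards $\nu$: $\TV(P^n(x,\cdot),\nu) = \TV(P_\alpha^n(\alpha^{-1}(x),\cdot),\nu_\alpha) \to 0$ for every $x$. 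A Markov chain with a $\nu$-invariant transition kernel that is ergodic in total variation from every starting point satisfies the strong law of large numbers $\frac1m\sum_{i=0}^{m-1} f(X_i) \to \int f\,\d\nu$ a.s.\ for every $\nu$-integrable $f$ (this is the standard LLN for Harris-type / $\nu$-invariant ergodic chains; I would cite the relevant statement, e.g.\ in \citet{Tierney} or a standard Markov chain reference, and note that \eqref{Eq:def_nu} guarantees $\nu$ is a genuine probability measure so that the invariant distribution is unique). Applying this conditionally for each of the $p$ chains, then averaging the $p$ limits (all equal to $\int f\,\d\nu$), gives $\frac{1}{(n-n_0)p}\sum_{i=n_0}^{n-1}\sum_{j=1}^p f(X_i^{(j)}) \to \int f\,\d\nu$ almost surely on each atom of $\G$, hence $\P$-a.s.; re-indexing the sum to start at $0$ instead of $n_0$ only shifts finitely many terms and does not affect the limit. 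Combining with the vanishing of the first term finishes the proof.

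The main obstacle I anticipate is the technical bookkeeping around the conditioning argument: making precise that ``conditionally on $\G$ the chains are independent Markov chains with the fixed kernel $P_\alpha$'' when $\alpha$ itself is $\G$-measurable, and invoking the single-chain LLN uniformly over the (uncountably many) possible realizations of $\alpha$ and of the starting states. This is handled cleanly by disintegration: the single-chain LLN holds for \emph{every} deterministic starting point and \emph{every} fixed $\alpha \in \A$, so the exceptional null set can be taken inside the conditional law and then integrated out; alternatively one can phrase it via a regular conditional distribution and Fubini. A secondary point worth stating carefully is that the finiteness of $I$ is essential here — it is exactly what makes $n_0 = s_{n_I}$ finite and the post-$n_0$ kernel a single fixed kernel — and I would remark that without it one genuinely needs stronger (e.g.\ simultaneous uniform ergodicity / containment-type) conditions of the adaptive MCMC literature.
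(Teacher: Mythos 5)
Your proposal is correct and follows essentially the same route as the paper's proof: condition on everything up to the final update time $\varsigma = s_{n_I}$, note that each remainder chain is a homogeneous Markov chain with a fixed ATT kernel, invoke Tierney's strong law for TV-ergodic chains, and absorb the finitely many pre-$\varsigma$ terms and the factor $(n-\varsigma)/n$ asymptotically. The only cosmetic difference is that you transport ergodicity to the sample space and apply the SLLN there, whereas the paper applies it in the latent space and transfers the limit via the change-of-variables identity $\nu(f)=\nu_{\alpha}(f\circ\alpha)$; these are equivalent.
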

\begin{proof}
	See Appendix \ref{App:theo_just_proof}.
\end{proof}

Of course also theoretical results for PATT schemes with infinite update schedules $(s_k)_{k \in \N}$ are desirable. One way to establish such results would be to utilize the \textit{AirMCMC} framework \cite{AirMCMC}: It applies to those adaptive MCMC methods that are \textbf{a}dapted \textbf{i}ncreasingly \textbf{r}arely (rather than changing the transition kernel in every iteration), which for PATT corresponds to using an update schedule $(s_k)_{k \in \N}$ for which the sequence $(s_{k+1} - s_k)_{k \in \N}$ is strictly increasing. The advantage of the AirMCMC framework is that it imposes signficantly weaker conditions than the general adaptive MCMC framework to arrive at the same theoretical guarantees. However, the conditions imposed in theorems on AirMCMC are still stronger than the corresponding ones for homogeneous MCMC. We therefore expect any theoretical results on PATT based on AirMCMC to rely on much more restrictive assumptions than the above theorem.

\section{Numerical Experiments} \label{Sec:experiments}

In this section we give a brief overview of the results of a number of numerical experiments in which we let PATT samplers compete with several other methods. Our experiments are in large part inspired by those of \citet{GenEllSS} and \citet{NUTS}.

\subsection{Methodology}

The main purpose of these experiments is to showcase the potential of PATT-GPSS (by which we denote any variant of PATT with base sampler GPSS) as a well-performing, user-friendly black-box sampler\footnote{By the term \textit{black-box sampler} we refer to any sampling method that does not require its target distribution to have a particular structure (e.g.~being a posterior resulting from a Gaussian prior, as in \citet{EllipticalSS}) and in particular does not explicitly rely on the probabilistic model underlying the target.}. To this end, we always deployed PATT-GPSS with the default update schedules defined in Appendix \ref{App:schedule}, thereby eliminating the need to devise setting-specific schedules. Moreover, all of our experiments used PATT-GPSS with centering (via sample means) and covariance adjustments (via sample covariances), showing these adjustment types to be a versatile default choice.

Although PATT-GPSS appears to work well for all targets that are not too misshapen, it should be emphasized that, under certain conditions, PATT with other base samplers can achieve the same sample quality at even lower computational cost. To demonstrate this, we also ran PATT-ESS, meaning PATT with general-purpose ESS (GP-ESS, cf.~Example \ref{Ex:GP-ESS}) as its base sampler, in our experiments. Like PATT-GPSS, we always ran PATT-ESS with centering, covariance adjustments and the default update schedules defined in Appendix \ref{App:schedule}.

As competitors for the PATT samplers, we found it natural to choose methods that were themselves proposed as user-friendly black-box samplers. From the class of non-adaptive, gradient-free MCMC methods, we chose \textit{hit-and-run uniform slice sampling} (HRUSS) (\citet{MacKayBook}, Section 29.7). Among the traditional adaptive MCMC methods, we chose the \textit{adaptive random walk Metropolis} algorithm (AdaRWM) \cite{HaarioAdaRWM}, in the formulation proposed by \citet{ExAda}. We implemented both HRUSS and AdaRWM in a \textit{naively parallelized} fashion, allowing us to run the same number of parallel chains for each of them as for PATT, while not letting these chains interact with one another. To represent sophisticated parallelized schemes of similar complexity as PATT, we chose \textit{(two-group) generalized elliptical slice sampling} (GESS) \cite{GenEllSS}. Finally, among gradient-based MCMC methods, we chose the \textit{No-U-Turn Sampler} (NUTS) \cite{NUTS}, as implemented by the probabilistic modeling and inference software \textit{Stan}\footnote{\url{https://mc-stan.org/}}.

We emphasize that these four methods and the two PATT samplers are all essentially tuning-free, and that, accordingly, the amount of effort we spent on hand-tuning their parameters to each experiment was negligible. For a brief explanation of the inner workings of GPSS and each of the four competitor methods, we refer to Appendix \ref{SubApp:exp_methods}, and for an explanation of GP-ESS to Example \ref{Ex:GP-ESS}.


When numerically analyzing the sampling performance of PATT and its competitors, we were more interested in their respective long-term efficiency than in their behavior in the early stages. We therefore used a generous burn-in period, in that we considered only those samples generated in the latter half of iterations for this analysis. To assess the performance of each method, we computed two cost metrics, two sample quality metrics and two aggregates of the former and the latter.

To measure sampling costs, we on the one hand let each sampler count the number of \textit{target density evaluations} (TDE) it required in each iteration of each chain and used these figures to compute the average number of TDE it required per (single-chain) iteration (TDE/it). Taking TDE/it to represent a sampler's cost is common practice (see e.g.~\citet{GenEllSS,GPSS}) and well-motivated by the observation that, in real-world applications, the sampler's remaining overhead is typically negligible compared to the amount of resources it expends on TDE. On the other hand, since several of the methods we considered (including the PATT samplers) can have considerable overhead, we also measured the samplers' physical runtimes (for details see Appendix \ref{SubApp:runtimes}).

To assess sample quality, we relied on two different quantities. Firstly, we considered the \textit{mean step size} (MSS), that is, the Euclidean distance between two consecutive samples, averaged over all pairs of consecutive samples under consideration. Supposing that a sampler has already reached (empirically) stable behavior and is no longer moving through the sample space erratically, the MSS gives an indication as to how quickly it traverses the target's regions of high probability mass (with a higher MSS meaning larger steps and therefore suggesting a quicker exploration of these regions). It has previously been considered by \citet{GPSS} and is closely related to the more frequently used \textit{mean squared jump distance}, but more intuitive in its concrete values (since Euclidean distances are much closer to our real world experiences than squared Euclidean distances).

Secondly, we considered the \textit{mean integrated autocorrelation time} (mean IAT), which, for a given set of samples from a multi-chain sampler, is obtained by computing the IATs of each univariate marginal of the samples in each individual chain and then averaging these values over both the marginals and the chains. The IAT is commonly considered in the analysis of sampling methods because it is mathematically well-motivated to view the quotient of nominal sample size and IAT as the number of \textit{effective samples} (ES) (see e.g.~\citet{Gelman}, Section 11.5).

This means that the IAT can be viewed as the number of (single-chain) iterations required to produce one ES, which gives rise to natural performance metrics which weigh cost and sample quality against one another: The product of TDE/it and IAT represents the number of TDE required for one ES (TDE/ES). Similarly, the quotient of the number of samples generated per second (Samples/s) and the IAT gives the number of ES per second (ES/s). Although we also provide the non-aggregate metrics individually, we use TDE/ES and ES/s as the primary criteria to judge the overall performance of the samplers.

The source code for our numerical experiments is provided as a github repository\footnote{\url{https://github.com/microscopic-image-analysis/patt_mcmc/}}. In the interest of reproducibility, all our experiments are designed to be executable on a regular workstation (rather than requiring a cluster).

\subsection{Results}

Here we briefly summarize the key parameters and results of our experiments. For details on the models, the resulting target densities, the data and the samplers' settings (number of iterations etc.), we refer to Appendix \ref{App:exp_details}.

In our first experiment, we performed Bayesian inference on a model in which both prior and likelihood were given by multivariate exponential distributions, that is, distributions whose densities have elliptical level sets and tails like ${x \mapsto \exp(-\norm{x}/\sigma)}$ for some $\sigma > 0$. We set the sample space dimension to $d = 50$ and worked with synthetic data. The resulting sampling statistics are shown in Table \ref{Tab:multiv_exp_dists}. Moreover, trace plots of the first univariate marginal and histograms of the step sizes are presented in Appendix \ref{App:plots}, Figure \ref{Fig:multiv_exp_dists}.

\begin{table*}[t]
	\caption{Sampling statistics for the experiment on Bayesian inference with multivariate exponential distributions.}
	\label{Tab:multiv_exp_dists}
	\vskip 0.1in
	\begin{center}
		\begin{small}
			\begin{sc}
				\begin{tabular}{lrrrrrr}
					\toprule
					Sampler		& TDE/it & Samples/s	& Mean IAT	& MSS	& TDE/ES	& ES/s \\
					\midrule
					PATT-ESS	&   2.80 &  6466		&    7.19	& 4.90 	&    20.10	&  899.77 \\
					PATT-GPSS	&   6.28 &  5903		&    1.06	& 8.49 	&     6.68	& 5554.64 \\
					HRUSS		&   5.21 &  8465		& 3200.26	& 0.53 	& 16657.77	&    2.65 \\
					AdaRWM		&   1.00 & 27521		&  151.13	& 0.56 	&   151.13	&  182.10 \\
					GESS		&   5.02 &  3500		&   33.36	& 2.42 	&   167.62	&  104.92 \\
					Stan's NUTS	&   7.18 &   341		&   93.40	& 5.41 	&   670.33	&    3.66 \\
					\bottomrule
				\end{tabular}
			\end{sc}
		\end{small}
	\end{center}
	\vskip -0.2in
\end{table*}

Next we conducted a series of experiments on \textit{Bayesian logistic regression} (BLR) with mean-zero Gaussian prior for different data sets varying in the number of data points and features. In each of these experiments we added a constant feature to the data to enable an intercept. In some cases, we also performed \textit{feature engineering} (FE) by augmenting the data with the two-way interactions between the given features. The resulting sample space dimensions were ${d=25}$ (\textsc{Credit}), $d=31$ (\textsc{Breast}), $d=45$ (\textsc{Pima}, with FE) and $d=78$ (\textsc{Wine}, with FE).
For the sake of brevity, we only state the TDE/ES values for each experiment here, see Table \ref{Tab:BLR_summary}. The complete sampling statistics are presented in Appendix \ref{SubApp:BLR}, Tables \ref{Tab:BLR_German_credit}, \ref{Tab:BLR_breast_cancer}, \ref{Tab:BLR-FE_Pima_diabetes} and \ref{Tab:BLR-FE_wine_quality}. Moreover, Figures \ref{Fig:BLR_German_credit_covs}, \ref{Fig:BLR_breast_cancer_covs}, \ref{Fig:BLR-FE_Pima_diabetes_covs} and \ref{Fig:BLR-FE_wine_quality_covs} in Appendix \ref{App:plots} show the final covariance/scale matrices used by the adaptive samplers, thereby giving some insights into the intricate covariance structure the samplers had to learn to perform well.

\begin{table}[t]
	\caption{TDE/ES statistics for the experiments on BLR.}
	\label{Tab:BLR_summary}
	\begin{center}
		\begin{small}
			\begin{sc}
				\begin{tabular}{lrrrr}
					\toprule
					Sampler 	& Credit & Breast  & Pima    & Wine \\
					\midrule
					PATT-ESS	&    1.8 &    43.4 &     5.9 &    12.6 \\
					PATT-GPSS	&    7.6 &    71.7 &    19.1 &    28.3 \\
					HRUSS 		& 2220.9 & 16893.9 &  3331.4 & 25926.8 \\
					AdaRWM		&  191.1 &   151.0 &   420.2 &  3207.8 \\
					GESS		&  394.7 &   572.3 & 12170.4 &      -- \\
					Stan's NUTS &   50.7 &   175.3 &    29.8 &   227.5 \\
					\bottomrule
				\end{tabular}
			\end{sc}
		\end{small}
	\end{center}
	\vskip -0.1in
\end{table}

Finally, we conducted an experiment on Bayesian hyperparameter inference for Gaussian process regression of US census data in dimension $d=30$. Here the results are shown in Table \ref{Tab:hyperparam_inf}.

\begin{table*}[t]
	\caption{Sampling statistics for the experiment on Bayesian hyperparameter inference for Gaussian process regression of US census data.}
	\label{Tab:hyperparam_inf}
	\begin{center}
		\begin{small}
			\begin{sc}
				\begin{tabular}{lrrrrrr}
					\toprule
					Sampler		& TDE/it	& Samples/s & Mean IAT	& MSS	& TDE/ES & ES/s \\
					\midrule
					PATT-ESS	&  2.84		&  848.57	&   7.74	& 36.75	&  21.95 & 109.68 \\
					PATT-GPSS	&  7.95		&  636.44	&   6.37	& 39.61	&  50.61 &  99.91 \\
					HRUSS		&  5.25		& 1040.55	&  95.63	&  7.95 & 501.96 &  10.88 \\
					AdaRWM		&  1.00		& 5130.24   & 163.91	&  3.78 & 163.91 &  31.30 \\
					Stan's NUTS	& 13.21		&  117.77   &   1.14	& 70.54 &  15.30 & 101.49 \\
					\bottomrule
				\end{tabular}
			\end{sc}
		\end{small}
	\end{center}
	\vskip -0.1in
\end{table*}

Across all of these experiments, the two PATT samplers produced high quality samples at a very reasonable computational cost. NUTS typically managed to generate samples of slightly higher quality than PATT, but at a disproportionately higher cost per sample. Conversely, AdaRWM generally requires very little computational effort per sample, but its sample quality in our experiments lagged far behind those of PATT and NUTS. Neither HRUSS nor GESS were able to keep up with the better performing samplers in any of the experiments.

Consequently, in each of the first five experiments, the better performing PATT sampler beat all its competitors in terms of TDE/ES (often by an order of magnitude or more), and in all but one of these experiments (namely BLR for breast cancer data) also in terms of ES/s. In the experiment on Bayesian hyperparameter inference, NUTS outperformed the PATT samplers in the TDE/ES sense (PATT-ESS only slightly, PATT-GPSS more significantly), but the two approaches were evenly matched according to the alternative performance summary criterion ES/s.

\section{Concluding Remarks} \label{Sec:conclusion}

We propose \textit{parallel affine transformation tuning} (PATT), a general framework for setting up a computationally efficient, adaptively self-improving multi-chain sampling method based on an arbitrary \textit{base sampler}, i.e.~a single-chain, non-adaptive MCMC method. PATT provides the user with plenty of freedom in choosing adjustment types, update schedules and the base sampler, while at the same time offering reasonable default choices for all of these parameters. Moreover, with the default update schedules we propose, PATT is expected to scale well with the hardware it runs on, adapting suitably to every architecture from low-end personal workstations to large clusters.

Particularly noteworthy is the synergy of PATT with \textit{Gibbsian polar slice sampling} (GPSS) \cite{GPSS} as the base sampler, in short denoted PATT-GPSS. As we demonstrated through a series of numerical experiments, PATT-GPSS is often able to produce samples of very high quality at a reasonable computational cost.

We wish to emphasize that this is not only the result of GPSS being well-suited to adaptively learning affine transformations like we do in PATT. Rather, PATT-GPSS manages to benefit from the extremely high performance ceiling GPSS has under optimal conditions, namely for target distributions that are rotationally invariant around the origin and unimodal along rays emanating from it. In fact, there is strong theoretical (cf.~our remarks on this in Section \ref{Sec:intro}) and numerical (cf.~the experiments of \citet{GPSS}) evidence that GPSS generally performs dimension-independently well in such rotationally-invariant settings. Importantly, through PATT, various challenging targets can be brought so close to rotational invariance that GPSS, in its role as the base sampler, starts to approach the remarkably good performance it is known to exhibit under optimal conditions (cf.~our numerical results in Section \ref{Sec:experiments} and Appendices \ref{App:exp_details} and \ref{App:ablation}).

The synergy of PATT with \textit{elliptical slice sampling} (ESS) \cite{EllipticalSS} should not be understated either. Although PATT-ESS would most likely work poorly for any target with (heavy) polynomial tails (according to the numerical results of \citet{GPSS}), our numerical experiments showed that it consistently performs very well across a wide range of exponentially-tailed targets. For targets whose tails are Gaussian or lighter, it even exhibited excellent performances, usually unmatched by any of its competitors (including PATT-GPSS, which in all those settings produced samples of higher quality than those of PATT-ESS, but at a disproportionately larger cost per sample).

Accordingly, if one wishes to apply PATT to a target distribution which one knows to have at most Gaussian tails, we would actually recommend using PATT-ESS. For all other cases, i.e.~the tails being heavier than Gaussian or their heavyness not being clear from the model underlying the target, we stand by our recommendation of PATT-GPSS as the default version of PATT.

We would also like to stress one important limitation of PATT (with either GPSS or ESS as its base sampler). Although its performance ceiling under optimal conditions -- say, for target densities whose level sets are elliptical -- does, by all indications, not deteriorate with increasing dimension, the same cannot be said of the speed at which it learns the affine transformation necessary to transform such a target into one that is rotationally invariant around the origin. While this deterioration in speed is quite moderate, it nevertheless means that in very high dimensions, PATT may require a large number of iterations to begin approaching its asymptotic performance. This issue is not unique to PATT. For example, slowly converging empirical covariances in high dimensions had already been observed for AdaRWM by \citet{ExAda}.


\section*{Acknowledgements}
We thank the anonymous referees for their suggestions. PS and MH gratefully acknowledge funding by the Carl Zeiss Foundation within the program ``CZS Stiftungsprofessuren'' and the project ``Interactive Inference''. Moreover, the authors are grateful for the support of the DFG within project 432680300 -- CRC 1456 subprojects A05 and B02.

\section*{Impact Statement}

This paper presents work whose goal is to advance the field of probabilistic inference. There are many potential societal consequences of our work, none which we feel must be specifically highlighted here.


\appendix
\section{Choosing the Adjustment Type(s)} \label{App:adj_types}

In this section we provide some general considerations on the potential advantages and downsides of the three adjustment types (cf.~Section \ref{SubSec:adj_types}) in an ATT or PATT sampler and try to give recommendations for when to use each of them.

Generally speaking, base samplers can be divided into those whose transition kernel is affected by centering with a fixed parameter $c \in \R^d \setminus \{\fatzero\}$, which we call \textit{location-sensitive}, and those that are unaffected by it, which we call \textit{location-invariant}. Some examples of location-sensitive base samplers are IMH, pCN-MH, ESS and GPSS, and some examples of location-invariant ones are RWM, MALA, HMC and all variants of uniform slice sampling.

Naturally, it only makes sense to use centering when one is working with a location-sensitive base sampler. On the other hand, as we will see in the next subsection, the computational overhead associated with centering is minimal, so that small gains in performance already justify its use. Furthermore, the improvement in sample quality due to centering is often large, particularly if the sample space is high-dimensional.
We therefore recommend to use centering per default when working with a location-sensitive base sampler. Of course even with such base samplers there are some scenarios where centering is ill-advised, for example those where the target distribution is known to be centered around the origin.

Variance adjustments can improve the performance, in terms of computational cost per iteration and/or sample quality, of virtually all base samplers. Such performance gains can be expected as soon as the target distribution has a substantial variation among its coordinate variances (i.e.~the diagonal entries of its covariance matrix) and should grow alongside any increases in this variation. Moreover, like with centering, the computational overhead associated with the adjustments is minimal. We therefore recommend using variance adjustments pretty much whenever covariance adjustments are ill-advised or infeasible and the variation among the target distribution's coordinate variances is not known to be small.

Covariance adjustments are a sort of double-edged sword. On the one hand, they constitute a powerful way of reducing the complexity of a given target distribution with non-trivial correlation structure, thereby improving the sample quality of virtually all base samplers. This even includes some which are unaffected by both other adjustments, such as random-scan uniform slice sampling (RSUSS) \cite{SSNeal}. On the other hand, covariance adjustments are far more computationally costly than centering and variance adjustments, particularly if the sample space is high-dimensional. We therefore find it difficult to give far-reaching recommendations regarding the use of these adjustments, though a good rule of thumb may be to always consider their use and only refrain from it if it proves computationally infeasible or mixing-wise unhelpful.

Some exemplary numerical evidence on the advantages of using the different adjustment types, both individually and in combination, can be found in the ablation study on adjustment types in Appendix \ref{SubApp:ablation_adj_types}.

We conclude this section with some general considerations on potential demerits to using the different adjustments. Specifically, we wish to emphasize that the use of a given adjustment type is likely to be detrimental to a PATT sampler's performance if the feature of the target density $\varrho$ that it modifies (e.g.~the extent to which it is centered around the origin) is already in its optimal configuration in the untransformed target (although we imagine such knowledge is seldom available in practice). This is intuitively clear: If $\varrho$ is already in its optimal configuration, an affine transformation $\varrho_{\alpha}$ of it cannot possibly be in a better one and is, for virtually all choices of the transformation $\alpha$, actually in a worse one.

Moreover, even if the feature modified by an adjustment type is entirely irrelevant to the base sampler, allowing PATT to modify that feature may still be somewhat detrimental to the performance of the PATT sampler, as it introduces instability into the sampling procedure (since, from the base sampler's point of view, the target changes at every update time).
On the other hand, we note that this detrimental effect is only relevant in the short- to mid-term. Asymptotically (assuming an infinite update schedule), the affine transformation learned by PATT will cease to alter the feature in question and thus no longer be detrimental to the sampler's performance.

Nonetheless, if one seeks to apply PATT in real world applications, one should take care to avoid adjusting both those features for which one knows the target distribution to already be in the optimal configuration and those that are irrelevant to the base sampler one wishes to use. Of course knowledge of whether a feature is in optimal configuration in the target density $\varrho$ presupposes a certain degree of insight into the probabilistic model underlying it, so the above considerations are not helpful in a true black-box setting.

\section{Choosing and Updating the Parameters} \label{App:param_choices}

Here we examine, for each of the three adjustment types, one or more ways to choose the transformation parameter associated with it, based on a given set of samples. Moreover, as promised earlier, we elaborate on efficient ways of updating the chosen parameters and the associated computational complexities.

As ATT can be retrieved from PATT as a special case, we work within the PATT framework in the following. That is, we examine the parameter choices for a sampler that runs $p \geq 1$ parallel ATT chains according to some update schedule $S = (s_k)_{k \in I}$, meaning that at each update time $s_k$ all the samples up to that time are pooled and new transformation parameters are computed based on them. For $i \geq 0$ and $1 \leq j \leq p$, we denote by $x_i^{(j)} \in \X = \R^d$ the state (in the sample space) of the $j$-th chain in the $i$-th iteration.

\subsection{Sample Mean}

For centering, one needs to choose the parameter ${c \in \R^d}$. As outlined in Section \ref{SubSec:formal_desc}, for the transformed distribution $\nu_{\alpha}$ to be centered around the origin, $c$ would need to be the center of the untransformed target $\nu$. Supposing that $\nu$ is not too misshapen, and light-tailed enough to have a well-defined mean (a very weak assumption), it is reasonable to view this mean as its center. Since the mean of an intractable target distribution is generally not known to the user, it needs to be approximated based on the available samples. The natural way to do this is by using the \textit{sample mean} $c := m_n$, which in iteration $n$ of our multi-chain scheme is given by
\begin{equation}
	m_n := \frac{1}{n p} \sum_{i=0}^{n-1} \sum_{j=1}^p x_i^{(j)} .
	\label{Eq:sam_mean}
\end{equation}
In order to update the sample mean in a computationally efficient manner at each update time $s_k$, one may rely on the numerically stable recursion
\begin{equation}
	m_{s_k}
	= m_{s_{k-1}} + \frac{1}{s_k p} \sum_{i={s_{k-1}}}^{s_k-1} \sum_{j=1}^p (x_i^{(j)} - m_{s_{k-1}}) ,
	\label{Eq:sam_mean_rec}
\end{equation}
based on Welford's algorithm \cite{Welford}, to utilize the old sample mean $m_{s_{k-1}}$ in computing the new one.

Using \eqref{Eq:sam_mean_rec}, we can incorporate the $(s_k - s_{k-1}) \cdot p$ new samples into the sample mean in just $\O((s_k - s_{k-1}) \cdot p \cdot d)$ operations (recall that each sample is a $d$-dimensional vector). If we \textit{amortize} this cost across the $s_k - s_{k-1}$ (multi-chain) iterations these samples were generated in, we obtain an amortized cost of $\O(p \cdot d)$ per iteration for the parameter updates. Since this complexity is already reached by just storing the corresponding $p$ samples, the computational overhead associated with the use of these updates is negligible.

\subsection{Sample Variance}

The goal of variance adjustments is to transform the target distribution $\nu$ in such a way that the covariance matrix of the transformed target distribution $\nu_{\alpha}$ has only ones on its diagonal. Analogously to centering, actually reaching this goal would require knowledge of the target distribution that is generally unavailable. Specifically, one would need to know the coordinate variances
\begin{equation*}
	\sigma_l^2
	:= \var_{X \sim \nu}(X[l]) ,
\end{equation*}
where we write $X = (X[1],\dots,X[d])^{\top}$, use them to generate the vector of standard deviations $v := (\sigma_1, \dots, \sigma_d)^{\top}$ and use $W := \diag(v)$ as the matrix parameter of $\alpha$. To deal with the values $\sigma_l^2$ being unknown, we may approximate them by the \textit{coordinate-wise sample variances} of the available samples. This corresponds to estimating the vector $v$ by
\begin{equation}
	v_n
	:= \sqrt{ \frac{1}{n p - 1} \sum_{i=0}^{n-1} \sum_{j=1}^p \big( x_i^{(j)} - m_n\big)^2 } ,
	\label{Eq:sam_std_dev}
\end{equation}
where for any vector $a = (a[1],\dots,a[d]) \in \R^d$ we denote
\begin{align*}
	&a^2 := (a[1]^2,\dots,a[d]^2)^{\top} \in \R^d , \\
	&\sqrt{a} := (\sqrt{a[1]},\dots,\sqrt{a[d]})^{\top} \in \R^d .
\end{align*}
Again we may greatly improve the computational efficiency of performing these adjustments by relying on recursive computations. Let us set
\begin{equation*}
	q_n := \sum_{i=0}^{n-1} \sum_{j=1}^p (x_i^{(j)} - m_n)^2
\end{equation*}
and denote by $\odot$ the operator for element-wise multiplication of $d$-dimensional vectors, i.e.
\begin{equation*}
	a \odot b
	:= (a[1] \cdot b[1], \dots, a[d] \cdot b[d])^{\top} ,
	\quad a, b \in \R^d .
\end{equation*}
A straightforward calculation then reveals that Welford's algorithm for recursively computing $v_n$ in a numerically stable fashion \cite{Welford} can be adapted to the update schedule framework, which results in the formulas
\begin{align}
	q_{s_k} &= q_{s_{k-1}} \!+\! \sum_{i = s_{k-1}}^{s_k - 1} \sum_{j=1}^p (x_i^{(j)} - m_{s_{k-1}}) \!\odot\! (x_i^{(j)} - m_{s_k}) , \notag \\
	v_{s_k} &= \sqrt{ \frac{1}{s_k p - 1} \, q_{s_k} } . 
	\label{Eq:sam_std_dev_rec}
\end{align}
The above recursions, together with \eqref{Eq:sam_mean_rec}, allow us to incorporate the $(s_k - s_{k-1}) \cdot p$ new samples into $v_n$ in the same (amortized) complexity as for the sample mean, so that this use of ATT can also be implemented with negligible computational overhead.

We emphasize that in practice one would not actually compute and maintain the full matrices $W_n := \diag(v_n)$, but rather make use of the fact that the matrix-vector product $\diag(v_n) y$ coincides with $v_n \odot y$. In other words, the evaluation of $\alpha(y) = \diag(v_n) y$ (and analogously also that of $\alpha^{-1}$) can also be implemented to only use $\O(d)$ operations per evaluation (note that this does not change when combining these adjustments with centering). In particular, the remaining overhead of variance adjustments (i.e.~aside from maintaining the transformation's parameters) is also of the same negligible computational complexity as the one for centering.

\subsection{Sample Covariance}

As we saw in Section \ref{SubSec:formal_desc}, the canonical way to transform the target distribution into having the identity matrix as its covariance is to use the Cholesky factor $L$ of $\Sigma := \cov(X), \; X \sim \nu$, meaning $\Sigma = L L^{\top}$, as the matrix parameter $W$ of $\alpha$, that is, to set $W := L$. Analogously to the variance adjustments, a natural way of approximating this optimal choice is to first approximate the true covariance $\Sigma$ by the \textit{sample covariance}
\begin{equation*}
	\Sigma_n := \frac{1}{n p - 1} \sum_{i=0}^{n-1} \sum_{j=1}^p (x_i - m_n) (x_i - m_n)^{\top}
\end{equation*}
then compute its Cholesky decomposition $\Sigma_n = L_n L_n^{\top}$ and use $W_n := L_n$ as the matrix parameter of $\alpha$.

Like for centering and variance adjustments, we may use recursive computations based on Welford's algorithm \cite{Welford}. With
\begin{equation*}
	Q_n := \sum_{i=0}^{n-1} \sum_{j=1}^p (x_i - m_n) (x_i - m_n)^{\top}
\end{equation*}
we obtain the recursions
\begin{align*}
	Q_{s_k} &= Q_{s_{k-1}} + \sum_{i=s_{k-1}}^{s_k-1} \sum_{j=1}^p (x_i - m_{s_{k-1}}) (x_i - m_{s_k})^{\top} , \\
	\Sigma_{s_k} &= \frac{1}{s_k p - 1} Q_{s_k} .
\end{align*}
Again together with \eqref{Eq:sam_mean_rec}, these recursions allow for the incorporation of the $(s_k - s_{k-1}) \cdot p$ new samples into $\Sigma_n$ in $\O((s_k - s_{k-1}) \cdot p \cdot d^2)$ operations. If we amortize this over the $s_k - s_{k-1}$ (multi-chain) iterations, we obtain a cost of $\O(p \cdot d^2)$ per iteration, which is optimal for PATT schemes with $p$ chains and non-sparse covariance matrix parameter. Recall, however, that updating an ATT scheme based on the sample covariance $\Sigma_n$ also necessitates computing the sample covariance's Cholesky factor $L_n$ and that factor's matrix inverse $L_n^{-1}$ (to evaluate $\alpha^{-1}$, cf.~\eqref{Eq:alpha_inv}, which is necessary when translating a state from sample space to latent space), both of which has complexity $\O(d^3)$. We therefore recommend suitably scaling covariance adjustments with the dimension $d$, e.g.~by using $p = \O(d)$ parallel chains or by choosing an update schedule $S = (s_k)_{k \in I}$ such that $s_k - s_{k-1}$ is at least of order $\O(d)$, either of which makes it so that the complexity of the aforementioned computations no longer dominates that of the overall method.

We also note that by using a non-diagonal (more precisely non-sparse) matrix $W$ in the transformation $\alpha$, one necessarily incurs an \textit{evaluation overhead} of at least $\O(d^2)$ per iteration, because evaluating $\varrho_{\alpha}$ in a given point $y \in \Y = \R^d$ involves evaluating $\alpha$ at $y$, which in turn necessitates computing the matrix-vector-product $W y$, which has complexity $\O(d^2)$.

\subsection{Sample Median} \label{SubApp:sample_median}

If the target distribution is heavy-tailed, any MCMC-style sampler for it will occasionally make trips far into the tails. The samples from these trips will (at least temporarily) have an enormous impact on the sample mean, perturbing it far from the true mean and consequently hindering efficient sampling. Furthermore, if the tails are heavy enough, $\nu$ will not even have a well-defined mean, so that the sequence $(m_n)_{n \in \N_0}$ simply would not converge. Consequently, whenever one is faced with a substantially heavy-tailed target distribution, it is advisable to estimate the target's center by something other than the sample mean. Possible choices include trimmed means, Winsorized means and the coordinate-wise sample median. In the following, we focus on the latter option and elaborate on how it could be implemented in practice.

As the name suggests, we define the coordinate-wise sample median of a set of samples $(x_i^{(j)})_{0 \leq i < n, 1 \leq j \leq p}$ to be the vector $z_n \in \R^d$ whose $l$-th entry is given by the median of the $l$-th entries of the vectors $x_i^{(j)}$,
\begin{equation*}
	z_n[l]
	= \text{median}(x_0^{(1)}[l],...,x_{n-1}^{(p)}[l]) .
\end{equation*}
The naive approach to computing $z_n$ is to simply sort these $l$-th entries (separately for each $l$) and find the median as the middle element (or the arithmetic mean of the two middle elements, if $n \cdot p$ is even) of the sorted sequence. Because the distribution of the $l$-th vector entries will, in the settings we consider, usually be far from uniform, efficient sorting schemes (such as bucket sort) are out of question, and so the complexity of the aforementioned sorting-based median computation is $\O(n p \log(n p) d)$.

Similarly, one can maintain the median of a growing sequence of values by explicitly maintaining the two sorted half-sequences of elements that are smaller, respectively larger, than the median in two instances of a suitable data structure (e.g.~a binary search tree). In the present case, supposing that the method has already been running for $n$ iterations, this approach would require $\O(\log(n p) p d)$ operations to update all $d$ medians based on a batch of $p$ new samples $(x_{n-1}^{(j)})_{1 \leq j \leq p}$. In other words, because the cost of inserting an element into a sorted data structure increases logarithmically in the size of the data structure, maintaining the sample median in this way would lead updates to become arbitrarily expensive if the method is run for sufficiently many iterations. Naturally, we would like to avoid this sort of degeneration of our method.

Though this appears to be infeasible if one requires the median to be updated in each iteration, it can still be accomplished if one is satisfied with occasional updates: It turns out that by using an update schedule that offsets the steadily increasing update cost by performing updates less and less frequently as the sampling goes on, we can achieve an amortized cost per iteration that does not depend on the number of iterations up to that point. To this end, we note that the first approach we mentioned (simply sorting a given set of samples) does not have optimal complexity: It was shown by \citet{MedianLinear} that the median of a given set of values can actually be computed in (deterministic) linear complexity. Thus an update of the sample median after $n$ iterations with $p$ parallel chains can be performed in complexity $\O(n \cdot p \cdot d)$. Now consider the update schedule $S = (s_k)_{k \in \N}$ with $s_k = \lfloor a^{k + b} \rfloor$ for some $a \in \ooint{1}{\infty}$ and $b \in \ooint{0}{\infty}$. Since we may amortize the cost $\O(s_k \cdot p \cdot d)$ of updating at iteration $n = s_k$ over the $s_k - s_{k-1}$ iterations since the last update, we obtain an amortized cost of
\begin{align*}
	\frac{\O(s_k \cdot p \cdot d)}{s_k - s_{k-1}}
	&= \O\left( \frac{a^{k + b}}{a^{k + b} - a^{k - 1 + b} } \cdot p \cdot d \right) \\ 
	&= \O\left( \frac{a}{a - 1} \cdot p \cdot d \right) 
	= \O(p \cdot d) 
\end{align*}
per iteration, which no longer depends on $n$. Note that with integer choices of $a$ and $b$ the rounding becomes unnecessary, but that a choice $a \in \ooint{1}{2}$ may be preferable performance-wise because it can allow significantly more updates within a finite number of iterations than even the minimal integer choice $a = 2$.

\subsection{Well-Definedness Issues}

In this section we discuss possible discrepancies between the transformation parameter choices outlined in Section \ref{App:param_choices} and the implicit requirement (made somewhat more explicit in Definition \ref{Def:adj_schemes}) that the parameter choices one makes should always lead to a well-defined, bijective transformation $\alpha$. For centering this is generally not an issue, regardless of which parameter choice one uses.

For variance adjustments, one already needs to be a bit more careful: Because the sample variance of a set of identical values is zero, if the available samples all coincide in some coordinate, the resulting vector $v_n$ will have a zero entry, so that the transformation $\alpha(y) := v_n \odot y$ is not actually bijective. Of course when using PATT with $p > 1$ parallel chains this pathological case is easily avoided by initializing the different chains with coordinate-wise distinct states $x_0^{(j)}$, $1 \leq j \leq p$ (e.g.~independently drawn from some Gaussian). However, if one only wants to use a single chain or if one insists on initializing all parallel chains with the same initial state (for whatever reason), some care may need to be taken to ensure well-definedness.

Whether such a setup is problematic actually comes down to the base sampler one wishes to use: For various slice sampling methods, e.g.~ESS, GPSS and hit-and-run uniform slice sampling (HRUSS), it is clear that two consecutive samples $x_{n-1}$, $x_n$ produced by any of these methods differ almost surely in all $d$ components, so that computing the sample means based on any number $n \geq 2$ of iterations suffices to ensure their well-definedness. On the other hand, any method based on random-scan Gibbs sampling, i.e.~on updating the state only in one randomly chosen coordinate per iteration, is at risk of violating the well-definedness, because for any $n < \infty$ there is a positive (albeit usually very small) probability that one of the coordinates was never chosen for updating throughout the first $n$ iterations, so that all of the available samples coincide in that coordinate. Accordingly, this issue can even affect methods that are rejection-free overall, most notably RSUSS. Of course Metropolis-Hastings (MH) methods are, in regards to this well-definedness, even more problematic than random-scan Gibbs sampling. Because in each iteration they may reject the proposal with some positive probability, it is entirely plausible for an MH method to produce a decently sized sequence of states that all coincide.

In summary, in order to ensure well-definedness of ATT transformations based on the sample variance, one must either use multiple parallel chains, initialized in coordinate-wise distinct states, or use a method that is \textit{coordinate-wise rejection-free} (in particular no random-scan Gibbs sampling, no MH methods), or slightly modify the variance estimator one uses, e.g.~by incorporating a ``dummy state'' into the computations.

For covariance adjustments based on the sample covariance, well-definedness is an even larger issue: For these adjustments to be well-defined, the sample covariance must be positive definite (since it must have a Cholesky decomposition), for which the samples it is computed from must span the entire sample space $\R^d$. In other words, while the cases that break the well-definedness of variance adjustments do the same for covariance adjustments, there are some additional cases that break only the latter. Moreover, these additional cases, while extremely rare in practice, cannot simply be precluded by choosing a suitable base sampler.

Therefore, if one must be absolutely certain that the covariance adjustments are well-defined, one should replace the sample covariance $\Sigma_n$ by a regularized version of itself, i.e.~by $\Sigma_n + \epsi I_d$ for some $\epsi > 0$, where $I_d \in \R^{d \times d}$ is the identity matrix. By choosing $\epsi$ large enough (depending on the given $\Sigma_n$), the resulting matrix can always be made positive definite, which suffices to ensure well-definedness overall.

\section{Choosing an Update Schedule} \label{App:schedule}

Here we summarize our findings regarding update schedules from Section \ref{Sec:par_sched} and Appendix \ref{App:param_choices} and add some more general considerations.

When using the PATT framework with $p > 1$ parallel chains, it is not advisable to update the transformation parameters in every iteration, even if the computational overhead from the parameter updates themselves is negligible. Rather, one should always run the chains for a reasonable number of iterations in between updates, in order to reduce the total time spent waiting for the slowest chain before each parameter update.

Moreover, it is usually not advisable to begin the tuning right away, i.e.~to compute the first set of transformation parameters based on very few samples, as at least the canonical parameter choices we discussed in Appendix \ref{App:param_choices} are not particularly robust towards such situations, so that transformation parameters computed from an overly small set of initial samples may well lead to such bad transformations that the method's medium-term sampling efficiency is significantly hampered.

For the simplest parameter choices, in particular the sample mean $m_n$ from \eqref{Eq:sam_mean} and the sample standard deviation $v_n$ from \eqref{Eq:sam_std_dev}, we deem it reasonable to use a linear update schedule, i.e.~$S = (s_k)_{k \in \N}$ with $s_k = a k + b$ for some ${a, b \in \N}$, chosen according to the above considerations ($a$ being the delay between consecutive updates and $b$ being the initial tuning burn-in, i.e.~the number of iterations before the first transformation is chosen). Nevertheless, we think it may be slightly more efficient overall to use an update schedule that slowly increases the delay between consecutive updates, particularly because the effect of a fixed number, say $a \cdot p$, of new samples on statistics like the sample mean and sample standard deviation diminishes as the total number of samples increases.

For slightly more extravagant parameter choices, for which efficient updates have amortized costs per iteration that can depend on the dimension $d$, but not on the number of iterations $n$, a good general guideline appears to be that a linear update schedule for these choices may be used, but that its parameters should be scaled with (powers of) $d$ in order to yield an amortized update cost per iteration that is dominated by other tasks. Specifically, with sample covariances, the computational overhead associated with the sampling itself is $\O(p \cdot d^2)$ per iteration, whereas updating the parameters costs $\O(d^3)$. Thus, to achieve good amortization, the guideline suggests using an update schedule $S = (s_k)_{k \in \N}$ for which $s_k - s_{k-1}$ is of order $\O(d)$. A natural way to achieve this would be $s_k = a d k + b$ for some $a, b \in \N$, where consecutive update times are a fixed multiple of $d$ apart.

Finally, for even more costly parameter choices like the sample median, for which an update after $n$ iterations costs $\O(n \cdot p \cdot d)$, it seems advisable to use an exponential schedule, as this leads to an amortized cost per iteration independent of $n$ (see our analysis in Appendix \ref{SubApp:sample_median}).

Though the parameter choices should play an important role in deciding for an update schedule, some characteristics of the target distribution $\nu$ may also be considered in this process. For instance, in cases where $\nu$ is known (or strongly suspected) to be heavy-tailed or very misshapen, update times for ATT should tend to be further apart than in analogous settings with light-tailed and well-shaped targets. This is because in the former settings the base sampler is much more likely to get into situations that require many target density evaluations to get out of (e.g.~walking far into the tails in a heavy-tailed setting), which, at least for base samplers using a random number of target density evaluations per iteration, increases the variance of the runtime of each chain for a given number of iterations, and therefore the time spent waiting for the slowest chain when synchronizing them for an update.

Although we expect that such nuanced considerations can enable posing even better-tuned samplers, in our experiments with PATT we have found it easier to always rely on default schedules that are parametrized solely by the selected adjustment types, the dimension $d \in \N$ of the problem at hand and the number $p \in \N$ of parallel chains to be used. These default schedules are defined as follows: Each of them is a priori infinite and only truncated by the finite number of iterations to be performed. If the PATT-sampler is to use centering with sample medians, the default schedule is
\begin{equation*}
	s_k = \lfloor 1.5^{k+16} \rfloor , 
	\qquad k \in \N
\end{equation*}
(note that $1.5^{17} \approx 10^3$). Otherwise, if the sampler is to use covariance adjustments with sample covariances, the default schedule is
\begin{equation*}
	s_k = \max(d, 25) \cdot p \cdot k ,
	\qquad k \in \N .
\end{equation*}
Finally, for all choices of adjustment types and parameter choices that fall into neither of these two categories, the default schedule is
\begin{equation*}
	s_k = 25 \cdot p \cdot k ,
	\qquad k \in \N .
\end{equation*}

\section{Connections to Other Works} \label{App:connections}

Since ATT and PATT are simple concepts, expectedly they are related, in one way or another, to various other approaches. Our aim with this section is to provide a comprehensive overview of all such connections. Let us begin with some of the stronger ones.

On the one hand, adaptive MCMC as a general principle was first popularized with its application to methods that use the adaptivity to (among other things) find a representation of the target distribution's covariance structure, see \citet{ExAda} and references therein. In contrast to ATT, these methods used the covariance information to adjust their proposal distribution, rather than transform the target. Though the two approaches can in principle lead to equivalent transition mechanisms (cf.~our analysis in Appendix \ref{App:equiv_trad_ada}), they usually differ and for some underlying samplers (e.g.~GPSS) there is no proposal distribution to be tuned and so no traditional adaptive MCMC methods equivalent to their ATT versions exist.

On the other hand, there are a number of sampling approaches that, like ATT, proceed by moving back and forth between the sample space and a latent space. Most notably, \citet{Mueller} proposed adaptively transforming the target distribution to adjust its covariance structure (by the same principle as the covariance adjustments of ATT), for the purpose of improving the performance of an underlying Metropolis-within-Gibbs sampler. 
Similarly, \citet{LovaszVempala} suggested affinely transforming a log-concave target distribution with the aim, just like in ATT, of bringing it into isotropic position, for the purpose of improving the performance of an underlying MCMC method (for which they had two particular choices in mind). Despite this idea appearing very similar to that of ATT on the surface, the two approaches' relation is not actually that strong because Lovász \& Vempala's assumption on the target to be log-concave allowed them to formulate a method which is inapplicable without the assumption. As we do not want to restrict ourselves to the log-concave case, we cannot learn much from their approach and must proceed quite differently than they propose.

Recently, some effort has been made \cite{HirdLivingston} to quantify how linearly \textit{preconditioning} the target distribution (as a one-time action, i.e.~non-adaptively) affects the theoretical properties (e.g.~mixing time, spectral gap) of an MCMC method when applying it to the transformed target instead of the untransformed one. Their considerations cover both variance and covariance adjustments (cf.~Section \ref{SubSec:adj_types}), but no centering (i.e.~their transformations are actually linear, not just affine linear). Although \citet{HirdLivingston} interpret transforming the target as an alternative viewpoint to adjusting some proposal distribution (since the MCMC methods they are interested in all have such a proposal component), many of the conclusions they draw about what types of target distributions are simplified by linear transformations, as well as some of their analysis on which transformation parameter choices are optimal for certain types of models, should apply to ATT and PATT as well.

In a slightly different direction, \citet{StereographicMCMC} proposed to use the $d$-sphere $\sph^d$ as a latent space for sampling from target distributions on $\R^d$, specifically by transforming back and forth between the spaces with a generalized stereographic projection and using, for example, a simple RWM sampler to make steps in the latent space.
Conversely, \citet{Lie} suggested a latent space based sampling scheme for sampling from target distributions on (possibly infinite-dimensional) spheres, by relying on samplers that are well-defined on Hilbert spaces (such as pCN-MH and ESS).

Most contemporary latent space based sampling schemes, however, make use of \textit{approximate transport maps}. That is, they learn (sometimes adaptively) a highly non-linear bijective transformation that aims to transform the given target distribution into a simple reference distribution, typically the standard Gaussian. Whereas the seminal work that first proposed such a scheme \cite{ParnoMarzouk} built its transformation with \textit{triangular maps}, nowadays most such schemes instead rely on \textit{normalizing flows} (NFs), see \citet{Grenioux} for a recent overview. Most notable among such approaches (in terms of their similarity to this work), \citet{TransportESS} proposed to employ NFs with ESS as the underlying base sampler and even to rely on parallelization and a type of update schedule in learning the transformation. Clearly, transforming the target distribution into a specific reference distribution is a much more challenging task than just bringing it into isotropic position (the latter being the goal of ATT). Accordingly, the transformations the approximate transport map approach relies on are costly to learn and to evaluate, leading us to speculate that, in terms of computational cost, PATT samplers should often have substantial advantages over samplers based on NFs. Unfortunately, we have found it impractical to meaningfully confirm this suspicion through numerical experiments: Because NFs are implemented by means of deep neural networks, the practical efficiency of NF-based samplers (as measured by, say, iterations per second of wall-clock time) significantly depends not only on the specific implementation of the NFs, but even on the available hardware (e.g.~on whether one is able to utilize a graphics card). Moreover, since NF-based samplers generally rely on fairly complicated transformations (as this is necessary to accomplish the approximate transport), their computational overhead (i.e.~the amount of computation spent on things other than evaluating the target) can be expected to make up a large fraction of their overall computational cost. Consequently it does not make much sense to compare them to PATT based on their respective TDE counts either. It is therefore unclear to us how a fair comparison between the two paradigms could be achieved. In any case, it is clear that there are certain types of degeneracies which NF-based samplers are far better equipped to overcome than PATT samplers. For instance, a target density which is strongly ``banana-shaped'' in some directions would always remain banana-shaped under affine transformations (which no reasonable base sampler we are aware of would be able to handle very well), whereas NFs could remove this degeneracy without issue.

To the best of our knowledge, in the present adaptive MCMC literature the focus is on samplers that rely on Metropolis-Hastings algorithms. Very few efforts have been made to harness the power of adaptive MCMC for slice sampling. Since we focus on applying PATT to slice samplers (specifically GPSS and ESS), we find it appropriate to briefly cover these related efforts.
Firstly, \citet{Tibbits14} devised an adaptive MCMC scheme to tune deterministic scan uniform slice sampling (DSUSS) \cite{SSNeal}, by letting it perform its one-dimensional updates along lines spanned by the elements of an adaptively learned basis of $\R^d$  (derived from an estimate of the target's covariance structure) rather than by those of the standard basis. We surmise that this should be equivalent (in the sense discussed for other cases in Appendix \ref{App:equiv_trad_ada}) to applying ATT with covariance adjustments to DSUSS.
Secondly, \citet{RegionalESS} proposed a scheme in which the target distribution is approximated by an adaptively learned mixture model and approximate samples from the target are generated by applying different versions of ESS to it, depending on the mixture model and the current state of the chain.

Furthermore, we want to comment on PATT's relation to \textit{generalized elliptical slice sampling} (GESS) \cite{GenEllSS}. In contrast to the two previous references, the authors of GESS did not opt for an adaptive MCMC method, but rather for a sophisticated non-adaptive one. Within their approach they choose those parameters of their sampler that would usually be chosen adaptively based just on the chain's current state, which they made feasible by introducing both parallelization and a ``two-group'' component into the method (while still viewing it as a single Markov chain overall). Though founded on entirely different principles, the resulting method has some significant parallels to PATT, in that both methods maintain a number of parallel sub-samplers on the sample space and occasionally pool the information they gather to update their parameters, for the purpose of better adapting them to the target and thereby improving their performance. We mention that PATT may have a conceptual advantage over GESS, since it is easier to scale to small numbers of parallel chains (e.g.~to accomodate users that wish to run a sampler on hardware with relatively few CPU cores). Regarding numerical comparisons between GESS and PATT, we refer to Section \ref{Sec:experiments} and Appendix \ref{App:exp_details}.

We also wish to point out two earlier works that PATT shares certain mechanims with. 
Firstly, \citet{ParAdaMCMC} suggested several adaptive MCMC methods that share PATT's basic parallelization approach, i.e.~each of these methods runs a number of parallel chains on the sample space and uses the samples generated by all chains to update the adaptively learned parameters. In contrast to PATT, \citet{ParAdaMCMC} did not consider any kind of update schedule, instead synchronizing the chains and updating the parameters in every iteration. This was feasible for them because they only considered choosing their underlying sampler as a Metropolis-Hastings method, and these methods are easy to synchronize since they only require a single target density evaluation per iteration.

Secondly, \citet{AirMCMC} proposed a new framework for adaptive MCMC, called \textit{AirMCMC}, which differs from the commonly used one by only allowing the adaptively learned parameters to be updated at a predetermined sequence $(s_k)_{k \in \N}$ of iterations. Furthermore, the sequence $(d_k)_{k \in \N}$ of differences $d_k = s_{k+1} - s_k$ is itself required to be strictly increasing, so that over time an AirMCMC method updates its parameters increasingly rarely. Obviously, the idea to only update the parameters at predetermined times has a large overlap with our concept of an update schedule. Notably, for \citet{AirMCMC} the main motivation for using such update times is that it eases theoretical analysis (compared with the general adaptive MCMC framework). The main results of \citet{AirMCMC} might therefore be of use in proving theoretical guarantees, such as ergodicity, of certain types of PATT samplers with infinite update schedules.

We should also elucidate the relation between PATT and the \textit{affine invariant ensemble samplers} of \citet{GoodmanWeare}. Generally speaking, these methods maintain a large collection of parallel chains, which they call \textit{walkers}, and update each walker by a Metropolis-Hastings step based on a small, randomly chosen subset of the other walkers. More specifically, the simplest such sampler updates a given walker by randomly selecting just one other walker and moving along the straight line through both walkers' current states. At first glance, the methods of \citet{GoodmanWeare} have some close parallels to PATT, as both utilize parallel chains and make them interact in non-trivial ways (although the ensemble samplers' chain interactions are far more direct and consequently much stronger than those of PATT). Moreover, the ensemble samplers are \textit{affine invariant}, meaning that their performance is unaffected by affine transformations of the target density (at least if the initial states are transformed accordingly). Given that PATT itself does nothing more than learning and applying a suitable affine transformation, one may therefore be inclined to question its value altogether, since it seems one could simply use an affine invariant ensemble sampler instead. However, PATT is designed to synergize with base samplers like GPSS and ESS, which (empirically) perform dimension-independently well under optimal conditions and thereby allow PATT with these base samplers to also perform dimension-independently well (at least asymptotically), even under far less restrictive conditions. The ensemble samplers of \citet{GoodmanWeare} on the other hand are known to deterioriate with increasing dimension, because their transition mechanisms are not well aligned with high-dimensional geometry \cite{emceeDegenBlog,emceeDegenPaper}.

Finally, we wish to mention that there are a number of works proposing MCMC-based sampling approaches specifically designed to perform well for heavy-tailed target distributions \cite{JohnsonGeyer,Kamatani_ht,FourierMCMC,ItoDiff_ht}. Although PATT does not have particularly strong methodological connections to any of these works, its specialization to PATT-GPSS is at least related to them in the sense that it is also designed to work well for heavy-tailed targets. Moreover, in our view, a significant advantage of PATT-GPSS over the aforementioned approaches lies in its generality, since it can be expected to work well in both light-tailed and heavy-tailed settings, while requiring only minimal setting-specific interventions by the user (namely to suitably choose the scalar hyperparameter of GPSS).

\section{ATT-friendly adaptive MCMC schemes}
\label{App:equiv_trad_ada}

Some classical adaptive MCMC schemes can be interpreted in terms of ATT. More generally, consider an adapation scenario w.r.t.~a family of transition kernels with covariance and shift parameters. For this let $\M$ be the set of all positive definite matrices in $\mathbb{R}^{d\times d}$, recall that $\A$ is the set of bijective affine transformations on $\R^d$, c.f.~Definition~\ref{Def:alpha_family}, and $\nu_{\alpha} := (\alpha^{-1})_{\#} \nu$ is the pushforward measure of $\nu$ under $\alpha^{-1}$. We introduce the following property.

\begin{definition} \label{Def:ATT-friendly}
	Let $\mathcal{P}$ be a family of probability measures on $(\R^d, \B(\R^d))$ that is closed under affine transformations in the sense that
	\begin{equation}
		\forall \nu \in \Pc, \alpha \in \A: \quad \nu_{\alpha} \in \Pc .
		\label{Eq:closed_wrt_affine}
	\end{equation}
	A family $(K_{\nu,c,\Sigma})_{(\nu,c,\Sigma)\in \Pc \times \R^d \times \M}$ of transition kernels on $\R^d \times \B(\R^d)$ that satisfies
	\begin{equation}
		\forall (\nu,c,\Sigma)\in \Pc \times \R^d \times \M: \quad \nu K_{\nu,c,\Sigma} = \nu
		\label{Eq:class_inv_prop}
	\end{equation}
	is called \textit{ATT-friendly}, if for any $(\nu,c,\Sigma) \in \Pc \times \R^d \times \M$ there is an $\alpha \in \A$ such that
	\begin{equation}
		K_{\nu,c,\Sigma}(x,A) = K_{\nu_{\alpha},\fatzero,I_d}(\alpha^{-1}(x),\alpha^{-1}(A))
		\label{Eq:ATT-friendly}
	\end{equation}
	for all $x \in \R^d, A \in \mathcal{B}(\R^d)$, where $\fatzero \in \R^d$ is the zero-vector and $I_d \in \R^{d\times d}$ the identity matrix.
\end{definition}

Observe that, by \eqref{Eq:transition_kernel_formula}, the identity \eqref{Eq:ATT-friendly} can be interpreted as saying that $K_{\nu,c,\Sigma}$ is the transition kernel of ATT for target distribution $\nu$ with fixed transformation $\alpha$ and the base sampler with transition kernel $K_{\nu_{\alpha},\fatzero,I_d}$ on the latent space. Moreover, an adaptive MCMC scheme based on an ATT-friendly family of transition kernels can, by \eqref{Eq:ATT-friendly}, be interpreted as updating a linear transformation whenever the pair of parameters $(c,\Sigma) \in \R^d \times \M$ is updated.

In the following, we establish the ATT-friendliness of a number of classical MCMC schemes in an exemplary, case-by-case fashion. This allows us to formally establish equivalences between adaptive implementations of ATT and certain adaptive MCMC versions of these classical schemes.

\begin{example}[Random walk Metropolis] \label{Ex:RWM}
	Let $\Pc$ be the family of distributions on $(\R^d,\B(\R^d))$ that admit a strictly positive Lebesgue density. Note that this $\Pc$ trivially satisfies \eqref{Eq:closed_wrt_affine}. Let $\nu \in \Pc$ and denote by ${\varrho: \R^d \ra \ooint{0}{\infty}}$ its density. The transition kernel of the \textit{random walk Metropolis} (RWM) algorithm for $\nu$ with covariance parameter $\Sigma \in \M$ and fixed step size $\sigma > 0$ is given by
	\begin{align*}
		&M_{\nu,\Sigma}(x,A)
		= \int_{A} \min\!\left\{1,\frac{\varrho(\tilde{x})}{\varrho(x)}\right\}\Nc_d(x, \sigma^2 \Sigma)(\d \tilde{x}) \\
		&\quad+ \ind_A(x)\left(1-\int_{\R^d} \min\!\left\{1,\frac{\varrho(\tilde{x})}{\varrho(x)}\right\}\Nc_d(x, \sigma^2 \Sigma)(\d \tilde{x})\right).
	\end{align*}
	The family $(M_{\nu,\Sigma})_{(\nu,\Sigma) \in \Pc \times \M}$ is one of the standard families of transition kernels that serve as building blocks for adaptive MCMC, c.f.~\citet{ExAda}. For example, multiple classical adaptive MCMC methods (e.g.~\citet{HaarioAdaRWM,ExAda}) follow the basic idea to perform the $n$-th iteration by $M_{\nu,\Sigma_n}$, where $\Sigma_n$ is the sample covariance of the first $n$ samples $x_0, \dots, x_{n-1}$ (usually $\Sigma_n$ is slightly regularized to enable better theoretical guarantees).

	By introducing a dummy index $c \in \R^d$ that the kernels do not actually depend on, we obtain the family $M := (M_{\nu,\Sigma})_{(\nu,c,\Sigma)\in\mathcal{P} \times \R^d \times \M}$ that fits the format of the class in Definition \ref{Def:ATT-friendly}. As it is well-known that $\nu M_{\nu,\Sigma} = \nu$ for all $(\nu,\Sigma) \in \Pc \times \M$, this $M$ also satisfies the invariance requirement \eqref{Eq:class_inv_prop}. We may therefore examine the ATT-friendliness of $M$.

	Let ${\nu \in \Pc}$, ${\Sigma \in \M}$ and write $\Sigma$ in Cholesky decomposition as ${\Sigma = L L^{\top}}$ for some $L \in \GL_d(\R)$. Then for $\alpha \in \A$ with $\alpha(y) := L y$, recalling that $\nu_{\alpha}$ has density $\varrho_{\alpha}(y) = \varrho(\alpha(y))$ and using the substitution $z := L \tilde{x}$, we get for any $A\in \mathcal{B}(\R^d)$ and $x\not\in A$ that
	\begin{align*}
		& M_{\nu_{\alpha},I_d}(\alpha^{-1}(x),\alpha^{-1}(A)) 
		=  M_{\nu_{\alpha},I_d}(L^{-1}x,L^{-1}(A)) \\ 
		& = \int_{L^{-1}(A)} \min\!\left\{1,\frac{\varrho(L \tilde{x})}{\varrho(L L^{-1}x)}\right\} \Nc_d(L^{-1}x, \sigma^2 I_d)(\d \tilde{x}) \\ 
		& = \int_{A} \min\!\left\{1,\frac{\varrho(z)}{\varrho(x)}\right\} \Nc_d(x, \sigma^2 LL^{\top})(\d z) 
		= M_{\nu,\Sigma}(x,A), 
	\end{align*}
	where the 2nd last equality follows by well-known properties of Gaussian distributions. Now for arbitrary $x \in \R^d$, $A \in \mathcal{B}(\R^d)$, this yields
	\begin{align*}
		& M_{\nu,\Sigma}(x,A) 
		= M_{\nu,\Sigma}(x,A\setminus\{x\}) + \ind_A(x) M_{\nu,\Sigma}(x,\{x\}) \\ 
		& =  M_{\nu,\Sigma}(x,A\setminus\{x\}) + \ind_A(x) \big(1-M_{\nu,\Sigma}(x,\R^d\setminus\{x\})\big) \\ 
		&= M_{\nu_{\alpha},I_d}(\alpha^{-1}(x), \alpha^{-1}(A \setminus \{x\})) \\
		&\quad + \ind_A(x) \big(1 - M_{\nu_{\alpha},I_d}(\alpha^{-1}(x), \alpha^{-1}(\R^d \setminus \{x\}))\big) \\ 
		&= M_{\nu_{\alpha},I_d}(\alpha^{-1}(x), \alpha^{-1}(A) \setminus \{\alpha^{-1}(x)\}) \\
		&\quad + \ind_A(x) M_{\nu_{\alpha},I_d}(\alpha^{-1}(x), \{\alpha^{-1}(x)\}) \\ 
		& = M_{\nu_{\alpha},I_d}(\alpha^{-1}(x),\alpha^{-1}(A)). 
	\end{align*}
	That is, $M$ also satisfies the ATT-friendliness property \eqref{Eq:ATT-friendly}, so that RWM is ATT-friendly. In particular, the aforementioned adaptive MCMC schemes for RWM based on the sample covariance $\Sigma_n$ are equivalent to the corresponding adaptive ATT schemes, in the sense that their respective transition kernels coincide.
\end{example}

\begin{example}[Independent Metropolis-Hastings] \label{Ex:IMH}
	As in the previous example, let $\Pc$ be the family of distributions on $(\R^d,\B(\R^d))$ that admit a strictly positive Lebesgue density.
	For $\tau \in \R^d$, $\Pi \in \M$, denote by $x \mapsto \Nc_d(x; \tau, \Pi)$ the p.d.f.~of $\Nc_d(\tau,\Pi)$. Let $\sigma > 0$ be a fixed step size parameter, let $(\nu,c,\Sigma) \in \Pc \times \R^d \times \M$ and denote by $\varrho: \R^d \ra \ooint{0}{\infty}$ the density of $\nu$. Define acceptance probabilities by
	\begin{equation*}
		\beta_{\nu,c,\Sigma}(x, \tilde{x})
		:= \min\!\Big\{1,\frac{\varrho(\tilde{x}) \Nc_d(x; c, \sigma^2 \Sigma)}{\varrho(x) \Nc_d(\tilde{x}; c, \sigma^2 \Sigma)}\Big\}
	\end{equation*}
	for $x, \tilde{x} \in \R^d$. With that the transition kernel of \textit{independent Metropolis-Hastings} (IMH) \textit{with Gaussian proposal} can be written as
	\begin{align*}
		& M_{\nu,c,\Sigma}(x,A) \\
		&= \int_A \beta_{\nu,c,\Sigma}(x, \tilde{x}) \Nc_d(c, \sigma^2 \Sigma)(\d \tilde{x}) \\
		&\quad +\ind_A(x) \Big(1-\int_{\R^d} \beta_{\nu,c,\Sigma}(x, \tilde{x}) \Nc_d(c, \sigma^2 \Sigma)(\d \tilde{x})\Big) .
	\end{align*}
	Of course there are rather obvious ways to construct adaptive MCMC methods based on the above transition kernel, for example by performing the $n$-th transition by $M_{\nu,c_n,\Sigma_n}$, where $c_n$ and $\Sigma_n$ are sample mean and sample covariance of the first $n$ samples $x_0,\dots x_{n-1}$. However, the resulting method does not appear to be in use in any real world applications, nor does it seem to have been specifically studied from the theoretical side. We surmise the reason for this to be a combination of two main factors. On the one hand, adaptive MCMC methods generally work poorly if they are overly slow at exploring the target distribution in the early stages of adaptation, because they tend to severely overfit their adaptation to what they saw near the method's initial state. It is easily seen that IMH has a natural tendency to exhibit such behavior. On the other hand, a Gaussian can hardly be a particularly precise approximation to any truly challenging target distribution, and so even a perfectly adapted IMH with Gaussian proposal will likely not work all that well in practice.

	We note that the theoretical properties of adaptive IMH with Gaussian proposal may nevertheless be of interest, for example to better understand properties of other methods. This may apply to the closely related and only slightly more complicated adaptive IMH sampler proposed by \citet{AdaIMH_GMM}, which draws its proposals from a Gaussian mixture rather than just a simple Gaussian.

	In any case, let us now examine the ATT-friendliness of IMH with Gaussian proposal. As in the previous example, it is well known that the family of its transition kernels satisfies \eqref{Eq:class_inv_prop}.

	Again write $\Sigma = L L^{\top}$ and recall $\varrho_{\alpha}(y) = \varrho(\alpha(y))$.
	For $\alpha\in\A$ with $\alpha(y) = L y + c$, so that $\alpha^{-1}(x) = L^{-1}(x - c)$, we have for any $A\in \mathcal{B}(\R^d)$ and $x\not\in A$ by the substitution $z := \alpha(\tilde{x})$ that
	\begin{align*}
		& M_{\nu_{\alpha},\fatzero,I_d}(\alpha^{-1}(x),\alpha^{-1}(A)) \\
		=& \int_{\alpha^{-1}(A)} \min\!\Big\{1,\frac{\varrho(\alpha(\tilde{x})) \Nc_d(\alpha^{-1}(x); \fatzero, \sigma^2 I_d)}{\varrho(x) \Nc_d(\tilde{x}; \fatzero, \sigma^2 I_d)}\Big\} \\
		& \qquad\qquad\qquad\qquad\qquad\qquad\qquad \cdot \Nc_d(\fatzero, \sigma^2 I_d)(\d \tilde{x}) \\ 
		=& \int_A \min\Big\{1,\frac{\varrho(z) \Nc_d(\alpha^{-1}(x); \fatzero, \sigma^2 I_d)}{\varrho(x) \Nc_d(\alpha^{-1}(z); \fatzero, \sigma^2 I_d)} \Big\} \Nc_d(c, \sigma^2 \Sigma)(\d z) \\ 
		=& \int_A \min\Big\{1,\frac{\varrho(z) \Nc_d(x; c, \sigma^2 \Sigma)}{\varrho(x) \Nc_d(z; c, \sigma^2 \Sigma)} \Big\} \Nc_d(c, \sigma^2 \Sigma)(\d z) \\
		=& \; M_{\nu,c,\Sigma}(x,A),
	\end{align*}
	where in the 2nd last equality we used the fact that
	\begin{align}
		\begin{split}
		&\Nc_d(L^{-1}(x^{\prime} - c); \fatzero, \sigma^2 I_d) \\
		&= \det(\Sigma)^{1/2} \Nc_d(x^{\prime}; c, \sigma^2 \Sigma) ,
		\quad x^{\prime} \in \R^d ,
		\end{split}
		\label{Eq:Gaussian_transform}
	\end{align}
	which is straightforward to verify using the densities' definitions.


	From this we obtain, by the same arguments as in Example~\ref{Ex:RWM}, that the family $(M_{\nu,c,\Sigma})_{(\nu,c,\Sigma)\in \mathcal{P}\times\R^d\times\M}$ of IMH kernels with Gaussian proposal is ATT-friendly.
\end{example}

\begin{remark}
	We wish to emphasize that the transformation properties of Gaussians that we made use of in Examples \ref{Ex:RWM} and \ref{Ex:IMH} are also satisfied by multivariate $t$-distributions, that is, by all distributions $\xi$ on $(\R^d, \B(\R^d))$ that have (unnormalized) densities $\eta: \R^d \ra \ooint{0}{\infty}$ of the form
	\begin{equation*}
		\eta(x) 
		= \left(1 + \frac{1}{\gamma} (x - \tau)^{\top} \Pi^{-1} (x - \tau)\right)^{-(d + \gamma)/2}
	\end{equation*}
	for any degrees-of-freedom parameter $\gamma \in \ooint{0}{\infty}$, center $\tau \in \R^d$ and scale matrix $\Pi \in \M$.

	Consequently, if we were to modify the RWM and IMH kernels of Examples \ref{Ex:RWM} and \ref{Ex:IMH} to use multivariate $t$-distributions (with a fixed degrees-of-freedom parameter) in place of Gaussians, some small adjustments to the examples' proofs would show that the resulting families of kernels are again ATT-friendly. Obviously the same holds for any other proposal distribution that has the necessary transformation properties.
\end{remark}

\begin{example}[General purpose elliptical slice sampling] \label{Ex:GP-ESS}
	Let $\Pc$ now denote the set of distributions on $(\R^d, \B(\R^d))$ that admit a strictly positive and lower semi-continuous (lsc) Lebesgue density. Note that this class satisfies \eqref{Eq:closed_wrt_affine} because $\nu_{\alpha}$ admits the density $\varrho_{\alpha} = \varrho \circ \alpha$, which is the composition of the lsc function $\varrho$ and the continuous function $\alpha$, and it is well-known that any such composition is again lsc.
	
	In its original formulation \cite{EllipticalSS}, elliptical slice sampling (ESS) is only capable of targeting distributions whose densities have a mean-zero Gaussian factor. However, as pointed out by \citet{GenEllSS}, by artificially introducing such a factor it is easily generalized to enable targeting any $\nu \in \Pc$. For lack of an established term in the literature, we call the resulting method \textit{general purpose elliptical slice sampling} (GP-ESS).
	
	Consistently with \citet{GenEllSS}, we define GP-ESS as follows. Let $\nu \in \Pc$ and denote by $\varrho: \R^d \ra \ooint{0}{\infty}$ its density. We factorize $\varrho$ as
	\begin{align*}
		&\varrho(x) = \varphi^{(0)}_{c,\Sigma}(x) \varphi^{(1)}_{c,\Sigma}(x; \varrho) , \\
		& \varphi^{(0)}_{c,\Sigma}(x) = \Nc_d(x;c,\Sigma) , \\
		& \varphi^{(1)}_{c,\Sigma}(x; \varrho) = \Nc_d(x;c,\Sigma)^{-1} \varrho(x) ,
	\end{align*}
	where $c \in \R^d$ and $\Sigma = L L^{\top} \in \M$ is a positive definite covariance matrix and its Cholesky decomposition, as before.
	For $x, v \in \R^d$ define $\widetilde{p}_{x,v,c} \colon \R \to \R^d$ by
	\begin{equation*}
		\widetilde{p}_{x,v,c}(\omega)
		:= \cos(\omega) (x - c) + \sin(\omega) (v - c) + c
	\end{equation*}
	and let $p_{x,v,c} := \widetilde{p}_{x,v,c}|_{\!\coint{0}{2 \pi}}$ be its restriction to the interval $\coint{0}{2 \pi}$.
	Using the previously introduced functions, a transition of GP-ESS is given by Algorithm~\ref{Alg:GP-ESS}. For $c=0$ it corresponds to ESS targeting $\nu$ by viewing it as having density $\varphi_{\fatzero,\Sigma}^{(1)}(\,\cdot\,; \varrho)$ w.r.t.~the Gaussian reference measure $\Nc_d(
	\fatzero,\Sigma)$.
	
	\begin{algorithm}[tb]
		\caption{GP-ESS transition}
		\label{Alg:GP-ESS}
		\textbf{Input:} target density $\varrho: \R^d \ra \ooint{0}{\infty}$, mean $c \in \R^d$, covariance $\Sigma \in \M$, current state $x_{n-1} \in \R^d$ \\
		\textbf{Output:} new state $x_n \in \R^d$
		\begin{algorithmic}[1]
			\STATE Draw $T_n \sim \U(\ooint{0}{\varphi^{(1)}_{c,\Sigma}(x_{n-1}; \varrho)})$, call the result $t_n$.
			\STATE Draw $V_n \sim \Nc(c,\Sigma)$, call the result $v_n$.
			\STATE Draw $\Omega \sim \U(\!\coint{0}{2 \pi})$, call the result $\omega$.
			\STATE Set $\omega_{\min} := \omega - 2 \pi$ and $\omega_{\max} := \omega$.
			\WHILE{$\varphi^{(1)}_{c,\Sigma}(\widetilde{p}_{x_{n-1},v_n,c}(\gamma); \varrho) \leq t$}
			\STATE {\bfseries if} $\omega < 0$ {\bfseries then} set $\omega_{\min} := \omega$ {\bfseries else} set $\omega_{\max} := \omega$.
			\STATE Draw $\Omega \sim \U(\!\coint{\omega_{\min}}{\omega_{\max}})$, call the result $\omega$.
			\ENDWHILE
			\STATE  \textbf{return} $x_n := \widetilde{p}_{x_{n-1},v_n,c}(\omega)$.
		\end{algorithmic}
	\end{algorithm}
	
	An explicit expression of the corresponding transition kernel is not readily available. Therefore we rely on the reformulation of ESS in Algorithm~{2.2} and Algorithm~{2.3} of \citet{RevESS} and their notation. For this let
	\begin{equation*}
		L^{\nu}_{c,\Sigma}(t)
		:= \{ x \in \R^d \colon \varphi^{(1)}_{c,\Sigma}(x; \varrho) > t \}
	\end{equation*}
	be the level set of $\varphi^{(1)}_{c,\Sigma}(\cdot; \varrho)$ at level $t\in\ooint{0}{\infty}$.
	This allows us to state the transition kernel of GP-ESS for $\nu$ with parameters $c$, $\Sigma$ as
	\begin{align*}
		&E_{\nu,c,\Sigma}(x,A) 
		= \frac{1}{\varphi^{(1)}_{c,\Sigma}(x; \varrho)} \int_0^{\varphi^{(1)}_{c,\Sigma}(x; \varrho)} \int_{\R^d} \\
		&\; Q_{p_{x,v,c}^{-1}(L^{\nu}_{c,\Sigma}(t))}(0,p^{-1}_{x,v,c}(A\cap L^{\nu}_{c,\Sigma}(t)))\, \Nc_d(c,\Sigma)(\d v) \, \d t,
	\end{align*}
	for $x \in \R^d$, $A \in \B(\R^d)$. Here $Q_S$, for $S \in \B(\!\ccint{0}{2 \pi})$, is a transition kernel on $(S,\B(S))$ that corresponds to the shrinkage procedure given by Algorithm~{2.3} of \citet{RevESS}, which is equivalent to performing lines $3$--$8$ of Algorithm~\ref{Alg:GP-ESS}. \citet{RevESS} proved that $Q_S$ is reversible w.r.t.~the uniform distribution on $S$ for any non-empty, open $S$.
	Building on this, \citet{RevESS} proved in their Theorem~3.2 that the classical ESS kernel $E_{\nu_{\alpha},\fatzero,I_d}$ is reversible w.r.t.~$\nu_{\alpha}$ (note that this is where the lower semi-continuity of distributions in $\Pc$ is needed). Therefore by Proposition~19 of \citet{Robust} the $\alpha$ push-forward kernel
	\begin{equation*}
		E_{\nu,c,\Sigma}(x,A)
		= E_{\nu_{\alpha},\fatzero,I_d}(\alpha^{-1}(x),\alpha^{-1}(A))
	\end{equation*}
	is reversible w.r.t.~$\nu$. In particular, this implies $\nu E_{\nu,c,\Sigma} = \nu$, so that the family $E := (E_{\nu,c,\Sigma})_{(\nu,c,\Sigma) \in \Pc \times \R^d \times \M}$ of GP-ESS kernels satisfies the invariance requirement \eqref{Eq:class_inv_prop} of the ATT-friendliness definition.
	
	We now establish the ATT-friendliness of GP-ESS. For any $(\nu,c,\Sigma) \in \Pc \times \R^d \times \M$, we write again $\Sigma$ in Cholesky decomposition as $\Sigma=L L^{\top}$, set $\alpha(y) := L y + c$, so that $\alpha^{-1}(x) = L^{-1}(x - c)$, and note that
	\begin{equation*}
		\varrho_{\alpha}(y)
		= \Nc_d(y; \fatzero,I_d) \varphi_{\fatzero,I_d}^{(1)}(y; \varrho_{\alpha}) ,
		\quad y \in \R^d .
	\end{equation*}
	Moreover, by \eqref{Eq:Gaussian_transform} (with $\sigma := 1$) we have
	\begin{equation}
		\frac{\varphi^{(1)}_{\fatzero,I_d}(\alpha^{-1}(\tilde{x}); \varrho_{\alpha})}{\varphi^{(1)}_{\fatzero,I_d}(\alpha^{-1}(x); \varrho_{\alpha})} 
		= \frac{\varphi^{(1)}_{c,\Sigma}(\tilde{x}; \varrho)}{\varphi^{(1)}_{c,\Sigma}(x; \varrho)} ,
		\quad x, \tilde{x} \in \R^d .
		\label{Eq:Gaussian_transform_ratio}
	\end{equation}
	With this, we obtain
	\begin{align*}
		& \alpha(L^{\nu_{\alpha}}_{\fatzero,I_d}(u\cdot \varphi^{(1)}_{\fatzero,I_d}(\alpha^{-1}(x); \varrho_{\alpha}))) \\
		=& \; \alpha( \{ y \in \R^d \colon \varphi^{(1)}_{\fatzero,I_d}(y); \varrho_{\alpha}) > u \cdot \varphi^{(1)}_{\fatzero,I_d}(\alpha^{-1}(x); \varrho_{\alpha}) \} ) \\ 
		=& \; \{ \tilde{x} \in \R^d \colon \varphi^{(1)}_{\fatzero,I_d}(\alpha^{-1}(\tilde{x}); \varrho_{\alpha}) > u \cdot \varphi^{(1)}_{\fatzero,I_d}(\alpha^{-1}(x); \varrho_{\alpha}) \} \\ 
		\stackrel{\eqref{Eq:Gaussian_transform_ratio}}{=} & \;
		\{ \tilde{x} \in \R^d \colon \varphi^{(1)}_{c,\Sigma}(\tilde{x}; \varrho) > u \cdot \varphi^{(1)}_{c,\Sigma}(x; \varrho) \} \\ 
		=& \; L^{\nu}_{c,\Sigma}(u\cdot \varphi^{(1)}_{c,\Sigma}(x; \varrho)) 
	\end{align*}
	for any $x \in \R^d, A \in \B(\R^d)$, such that
	\begin{align}
		\begin{split}
			&L^{\nu_{\alpha}}_{\fatzero,I_d}(u\cdot \varphi^{(1)}_{\fatzero,I_d}(\alpha^{-1}(x); \varrho_{\alpha})) \\
			&= \alpha^{-1}(L^{\nu}_{c,\Sigma}(u\cdot \varphi^{(1)}_{c,\Sigma}(x; \varrho)) ,
			\quad x \in \R^d, u \in \ooint{0}{1} .	
		\end{split}
		\label{Eq:level_set_id}
	\end{align}
	Temporarily fix $x, v \in \R^d$, $\omega \in \coint{0}{2 \pi}$, then it is easy to see that $\alpha(p_{\alpha^{-1}(x),\alpha^{-1}(v),\fatzero}(\omega)) = p_{x,v,c}(\omega)$
	and consequently we get for any $A\in\mathcal{B}(\R^d)$ that
	\begin{align}
		&p^{-1}_{\alpha^{-1}(x),\alpha^{-1}(v),\fatzero}(\alpha^{-1}(A)\cap L^{\nu_{\alpha}}_{\fatzero,I_d}(u\cdot \varphi^{(1)}_{\fatzero,I_d}(\alpha^{-1}(x); \varrho_{\alpha}))) \notag \\
		&\stackrel{\eqref{Eq:level_set_id}}{=}
		p^{-1}_{\alpha^{-1}(x),\alpha^{-1}(v),\fatzero}(\alpha^{-1}(A\cap L^{\nu}_{c,\Sigma}(u\cdot \varphi^{(1)}_{c,\Sigma}(x; \varrho)))) \notag \\ 
		&= p_{x,v,c}^{-1}(A\cap L^{\nu}_{c,\Sigma}(u\cdot \varphi^{(1)}_{c,\Sigma}(x; \varrho))) \label{Eq:p_id} .
	\end{align}
	Moreover, for any $x \in \R^d$, $t \in \ooint{0}{\varphi^{(1)}_{\fatzero,I_d}(\alpha^{-1}(x); \varrho_{\alpha})}$ and ${A \in \B(\R^d)}$, we get by the substitution $v := \alpha(\widetilde{v})$ that
	\begin{align*}
		&\int_{\R^d} Q_{p_{\alpha^{-1}(x),\widetilde{v},\fatzero}^{-1} (L^{\nu_{\alpha}}_{\fatzero,I_d}(t))}(0, \\
		&\qquad p^{-1}_{\alpha^{-1}(x),\widetilde{v},\fatzero}(\alpha^{-1}(A) \cap L^{\nu_{\alpha}}_{\fatzero,I_d}(t))) \, \Nc_d(\fatzero,I_d)(\d \widetilde{v}) \\
		&= \int_{\R^d} Q_{p_{\alpha^{-1}(x),\alpha^{-1}(v),\fatzero}^{-1}(L^{\nu_{\alpha}}_{\fatzero,I_d}(t))}(0, \\
		&\qquad p^{-1}_{\alpha^{-1}(x),\alpha^{-1}(v),\fatzero}(\alpha^{-1}(A) \cap L^{\nu_{\alpha}}_{\fatzero,I_d}(t))) \, \Nc_d(c,\Sigma)(\d v) .
	\end{align*}
	Using the substitution $t(u) := u\cdot \varphi^{(1)}_{\fatzero,I_d}(\alpha^{-1}(x); \varrho_{\alpha})$ and the above identity, we finally obtain
	\begin{align*}
		&E_{\nu_{\alpha},\fatzero,I_d}(\alpha^{-1}(x),\alpha^{-1}(A)) \\
		&= \int_0^1 \int_{\R^d} Q_{p^{-1}_{\alpha^{-1}(x),\alpha^{-1}(v),\fatzero} (L^{\nu_{\alpha}}_{\fatzero,I_d}(t(u)))}(0, \\
		&\qquad\qquad\qquad p^{-1}_{\alpha^{-1}(x),\alpha^{-1}(v),\fatzero}(\alpha^{-1}(A) \cap L^{\nu_{\alpha}}_{\fatzero,I_d}(t(u)))) \\
		&\qquad \Nc_d(c,\Sigma)(\d v) \, \d u \\
		&\stackrel{\eqref{Eq:p_id}}{=} \int_0^{1} \int_{\R^d} Q_{p_{x,v,c}^{-1} (L^{\nu}_{c,\Sigma}(u \cdot \varphi^{(1)}_{c,\Sigma}(x; \varrho)))}(0, \\
		&\qquad\qquad\qquad\quad p_{x,v,c}^{-1}(A \cap L^{\nu}_{c,\Sigma}(u \cdot \varphi^{(1)}_{c,\Sigma}(x; \varrho))))\,\\
		& \qquad \Nc_d(c,\Sigma)(\d v) \, \d u\\
		& = E_{\nu,c,\Sigma}(x,A) .
	\end{align*}
	Hence the family $E$ of GP-ESS kernels has the ATT-friendliness property \eqref{Eq:ATT-friendly}. Since, as explained earlier, $E$ also satisfies \eqref{Eq:class_inv_prop} we have thus proven that GP-ESS is ATT-friendly. As in the other examples, this in particular shows an equivalence between the obvious adaptive MCMC schemes for GP-ESS and the corresponding adaptive ATT schemes in the sense of coinciding transition kernels.
	
	We note, however, that these obvious adaptive MCMC schemes have not actually been proposed and/or examined in any of the existing literature: \citet{GenEllSS}, who first wrote about GP-ESS, did not consider how to adaptively choose its artificial Gaussian factor. Instead, they modified GP-ESS, essentially by introducing a new auxiliary variable, and then focused on how to automatically choose the new methods parameters. In the, to the best of our knowledge, only other published work regarding both GP-ESS and adaptive approaches, \citet{RegionalESS} again further generalized GP-ESS instead of focusing on the method itself.

	However, this does not mean that choosing the parameters of GP-ESS by standard adaptive MCMC principles, e.g.~setting $c$ as the sample mean and $\Sigma$ as the sample covariance of available samples, results in a poor sampling method. In fact, in our experiments with PATT-ESS (which, by the ATT-friendliness of GP-ESS are largely equivalent to ones with standard adaptive GP-ESS), it performed remarkably well for sufficiently well-behaved targets.

\end{example}

\section{Proof for Theoretical Justification} \label{App:theo_just_proof}

Here we prove the theoretical result of Section \ref{Sec:theo_just}. To simplify the proof,
we first provide a number of small auxiliary results and some new notation.

For the moment, let us arbitrarily fix the target distribution $\nu$ with unnormalized density $\varrho$ and a fixed transformation ${\alpha \in \A}$ (cf.~Def.~\ref{Def:alpha_family}), say $\alpha(y) = W y + c$ for some ${W \in \GL_d(\R)}$ and $c \in \R^d$.

Since $\alpha$ has constant Jacobian $J_{\alpha}(z) = W$, the multivariate change of variables formula implies
\begin{equation}
	\int_{\R^d} f(\alpha(y)) \d y
	= \frac{1}{\abs{\det(W)}} \int_{\R^d} f(x) \d x
	\label{Eq:affine_change_of_vars}
\end{equation}
for any integrable $f: \R^d \ra \R$.
%
Set $\kappa := \int_{\R^d} \varrho(x) \d x$, so that we may write $\nu(\d x) = \kappa^{-1} \varrho(x) \d x$. Observing that
\begin{align*}
	\kappa_{\alpha}
	&:= \int_{\R^d} \varrho_{\alpha}(y) \d y
	= \int_{\R^d} \varrho(\alpha(y)) \d y \\
	&\stackrel{\eqref{Eq:affine_change_of_vars}}{=} 
	\frac{1}{\abs{\det(W)}} \int_{\R^d} \varrho(x) \d x
	= \frac{\kappa}{\abs{\det(W)}} ,
\end{align*}
we obtain
\begin{equation*}
	\nu_{\alpha}(\d y) 
	= \kappa_{\alpha}^{-1} \varrho_{\alpha}(y) \d y
	= \frac{\abs{\det(W)}}{\kappa} \varrho_{\alpha}(y) \d y .
\end{equation*}

For any probability measure $\pi$ and any $\pi$-integrable function $f$ we denote by $\pi(f)$ the expectation of $f$ under $\pi$, i.e.
\begin{equation*}
	\pi(f) := \int_{\R^d} f(x) \pi(\d x) .
\end{equation*}
The following simple identity is useful.
\begin{lemma} \label{Lem:int_id}
	For any integrable function $h: \R^d \ra \R$ one has $\nu(h) = \nu_{\alpha}(h \circ \alpha)$.
\end{lemma}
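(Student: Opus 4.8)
The plan is to reduce the claim to the defining property of a pushforward measure together with the elementary change of variables identity \eqref{Eq:affine_change_of_vars} already recorded above. Recall that $\nu_{\alpha} := (\alpha^{-1})_{\#}\nu$, so that for any $\nu_{\alpha}$-integrable $g:\R^d \ra \R$ the abstract change of variables formula for pushforward measures gives $\nu_{\alpha}(g) = \int_{\R^d} g(\alpha^{-1}(x))\,\nu(\d x)$. Applying this with $g := h \circ \alpha$ and using $h \circ \alpha \circ \alpha^{-1} = h$ immediately yields $\nu_{\alpha}(h \circ \alpha) = \int_{\R^d} h(x)\,\nu(\d x) = \nu(h)$.

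Alternatively --- and perhaps preferably, to stay consistent with the density-level bookkeeping of this appendix --- I would argue directly from the density representations just derived. Using $\nu(\d x) = \kappa^{-1}\varrho(x)\,\d x$ and $\nu_{\alpha}(\d y) = \kappa^{-1}\abs{\det(W)}\,\varrho_{\alpha}(y)\,\d y$ together with $\varrho_{\alpha} = \varrho \circ \alpha$, one writes
\[
	\nu_{\alpha}(h \circ \alpha) = \frac{\abs{\det(W)}}{\kappa} \int_{\R^d} h(\alpha(y))\,\varrho(\alpha(y))\,\d y ,
\]
and then invokes \eqref{Eq:affine_change_of_vars} with the integrable integrand $x \mapsto h(x)\varrho(x)$ to turn the right-hand side into $\kappa^{-1}\int_{\R^d} h(x)\varrho(x)\,\d x = \nu(h)$, the factors $\abs{\det(W)}$ cancelling.

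There is essentially no obstacle here; the only point worth a word of care is integrability, i.e.~that $h \circ \alpha$ is $\nu_{\alpha}$-integrable whenever $h$ is $\nu$-integrable, which is exactly what the same computation applied to $\abs{h}$ establishes (both sides then equal $\nu(\abs{h}) < \infty$). Accordingly I would first run the identity for nonnegative $h$ (where all integrals take values in $\coint{0}{\infty}$ and the manipulations are unconditionally valid), and then extend to general $\nu$-integrable $h$ by splitting into positive and negative parts and using linearity.
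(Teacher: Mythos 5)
Your second, density-level argument is exactly the paper's proof: it uses $\nu(\d x)=\kappa^{-1}\varrho(x)\,\d x$, $\nu_{\alpha}(\d y)=\kappa^{-1}\abs{\det(W)}\,\varrho_{\alpha}(y)\,\d y$ and the change of variables \eqref{Eq:affine_change_of_vars}, just read in the opposite direction. The abstract pushforward argument and the integrability remark are correct additional polish, but not a different route in any substantive sense.
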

\begin{proof}
	We have
	\begin{align*}
		\nu(h)
		&= \int_{\R^d} h(x) \nu(\d x) \\
		&= \frac{1}{\kappa} \int_{\R^d} h(x) \varrho(x) \d x \\
		&\stackrel{\eqref{Eq:affine_change_of_vars}}{=} \frac{\abs{\det(W)}}{\kappa} \int_{\R^d} h(\alpha(y)) \varrho(\alpha(y)) \d y \\
		&= \int_{\R^d} (h \circ \alpha)(y) \nu_{\alpha}(\d y) \\
		&= \nu_{\alpha}(h \circ \alpha) . \qedhere 
	\end{align*}
\end{proof}

We now move on to the proof of our main theoretical result.

\begin{proof}[Proof of Theorem \ref{Thm:finite_US_ergodicity}]
	Denote by $\varsigma := s_{n_I}$ the index of the iteration in which the PATT sampler performs its final transformation parameter update. Let $(X_i^{(j)})_{0 \leq i < \varsigma, 1 \leq j \leq p}$ be the samples generated by the sampler up to iteration $\varsigma$. The main idea of our proof is to condition on these samples and analyze the sampler's structure after that point.
	
	Given $X_i^{(j)} = x_i^{(j)}$ for all $0 \leq i < \varsigma$ and $1 \leq j \leq p$, the remaining PATT sampling scheme after iteration $\varsigma$ is quite simple: Because we are past the final parameter update, the parallel chains cease to interact with one another and each ``remainder chain'' (i.e.~each chain viewed as starting at iteration $\varsigma$ of the PATT sampler) becomes a homogeneous Markov chain. Furthermore, the transition mechanism of each remainder chain fits into the ATT framework of Section \ref{Sec:ATT} with fixed transformation $\alpha_{\varsigma} \in \A$ given by
	\begin{align*}
		\alpha_{\varsigma}(y) &:= W_{\varsigma} y + c_{\varsigma} , \\
		W_{\varsigma} &:= \mathbf{W}_{\varsigma}(x_0^{(1)},\dots,x_{\varsigma-1}^{(p)}) ,\\
		c_{\varsigma} &:= \mathbf{c}_{\varsigma}(x_0^{(1)},\dots,x_{\varsigma-1}^{(p)}) .
	\end{align*}
	In particular, the chains rely on a fixed latent space on which they use the kernel $P_{\alpha_{\varsigma}}$ to approximately sample from the transformed target distribution $\nu_{\alpha_{\varsigma}}$. Note that, by assumption, $P_{\alpha_{\varsigma}}$ is ergodic towards $\nu_{\alpha_{\varsigma}}$ in the total variation sense \eqref{Eq:tv_ergodicity}. This is known (see e.g.~\citet{Tierney}, Section 3.1) to imply that $P_{\alpha_{\varsigma}}$ leaves $\nu_{\alpha_{\varsigma}}$ invariant, is $\nu_{\alpha_{\varsigma}}$-irreducible, aperiodic and positive Harris recurrent (we refer to \citet{Tierney} for definitions of these properties). It is therefore ergodic in the sense of \citet{Tierney}, Section 3.2, which by \citet{Tierney}, Theorem 3 implies a strong law of large numbers. Specifically, it ensures that a Markov chain $(Y_n)_{n \in \N_0}$ with transition kernel $P_{\alpha_{\varsigma}}$ satisfies
	\begin{equation}
		\frac{1}{n} \sum_{i=0}^{n-1} g(Y_i) 
		\asconv \nu_{\alpha_{\varsigma}}(g)
		\label{Eq:SLLN_latent_general}
	\end{equation}
	for any $\nu_{\alpha_{\varsigma}}$-integrable function $g: \R^d \ra \R$ and any initial value $Y_0 = y_0$.
	
	Let $f: \R^d \ra \R$ be an arbitrary $\nu$-integrable function and observe that then $g := f \circ \alpha_{\varsigma}$ is $\nu_{\alpha_{\varsigma}}$-integrable, since
	\begin{equation*}
		\nu_{\alpha_{\varsigma}}(\abs{g})
		= \nu_{\alpha_{\varsigma}}(\abs{f} \circ \alpha_{\varsigma})
		= \nu(\abs{f})
		< \infty 
	\end{equation*}
	by Lemma \ref{Lem:int_id}, applied with $h := \abs{f}$.
	Recall that, by construction of ATT, the latent space samples $Y_i^{(j)}$ for $i \geq \varsigma$, $1 \leq j \leq p$ generated by the parallel chains of the PATT sampler after iteration $\varsigma$ are mapped to their sample space counterparts by $X_i^{(j)} := \alpha_{\varsigma}(Y_i^{(j)})$, so that in particular $f(X_i^{(j)}) = f(\alpha_{\varsigma}(Y_i^{(j)})) = g(Y_i^{(j)})$. Thus, by \eqref{Eq:SLLN_latent_general}, the continuous mapping theorem for almost sure convergence, and our earlier observations about the structure of the PATT sampler after its final update, we obtain
	\begin{align*}
		&\frac{1}{(n - \varsigma) p} \sum_{i=\varsigma}^{n-1} \sum_{j=1}^p f(X_i^{(j)}) \\
		&= \frac{1}{(n - \varsigma) p} \sum_{i=\varsigma}^{n-1} \sum_{j=1}^p g(Y_i^{(j)})
		\asconv \nu_{\alpha_{\varsigma}}(g) 
		\stackrel{\ref{Lem:int_id}}{=} \nu(f)
	\end{align*}
	as $n \ra \infty$, no matter which values $X_{\varsigma}^{(1)}, \dots, X_{\varsigma}^{(p)}$ the remainder chains were initialized with. The theorem's claim follows from this by the facts that
	\begin{equation*}
		\frac{1}{p} \sum_{i=0}^{\varsigma-1} \sum_{j=1}^p f(X_i^{(j)}) < \infty
		\qquad \text{a.s.} ,
	\end{equation*}
	so that
	\begin{equation*}
		\frac{1}{n \, p} \sum_{i=0}^{\varsigma-1} \sum_{j=1}^p f(X_i^{(j)}) \asconv 0
		\qquad \text{as $n \ra \infty$} ,
	\end{equation*}
	as well as
	\begin{equation*}
		\lim_{n \ra \infty} \frac{n - \varsigma}{n} = 1 .
	\end{equation*}
	That is, we get
	\begin{align*}
		&\frac{1}{n \, p} \sum_{i=0}^{n-1} \sum_{j=1}^p f(X_i^{(j)})
		= \underbrace{\frac{1}{n \, p} \sum_{i=0}^{\varsigma-1} \sum_{j=1}^p f(X_i^{(j)})}_{\ra \; 0 \text{ a.s.}} \\
		&+ \underbrace{\frac{n - \varsigma}{n}}_{\ra \; 1} \cdot \underbrace{\frac{1}{(n - \varsigma) \, p} \sum_{i=\varsigma}^{n-1} \sum_{j=1}^p f(X_i^{(j)})}_{\ra \; \nu(f) \text{ a.s.}}
		\;\asconv\; \nu(f) . 
	\end{align*}
\end{proof}

\section{Experimental Details} \label{App:exp_details}

This section serves to provide detailed explanations of the experiments whose results we presented in Section \ref{Sec:experiments}.

\subsection{Methods} \label{SubApp:exp_methods}

We begin by giving a high-level overview of the inner workings of GPSS (since we used it as a base sampler for PATT without properly explaining it) and the competitor methods used in the experiments.

When given a target density ${\varrho: \R^d \ra \coint{0}{\infty}}$, GPSS \cite{GPSS} first transforms it into ${\varrho^{(1)}(x) := \norm{x}^{d-1} \varrho(x)}$ (where $\norm{\cdot}$ denotes the Euclidean norm). Then, in each iteration, it transitions from the current state $x_{n-1} \in \R^d$ to a new state $x_n \in \R^d$ by performing three main steps. Firstly, it draws a threshold $t_n$ uniformly from the interval $\ooint{0}{\varrho^{(1)}(x_{n-1})}$. Secondly, it writes the current state in polar coordinates as $x_{n-1} = r_{n-1} \theta_{n-1}$, where $r_{n-1} \in \ooint{0}{\infty}$, $\theta_{n-1} \in \sph^{d-1}$ (the Euclidean $(d-1)$-sphere), and updates the direction component $\theta_{n-1}$ while leaving the radius component $r_{n-1}$ fixed. Concretely, the new direction $\theta_n \in \sph^{d-1}$ is determined by randomly choosing a geodesic of $\sph^{d-1}$ (i.e.~a great circle) that runs through the current direction $\theta_{n-1}$ and then performing a \textit{shrinkage procedure} \cite{SSNeal} on it until drawing a proposal $\theta_{\nprop}$ that satisfies ${\varrho^{(1)}(r_{n-1} \theta_{\nprop}) > t_n}$, upon which the method sets $\theta_n := \theta_{\nprop}$. Note that this corresponds to an iteration of \textit{geodesic slice sampling on the sphere} \cite{SphericalSS} targeting the uniform distribution over the directions $\theta_{\nprop}$ satisfying the aforementioned inequality. Once $\theta_n$ has been determined, GPSS similarly updates the radius component $r_{n-1}$ of the intermediate state $\tilde{x}_n := r_{n-1} \theta_n$, while now leaving the direction component fixed. This is done by performing first a \textit{stepping-out procedure} (also \citet{SSNeal}) and then another shrinkage procedure on the set of possible radii $\ooint{0}{\infty}$, until settling for a proposal $r_{\nprop}$ that satisifes $\varrho^{(1)}(r_{\nprop} \theta_n) > t_n$. The product of updated radius $r_n := r_{\nprop}$ and updated direction $\theta_n$, i.e.~$x_n := r_n \theta_n \in \R^d$, is then used as the new state of the method.

Hit-and-run uniform slice sampling (HRUSS) (\citet{MacKayBook}, Section 29.7) works similarly to the radius update of GPSS. To transition from the current state $x_{n-1} \in \R^d$ to a new state $x_n \in \R^d$, it first draws a threshold $t_n$, now from the uniform distribution on the interval $\ooint{0}{\varrho(x_{n-1})}$. It then chooses a uniformly random direction $v_n \in \sph^{d-1}$ and updates the state by a stepping-out and then a shrinkage procedure on $\R$ (around $0 \in \R$) until drawing a proposal $\gamma_{\nprop} \in \R$ for which the state ${x_{\nprop} := x_{n-1} + \gamma_{\nprop} v_n}$ satisfies ${\varrho(x_{\nprop}) > t_n}$, upon which said proposal is accepted as the new state, $x_n := x_{\nprop}$. Note that the possible proposals $x_{\nprop}$ constitute a straight line through the old state $x_{n-1}$ in direction $v_n$.

For the adaptive random walk Metropolis algorithm (AdaRWM) \cite{HaarioAdaRWM}, there are several different formulations that differ slightly in the proposal distribution they use. We relied on the formulation of \citet{ExAda}, where the $n$-th iteration (for large enough $n$), again transitioning from the current state $x_{n-1} \in \R^d$ to a new state $x_n \in \R^d$, is performed as follows. A proposed new state $x_{\nprop}$ is drawn from the scale mixture
\begin{equation*}
	\beta \Nc_d(x_{n-1}, \tfrac{(0.1)^2}{d} I_d) + (1 - \beta) \Nc_d(x_{n-1}, \tfrac{(2.38)^2}{d} \Sigma_n) ,
\end{equation*}
with $\Sigma_n$ denoting the sample covariance of the previously generated samples $x_0,\dots,x_{n-1}$. As suggested by \citet{ExAda}, we set $\beta := 0.05$. With probability $\min\{1, \varrho(x_{\nprop}) / \varrho(x_{n-1})\}$, the proposal is accepted and used as the new state, $x_n := x_{\nprop}$. Otherwise, the proposal is rejected and the current state also becomes the new state, i.e.~$x_n := x_{n-1}$.

Unlike the three methods we described so far, \textit{two-group generalized elliptical slice sampling} (GESS) \cite{GenEllSS} is an ensemble method, meaning that it maintains a large number of parallel chains on the sample space and that these chains interact in certain ways. As the name suggests, the chains are organized into two distinct groups. Generally speaking, the method proceeds by alternatingly updating the groups, where the states of the chains in one group are always updated according to a transition mechanism (the same one for all chains in that group) that is determined based on the current states of the chains in the other group. The key ideas behind this transition mechanism are firstly to approximate the target distribution by a multivariate $t$-distribution in a maximum likelihood fashion (using the passive group's states to discretely represent the target) and secondly to utilize this approximation in the transition by making use of elliptical slice sampling (ESS) \cite{EllipticalSS} mechanics. \citet{GenEllSS} first considered updating the groups one iteration at a time. However, they also pointed out that it is more efficient to run each group for hundreds of iterations at a time before making it interchange roles with the other one. In our experiments, we always used the formula
\begin{equation*}
	n_{\text{ibu}} := \lfloor \max\{d, 25\} \cdot p / 10 \rfloor ,
\end{equation*}
where $d$ is the sample space dimension and $p$ the number of chains used by GESS in each of its groups, to determine the number of \textit{iterations between updates} of the approximating multivariate $t$-distribution (and with that the role-swap of the two groups). The maximum likelihood algorithm suggested by \citet{GenEllSS} to be used within the parameter update is only guaranteed to work if the number of points it receives as input is at least twice as large as their dimension. Accordingly, we always used $p := 2 d$ chains per group for GESS in our experiments. Below we elaborate on how we adjusted its total number of iterations to account for this.

Finally, the \textit{No-U-Turn sampler} (NUTS) \cite{NUTS} is another non-adaptive MCMC method (like HRUSS), which was developed as a tuning-free alternative to \textit{Hamiltonian Monte Carlo} (HMC) \cite{BayesianNeuralNets,Neal_HMC}. Specifically, NUTS performs its transitions based on a simulation of Hamiltonian dynamics, which often allows it to avoid random-walk-like behavior and achieve a remarkably high sample quality, but at the cost of requiring access to the target density's gradient (unlike any other sampler we consider here) and frequently using a large number of evaluations of this gradient to perform a single iteration. For simplicity, we counted NUTS' \textit{target gradient evaluations} (TGE) the same as TDE in our analysis of the samplers' computational cost\footnote{To be even more precise, we only counted NUTS' number of leapfrog steps (which with efficient implementation require one TGE each), thereby disregarding the additionally required evaluations of the target density itself.}, as was previously done by \cite{GenEllSS}. Since the fully self-tuning version of NUTS is fairly complex, we were not intent on implementing it ourselves. Instead we relied on the Python interface \textit{PyStan}\footnote{\url{https://pystan.readthedocs.io/en/latest/}} of the software package \textit{Stan}. Like we did with HRUSS and AdaRWM, we ran Stan's NUTS in a naively parallelized fashion, meaning that we ran several of its chains in parallel, but did not let them interact in any way.

It should be noted that Stan's NUTS precedes any requested sampling by a short tuning period, in which it adaptively fits certain parameters to the given target. Within this process, the sampler also learns a diagonal \textit{mass matrix}, which serves as the covariance of a Gaussian used to set up the proposal structure. We believe the learning of this diagonal mass matrix could be equivalent to adaptive ATT with variance adjustments, but we refrained from conducting an in-depth investigation of this (possible) connection.

\subsection{General Principles}

Here we state some general principles that we relied on in choosing the parameters of the samplers across the different experiments.

For PATT and the naively parallelized versions of HRUSS, AdaRWM and NUTS that we used to run these three methods, we could freely choose the number $p$ of parallel chains maintained by each method. In order to ensure that our experiments could be executed unaltered on a regular workstation (for the sake of good reproducibility), we ran them on such a machine ourselves. This led us to choose $p := 10$ (slightly less than the number of available processor cores on our machine) for each of the aforementioned methods throughout all of the experiments.

Generally speaking, in each experiment, we aimed to provide the PATT samplers and each of their competitors with roughly the same amount of computational resources, specifically by controlling the total number of samples each sampler was permitted to generate. Moreover, we decided to err on the side of caution by giving the competitors of PATT greater advantages than would perhaps been appropriate in regards to this aim.

More concretely, in each experiment we set a parameter $n_{\text{its}} \in \N$. As HRUSS tends to use a similar amount of TDE/it as GPSS, we ran both PATT-GPSS and HRUSS, and for simplicity also PATT-ESS, for precisely $n_{\text{its}}$ (multi-chain) iterations each, thereby generating a total amount of $p \cdot n_{\text{its}}$ samples per method.
For AdaRWM, we decided to mostly ignore that its computational overhead becomes quite substantial in high dimensions (since it must compute the Cholesky decomposition of a new $d \times d$ matrix in each iteration of each chain in order to draw the proposal). Instead, we chose to account for the fact that it only requires a single TDE/it by running it for more iterations overall. Specifically, we ran it for ${n_{\text{its\_rwm}} := 5 \cdot n_{\text{its}}}$ (multi-chain) iterations in each experiment.

For GESS, its desire to run $p_{\text{GESS}} := 4 d$ chains in total meant that strictly enforcing our goal of a parity of computational resources across methods would often require us to restrict it to very few iterations per chain. However, partly to give GESS some opportunity to show its long-term performance, and partly in order to enable us to properly measure its performance in the first place\footnote{The way we compute IATs, while producing fairly stable results, does require a relatively large number of samples per chain in order to work properly, particularly if the chain is strongly autocorrelated.}, we ignored this and always ran it for $n_{\text{its\_gess}} := 2 \cdot 10^4$ iterations per chain (except in the experiment on BLR for German credit data, where we determined $n_{\text{its\_gess}} := 10^4$ would suffice).

Since the computational cost per iteration of Stan's NUTS varies substantially from setting to setting but tends to be at least as high as those of the PATT samplers, we ran it for $n_{\text{its}}$ iterations whenever feasible, but scaled this number down to $n_{\text{its\_stan}} := n_{\text{its}} / 10$ in those experiments where it was particularly costly to run compared with the other methods (namely the one on Bayesian inference with multivariate exponential distributions and the one on BLR-FE for wine quality data) and to $n_{\text{its\_stan}} := n_{\text{its}} / 2$ in the experiment where the target was the costliest to evaluate (namely the one on Bayesian hyperparameter inference).

To initialize the different samplers, we always drew independent samples from an initial distribution, which for convenience we always chose to be some Gaussian, usually (unless specified otherwise) the standard Gaussian $\Nc_d(\fatzero, I_d)$. We used the exact same initial states for PATT-ESS, PATT-GPSS, HRUSS, AdaRWM and Stan's NUTS, and generated separate ones from the same distribution for GESS to accomodate its larger number of chains.

As the initialization was usually uninformed, for our PATT samplers it made sense to use an initialization burn-in period (as proposed in Section \ref{SubSec:init_burn_in}). Rather than fine-tuning the length of this period for every experiment, we just always set it to $n_{\text{burn}} := n_{\text{its}} / 10$, which turned out to work very well and may be a good rule of thumb more generally (unless one is looking to draw very large numbers of samples, in which case a smaller fraction is likely advisable). Of course this burn-in period had to be subtracted from the samplers' resource budgets, so we only ran each PATT sampler for $0.9 \cdot n_{\text{its}}$ (multi-chain) iterations after the burn-in.

\subsection{Runtime Measurements} \label{SubApp:runtimes}

To obtain precise representations of the various samplers' per-iteration runtimes in the different experiments, we relied on granular measurements whenever possible. That is, for the two PATT samplers as well as HRUSS and AdaRWM, we explicitly computed the duration of each iteration as the physical time passed between beginning and end of the iteration (as measured with python's \texttt{time.time()} function), thereby relying on the fact that these samplers ran each of their chains on an otherwise idle processor core. In case of the PATT samplers, we took care to also record the time spent waiting for the parallel chains' synchronisation at each update time. We could then compute the total runtime of each of the aforementioned samplers across the analysis-relevant range of iterations (i.e.~the latter half) by summing over the corresponding per-iteration times for each chain and taking the arithmetic mean over the different chains. The numbers of samples per second of each of these samplers were then computed by dividing their amounts of analysis-relevant samples by these total runtimes.

For the remaining two samplers, GESS and Stan's NUTS, this method of measuring runtimes was deemed infeasible (for GESS because it ran more chains than we had idle processor cores and for NUTS because we did not want to modify its implementation), and so the runtimes of these two samplers had to be determined somewhat less accurately. Specifically, we measured the amount of time passed between beginning and end of each of these methods' overall execution in each experiment and scaled this time down from the total number of iterations to the number of analysis-relevant ones. For GESS, we made this into a fair comparison with the other methods by actively restricting it to using only $n_{\text{thr}} := 10$ parallel threads (corresponding to the number $p$ of parallel chains used by all other samplers).

We now proceed to explain the models and data underlying our experiments, explicitly state the target densities and provide some results omitted in Section \ref{Sec:experiments}.

\subsection{Bayesian Inference with Multivariate Exponential Distributions}

In any fixed dimension $d \in \N$, we define the \textit{multivariate exponential distributions} as a sub-class $(\Exp_d(\tau,\Pi))_{\tau,\Pi}$ of the $d$-variate symmetric generalized multivariate normal distributions by specifying that, for any $\tau \in \R^d$ and any positive definite $\Pi \in \R^{d \times d}$, the distribution $\Exp_d(\tau,\Pi)$ has (unnormalized) density
\begin{equation*}
	\Exp_d(x; \tau,\Pi)
	:= \exp(-\big( (x - \tau)^{\top} \Pi^{-1} (x - \tau) \big)^{1/2}) .
\end{equation*}
Note that for any $\tau \in \R^d$ and $\sigma > 0$ one has
\begin{equation*}
	\Exp_d(x; \tau, \sigma^2 I_d)
	= \exp(-\norm{x - \tau} / \sigma)
\end{equation*}
and that, in one dimension, $\Exp_1(0,\sigma^2)$ is a symmetrized version of $\Exp(\sigma^{-1})$, the usual exponential distribution with rate $\lambda := \sigma^{-1}$.

For our experiment, we constructed a Bayesian inference setting from these distributions as follows. Suppose there is a $d$-variate random vector of interest $X$. We imposed the standard multivariate exponential prior ${X \sim \Exp_d(\fatzero,I_d)}$. Furthermore, we presumed to have access to a spectrum of different ways of measuring $X$ up to additive noise, and that these ways differ in the type of noise incurred, with there being advantages and disadvantages to each end: On one end of the spectrum, the different components of the noise vector are very strongly correlated but their variance is large. On the other end they have small variance but are far closer to independence. Specifically, the different measurement options were modeled by a sequence of random vectors $(Z_m)_{m \in \mathbb{N}}$ given by
\begin{equation*}
	Z_m = X + \varepsilon_m , 
	\quad \varepsilon_m \sim \Exp_d(\mathbf{0},\Sigma^{(m)}) ,
\end{equation*}
where
\begin{equation*}
	\Sigma^{(m)}_{i,j} 
	= \begin{cases} 
		(m+1)^2 / d & i = j , \\ 
		m(m+1) / d & i \neq j . 
	\end{cases}
\end{equation*}
To obtain a concrete likelihood, we ``secretly'' sampled a ``true value'' for $X$ from $\Nc_d(\fatzero, d \cdot I_d) \approx \Exp_d(\fatzero,I_d)$ and used it to generate independent synthetic data $z_1,\dots,z_{100} \in \R^d$, representing one realization each of the vectors $Z_1,\dots,Z_{100}$ (although for simplicity we only generated the data from an approximation of the above model). To summarize, our target density in this experiment was the (unnormalized) posterior density
\begin{equation*}
	\varrho(x) 
	:= \Exp_d(x; \fatzero,I_d) \prod_{m=1}^{100} \Exp_d(z_m; x,\Sigma^{(m)}) ,
	\quad x \in \R^d .
\end{equation*}

We set the sample space dimension to $d = 50$ and the base number of (multi-chain) iterations to $n_{\text{its}} := 10^5$. As the target turned out to be roughly centered around the sample mean $\overline{z}$ of the data points $z_1,\dots,z_{100}$, which is typically quite far from the origin, we drew the samplers' initial states from $\Nc_d(\overline{z}, I_d)$.

Recall that the sampling statistics for this experiment are presented in Table \ref{Tab:multiv_exp_dists}, and that histograms of the steps sizes and some marginal trace plots are provided in Appendix \ref{App:plots}, Figure \ref{Fig:multiv_exp_dists}.

\subsection{Bayesian Logistic Regression} \label{SubApp:BLR}

In our experiments on Bayesian logistic regression (BLR), we aimed to reproduce and expand upon experiments conducted by \citet{GenEllSS} when proposing GESS. Our BLR experiments -- as well as those of \citet{GenEllSS} -- all share the same probabilistic model, under which the target distribution is the posterior resulting from the combination of a logistic regression likelihood and the Gaussian prior $\Nc_d(\fatzero, 10^2 I_d)$. The only differences between the experiments' settings are therefore due to the data plugged into this model.

For the sake of completeness, we give an explicit formula for the target density corresponding to this posterior. Let us denote an abstract data set for binary classification with $n_{\text{data}} \in \N$ data points and $d \in \N$ features per data point by $(a,b) = (a^{(i)},b^{(i)})_{i=1,\dots,n_{\text{data}}} \subset \R^d \times \{-1,1\}$. Then the (unnormalized) posterior density of BLR with prior $\Nc_d(\fatzero, 10^2 I_d)$ for the data $(a,b)$ is given by
\begin{equation*}
	\varrho(x)
	= \exp(-\tfrac{1}{200} \norm{x}^2) \prod_{i=1}^{n_{\text{data}}} \frac{1}{1 + \exp(-b^{(i)} \langle a^{(i)}, x \rangle)}
\end{equation*}
for $x \in \R^d$, where $\langle \cdot, \cdot \rangle$ denotes the standard inner product on $\R^d$.

As is common practice for BLR, we always normalized the given data. For purely numerical data, this simply meant shifting and scaling the feature vectors so that each feature has sample mean zero and sample variance one. However, in one of the data sets we used (the German credit data) many of the features were either binary or one-hot-encoded. For such features it does not actually make sense to normalize them in this fashion, so we took care to leave them out of the normalization procedure and instead ensured that each of them uses the values $0$ and $1$ (as some were using $1$ and $2$ instead). In the experiments with feature engineering (FE), we normalized the data before FE, and refrained from re-normalizing it afterwards. Moreover, in order to enable the regression hyperplane to have a non-zero intercept, we always appended the given data (after FE, if applicable) by a constant feature. As the key quantity for us is the dimension of the sample space, we implicitly include this constant feature in our statements about the dimensionality of the different data sets below.

Let us now talk in some more detail about the data sets we used and provide the experimental results omitted earlier.
In our first BLR experiment, following \citet{GenEllSS}, we used the \textit{German credit data set} \cite{GermanCredit}, which consists of $n_{\text{data}} = 1000$ data points and $d = 25$ features. The data's features represent various attributes of people that applied for credits at a German credit institution, with the binary labels signifying whether or not that institution ultimately deemed them creditworthy.

We set $n_{\text{its}} := 2 \cdot 10^4$. Here we ran GESS for ${n_{\text{its\_gess}} := 10^4}$ iterations per chain. The results are presented in Table \ref{Tab:BLR_German_credit}. Both PATT samplers achieved near-perfect sample quality in terms of their mean IATs ($1.00$ being the smallest possible value), at a computational cost that was either reasonably (PATT-GPSS) or extremely low (PATT-ESS). Although NUTS maintained a similar sample quality, its iterations were so much more costly that both PATT samplers outperformed it handily according to the summary metrics TDE/ES and ES/s. All other methods failed to achieve a comparable sample quality and consequently their performances lagged far behind those of PATT and NUTS.
We note again that the final covariance/scale matrices used by the adaptive samplers for this experiment are shown in Appendix \ref{App:plots}, Figure \ref{Fig:BLR_German_credit_covs}.

\begin{table*}[t]
	\caption{Sampling statistics for the experiment on BLR for German credit data.}
	\label{Tab:BLR_German_credit}
	\vskip 0.1in
	\begin{center}
		\begin{small}
			\begin{sc}
				\begin{tabular}{lrrrrrr}
					\toprule
					Sampler 	& TDE/it	& Samples/s	& Mean IAT	& MSS  & TDE/ES		& ES/s \\
					\midrule
					PATT-ESS	&  1.27		& 38005		&   1.42	& 1.77 &    1.80	& 26792.51 \\
					PATT-GPSS	&  6.11		& 18332		&   1.24	& 1.81 &    7.57	& 14801.03 \\
					HRUSS 		&  5.21		& 36850		& 425.88	& 0.11 & 2220.89	&    86.53 \\
					AdaRWM		&  1.00		& 90467		& 191.07	& 0.10 &  191.07	&   473.48 \\
					GESS		&  4.74		&  8623		&  83.35	& 0.65 &  394.74	&   103.45 \\
					Stan's NUTS	& 39.21		&  4744		&   1.29	& 1.93 &   50.73	&  3667.39 \\
					\bottomrule
				\end{tabular}
			\end{sc}
		\end{small}
	\end{center}
	\vskip -0.1in
\end{table*}

For our second BLR experiment, again following \citet{GenEllSS}, we used the \textit{breast cancer Wisconsin (diagnostic) data set} \cite{BreastCancer}, which consists of $n_{\text{data}} = 569$ data points and $d = 31$ features. The features were extracted from image data created in the context of breast cancer diagnostics, the binary labels denote whether the specimen behind the image was ultimately diagnosed as malignant or benign.

We set ${n_{\text{its}} := 10^5}$. The results are presented in Table \ref{Tab:BLR_breast_cancer}. Clearly, the PATT samplers had more difficulties dealing with this target than with the previous one, and neither of them managed to match the near-perfect sample quality put forth by NUTS. On the other hand, their respective TDE/it remained reasonably low, whereas NUTS required vastly more TDE/it than even in the previous experiment. Altogether, both PATT samplers managed to beat NUTS according to their TDE/ES (albeit by a smaller margin than in the previous experiment), and they even came relatively close to NUTS in the ES/s category\footnote{Let us emphasize here that PATT should not be expected to beat NUTS in terms of ES/s because we implemented the former in the comparatively slow language Python, whereas the latter essentially runs entirely in the much faster language C++ (as a result of how Stan and its Python interface PyStan are set up).}. Notably, AdaRWM managed to beat NUTS (but not PATT) in TDE/ES here, thanks to its consistent use of just one TDE/it in combination with its relatively reasonable mean IAT.
We note again that the final covariance/scale matrices used by the adaptive samplers for this experiment are shown in Appendix \ref{App:plots}, Figure \ref{Fig:BLR_breast_cancer_covs}.

\begin{table*}[t]
	\caption{Sampling statistics for the experiment on BLR for breast cancer data.}
	\label{Tab:BLR_breast_cancer}
	\vskip 0.1in
	\begin{center}
		\begin{small}
			\begin{sc}
				\begin{tabular}{lrrrrrr}
					\toprule
					Sampler 	& TDE/it	& Samples/s	& Mean IAT	& MSS	& TDE/ES	& ES/s \\
					\midrule
					PATT-ESS	&   3.45	& 18665		&   12.56	& 14.93 &    43.38	& 1485.76 \\
					PATT-GPSS	&   8.28	& 15365		&    8.66	& 15.94 &    71.74	& 1773.74 \\
					HRUSS 		&   5.69	& 40322		& 2968.41	&  0.73 & 16893.87	&   13.58 \\
					AdaRWM		&   1.00	& 87359		&  150.95	&  1.86 &   150.95	&  578.72 \\
					GESS		&   5.25	& 12431		&  109.07	&  5.01 &   572.34	&  113.97 \\
					Stan's NUTS	& 131.85	&  2829		&    1.33	& 34.97 &   175.27	& 2128.92 \\
					\bottomrule
				\end{tabular}
			\end{sc}
		\end{small}
	\end{center}
	\vskip -0.1in
\end{table*}

In our third BLR experiment, we used the \textit{Pima diabetes data} \cite{PimaDiabetes}, which has ${n_{\text{data}} = 768}$ data points and ${d_{\text{data}} = 8}$ features. Its features represent diagnostic measurements, with each data point corresponding to a patient, and the binary labels denote whether or not the patient was found to have type II diabetes.

As BLR for this data set is known not to lead to a particularly good classification performance (a posterior mean classifier achieves a classification accuracy of around $78\%$), it is of interest whether this issue can be alleviated through the use of non-linear classifiers. One way to achieve this is to transform the data in non-linear ways before plugging it into the BLR model (this is what we mean by \textit{feature engineering}, FE). Following \citet{NUTS}, where this exact approach was applied to the German credit data, we augment the data by all two-way interactions between features. That is, for any indices $1 \leq i \leq j \leq d_{\text{data}}$, we append the entry-wise product of the vectors representing the $i$-th and $j$-th feature to the data. With the usual constant feature for an intercept added in the end, we thus obtain augmented data with $d = 45$ features. In the following we refer to BLR with FE in short as BLR-FE.

We set $n_{\text{its}} := 5 \cdot 10^4$. The results are presented in Table \ref{Tab:BLR-FE_Pima_diabetes}. Here the PATT samplers achieved a very good mean IAT using fairly few TDE/it. NUTS yet again produced samples with nearly perfect mean IAT, but used comparatively many TDE/it to do so, thereby allowing PATT-ESS to significantly outperform it in the TDE/ES category. Notably, PATT-ESS also beat NUTS according to their ES/s, albeit by a smaller margin. None of the other methods managed to compete on equal footing with PATT and NUTS in these summary metrics.
We note again that the final covariance/scale matrices used by the adaptive samplers for this experiment are shown in Appendix \ref{App:plots}, Figure \ref{Fig:BLR-FE_Pima_diabetes_covs}. In this case, said figure appears to show that both AdaRWM and GESS did not yet succeed at fully grasping the target's covariance structure.

\begin{table*}[t]
	\caption{Sampling statistics for the experiment on BLR-FE for Pima diabetes data.}
	\label{Tab:BLR-FE_Pima_diabetes}
	\vskip 0.1in
	\begin{center}
		\begin{small}
			\begin{sc}
				\begin{tabular}{lrrrrrr}
					\toprule
					Sampler 	& TDE/it	& Samples/s	& Mean IAT	& MSS	& TDE/ES	& ES/s \\
					\midrule
					PATT-ESS	&  2.01		& 24200		&    2.93	& 1.05 &     5.88	& 8259.59 \\
					PATT-GPSS	&  7.70		& 13803		&    2.48	& 1.08 &    19.11	& 5564.14 \\
					HRUSS 		&  5.43		& 28600		&  613.58	& 0.09 &  3331.42	&   46.61 \\
					AdaRWM		&  1.00		& 66163		&  420.16	& 0.05 &   420.16	&  157.47 \\
					GESS		&  8.85		&  5754		& 1374.96	& 0.07 & 12170.44	&    4.19 \\
					Stan's NUTS	& 29.53		&  5552		&    1.01	& 1.57 &    29.80	& 5502.30 \\
					\bottomrule
				\end{tabular}
			\end{sc}
		\end{small}
	\end{center}
	\vskip -0.1in
\end{table*}

In our fourth and final experiment on BLR, we used the \textit{red wine quality data set} \cite{WineQuality}, which has $n_{\text{data}} = 1599$ data points and $d_{\text{data}} = 11$ features. Here the features represent measurements of various physical and chemical properties of red wine samples and the categorical labels, which are integer-valued and range from $0$ to $10$, represent a quality scale on which the wine samples were placed by human testers. In order to obtain binary labels, we transformed the classification problem from the prediction of the precise quality level into merely predicting whether or not a given wine sample is of quality $6$ or higher (which leads to a relatively balanced data set).
By the same rationale as in the third BLR experiment, we use FE for the wine data, again appending the data by the two-way interactions between its features. In this case, the augmented data possessed $d = 78$ features. Due to this relatively high dimension, we deemed it appropriate to exclude GESS from the experiment (as it would require very large amounts of runtime and memory to run here, which would impair the experiment's reproducibility).

We set $n_{\text{its}} := 10^5$. The results are shown in Table \ref{Tab:BLR-FE_wine_quality}. Like in the previous experiment, the PATT samplers' mean IATs were quite good, but worse than that of NUTS. However, NUTS again required an enormous amount of TDE/it to achieve this, which once more allowed both PATT samplers to beat it by a substantial margin in terms of TDE/ES. Interestingly, the advantage of PATT over NUTS according to their ES/s was an entire order of magnitude here, which is a significantly larger margin than in all prior BLR experiments (where, in one case, PATT did not even beat NUTS in this category). As usual, the other competitors' performances in the summary metrics TDE/ES and ES/s did not come close to those of PATT and NUTS.
We note again that the final covariance matrices used by PATT-GPSS and AdaRWM for this experiment are shown in Appendix \ref{App:plots}, Figure \ref{Fig:BLR-FE_wine_quality_covs}. In this case they suggest that AdaRWM was still nowhere near a good approximation of the target's true covariance, even after its $n_{\text{its\_rwm}} = 5 \cdot 10^5$ iterations per chain, which explains its unusually bad performance.

\begin{table*}[t]
	\caption{Sampling statistics for the experiment on BLR-FE for wine quality data.}
	\label{Tab:BLR-FE_wine_quality}
	\vskip 0.1in
	\begin{center}
		\begin{small}
			\begin{sc}
				\begin{tabular}{lrrrrrr}
					\toprule
					Sampler 	& TDE/it	& Samples/s	& Mean IAT	& MSS	& TDE/ES	& ES/s \\
					\midrule
					PATT-ESS	&   2.53	& 10099		&    4.97	& 1.69 &    12.58	& 2031.61 \\
					PATT-GPSS	&   7.35	&  7027		&    3.84	& 1.80 &    28.27	& 1828.44 \\
					HRUSS 		&   6.06	& 11408		& 4278.22	& 0.05 & 25926.77	&    2.67 \\
					AdaRWM		&   1.00	& 24382		& 3207.84	& 0.03 &  3207.84	&    7.60 \\
					Stan's NUTS	& 146.89	&   287		&    1.55	& 2.65 &   227.45	&  185.86 \\
					\bottomrule
				\end{tabular}
			\end{sc}
		\end{small}
	\end{center}
	\vskip -0.1in
\end{table*}

We note that in both BLR-FE experiments, the FE turned out to improve the samplers' classification accuracy, but only very marginally (by about 1-2\%).

Moreover, we note that throughout the four BLR experiments, PATT-ESS always required significantly fewer (often several times less) TDE/ES than PATT-GPSS. We conjecture this to be a result of the BLR model's Gaussian prior, as it gives each specific target distribution tails that are no heavier than Gaussian, and ESS is known to work particularly well in this light-tailed regime. Accordingly, we would not expect PATT-ESS to be able to keep up with PATT-GPSS in any setting where the target's tails are substantially heavier than Gaussian. Some evidence of this is given by the experiment on Bayesian inference with multivariate exponential distributions, which used a target with heavier than Gaussian tails and showed PATT-GPSS outperforming PATT-ESS (recall Table \ref{Tab:multiv_exp_dists}). However, the difference between the two methods' performances in that experiment is not all that large. For targets with heavy polynomial tails (equivalent to those of multivariate $t$-distributions with few degrees of freedom), we would expect PATT-GPSS to have a much more pronounced advantage over PATT-ESS (in fact, in such cases the former should require several orders of magnitude fewer TDE/ES than the latter), based on experiments with GPSS and ESS conducted by \citet{GPSS}.

\subsection{Bayesian Hyperparameter Inference for Gaussian Process Regression of US Census Data} \label{SubApp:hyperparam_inf}

The setting of our final experiment is a bit more involved than those in the previous ones. We again took great inspiration from an experiment of \citet{GenEllSS}. Specifically, we performed Bayesian inference on the hyperparameters of a Gaussian process regression model for some real-world data. Let us elaborate on this for a bit.

Denote an abstract collection of regression data as
\begin{equation*}
	(a,b) = (a^{(i)},b^{(i)})_{i=1,\dots,n_{\text{data}}} ,
\end{equation*}
where $a^{(i)} \in \R^d$ and $b^{(i)} \in \R$. Suppose we want to model this data by \textit{Gaussian process} (GP) \textit{regression}. That is, we model it by a posterior distribution over an infinite-dimensional space of latent functions $f: \R^d \ra \R$. The likelihood is obtained from
\begin{equation*}
	b^{(i)} = f(a^{(i)}) + \epsi_i ,
	\qquad \epsi_i \sim \Nc(0,\sigma^2) ,
\end{equation*}
with the noise variables $\epsi_i$ being independent and $\sigma^2 \geq 0$ a fixed hyperparameter (we used $\sigma^2 = 10^{-2}$).
The prior is a GP prior with mean zero and covariance determined by an \textit{anisotropic radial basis function} (RBF) \textit{kernel}, i.e.~a function $k^{(\gamma)}: \R^d \ra \R^d$ of the form
\begin{equation*}
	k^{(\gamma)}(a, a^{\prime}) 
	= \exp( -\frac{1}{2} \sum_{j=1}^d \frac{(a[j] - a^{\prime}[j])^2}{\gamma[j]^2} )
\end{equation*}
for $a, a^{\prime} \in \R^d$, which has $d$ length scale parameters
\begin{equation*}
	\gamma = (\gamma[1],\dots,\gamma[d]) \in \ooint{0}{\infty}^d
\end{equation*}
that are hyperparameters to the GP regression model.

To obtain a good regression model, we want to infer these length scale hyperparameters through the Bayesian framework, which is commonly done by placing a suitable prior on them and using the marginal likelihood of the data in the above GP model as the likelihood for a given $\gamma$ (see \citet{GP4ML}, Chapter 5). Recall that the \textit{kernel matrix} for the data $(a,b)$ corresponding to the kernel $k^{(\gamma)}$ is the matrix $K^{(\gamma)} \in \R^{n_{\text{data}} \times n_{\text{data}}}$ with entries
\begin{equation*}
	K^{(\gamma)}_{i,i^{\prime}} = k^{(\gamma)}(a^{(i)}, a^{(i^{\prime})}) ,
	\quad i,i^{\prime} \in \{1,\dots,n_{\text{data}}\} ,
\end{equation*}
and note that it can be used to concisely write the marginal likelihood of the data for a given kernel $k^{(\gamma)}$ as
\begin{equation*}
	p(b \mid a, \gamma)
	= \Nc_{n_{\text{data}}}(b; \fatzero, K^{(\gamma)} + \sigma^2 I_{n_{\text{data}}}) ,
	\quad \gamma \in \ooint{0}{\infty}^d
\end{equation*}
(see \citet{GP4ML}, equation 5.8).

Following \citet{GenEllSS}, we imposed the independent exponential prior
\begin{equation*}
	p(\gamma) 
	= \prod_{j=1}^d \text{Exp}(\gamma[j]; r)
	\propto \exp(-r \sum_{j=1}^d \gamma[j])
\end{equation*}
for $\gamma \in \ooint{0}{\infty}^d$, with fixed rate $r = 0.1$. 

Despite its relatively simple algebraic form, the posterior distribution of the parameters $\gamma$ resulting from the above model turns out to be very challenging to sample from (at least for all MCMC-based samplers we are aware of), since its level sets are far from ellipsoidal, instead resembling simplices with rounded corners. We illustrate this issue in Figure \ref{Fig:hyperparam_inf_biv_marginal_gamma}. Fortunately, certain default behaviors of the Stan software provide an easy way to substantially simplify the problem: As the posterior of $\gamma$ is only supported on $\ooint{0}{\infty}^d$, Stan would not target it directly, instead resorting to the corresponding distribution of the variables $\beta := \log(\gamma)$ (the $\log$ being applied entry-wise). Although Stan's main motivation for doing this is the fact that the distribution of $\beta$ is supported on all of $\R^d$, which makes it easier for the software's default sampler NUTS to take large steps on it, at the same time the switch brings with it the advantage that the distribution of $\beta$ has a much nicer geometry than that of $\gamma$ (see Figure \ref{Fig:hyperparam_inf_biv_marginal_beta}). Hence we used the distribution of $\beta$ resulting from the posterior of $\gamma$ as our target distribution. Since we did not require the target density to be normalized, we could simply use the product of prior and likelihood from the posterior of $\gamma$ as the basis of our target, plug in $\gamma = \exp(\beta)$ (also applied entry-wise) and additionally multiply by the Jacobian determinant of the inverse transformation $\beta \mapsto \gamma$. From this we ultimately obtained the target density
\begin{align*}
	\varrho(\beta)
	&= \exp(\sum_{j=1}^d \beta[j]) \cdot \exp(- r \sum_{j=1}^d \exp(\beta[j])) \\
	&\quad \cdot \, \Nc_{n_{\text{data}}}(b; \fatzero, K^{(\exp(\beta))} + \sigma^2 I_{n_{\text{data}}})
\end{align*}
for $\beta = (\beta[1], \dots, \beta[d])^{\top} \in \R^d$.

\begin{figure}[t]
	\begin{center}
		\includegraphics[width=0.4\textwidth]{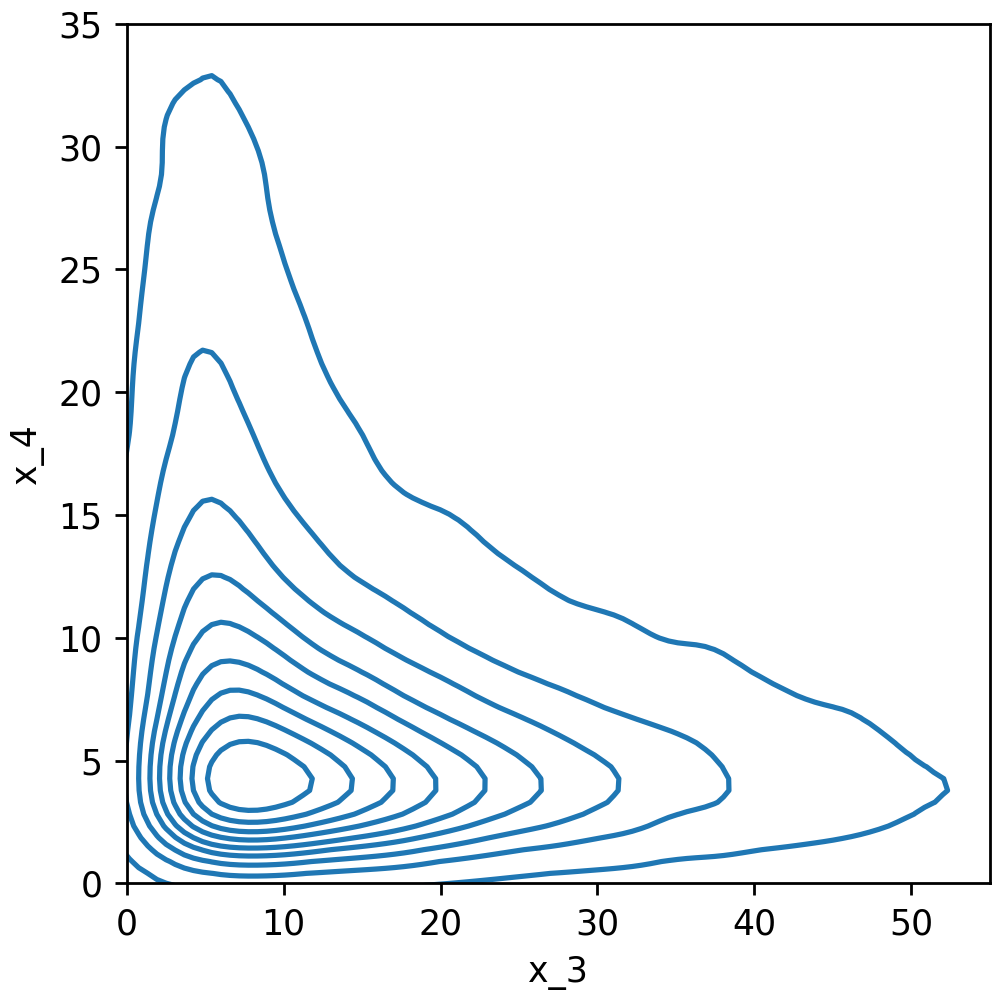}
	\end{center}
	\vskip -0.1in
	\caption{Contour plot corresponding to a kernel density estimate of a bivariate marginal of the posterior density of the length scale parameters $\gamma$ in the setting of the experiment on Bayesian hyperparameter inference for GP regression of US census data. \label{Fig:hyperparam_inf_biv_marginal_gamma}}
\end{figure}

\begin{figure}[t]
	\begin{center}
		\includegraphics[width=0.4\textwidth]{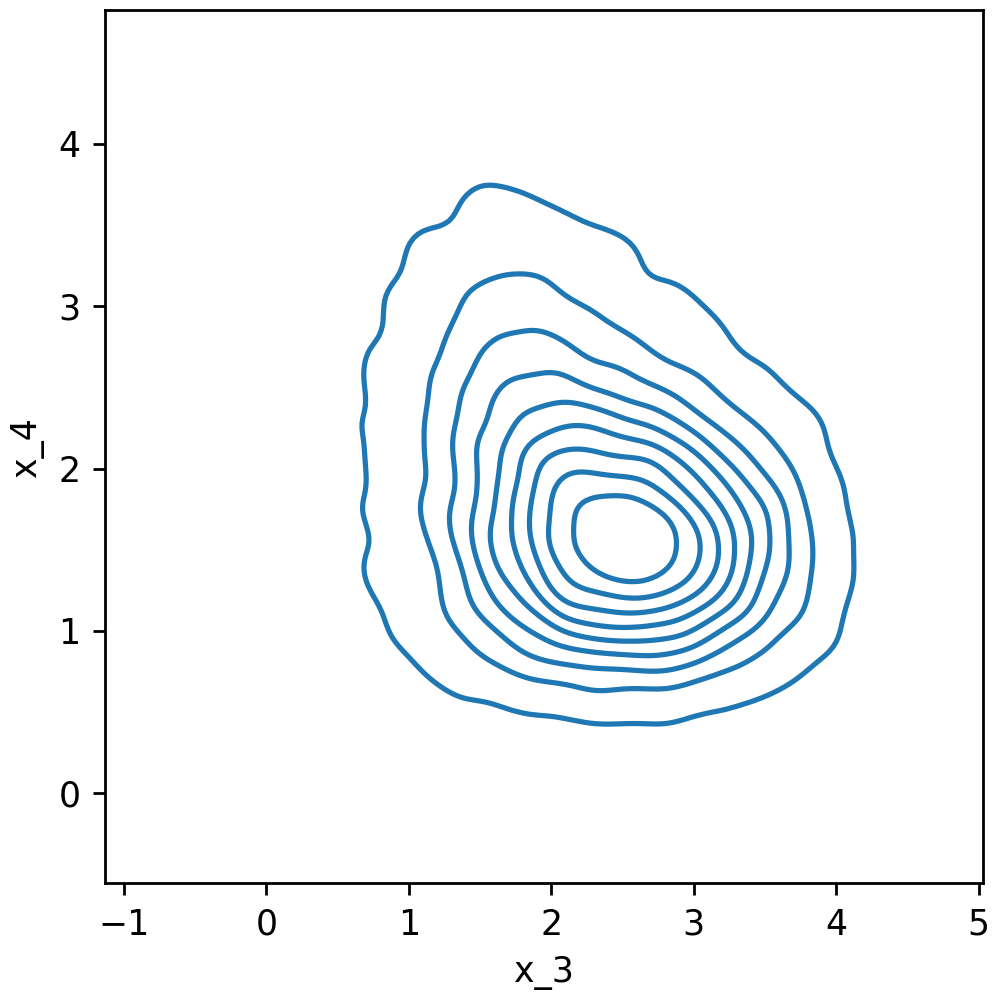}
	\end{center}
	\vskip -0.1in
	\caption{Contour plot corresponding to a kernel density estimate of a bivariate marginal of the posterior density of the log-transformed length scale parameters $\beta$ in the setting of the experiment on Bayesian hyperparameter inference for GP regression of US census data. \label{Fig:hyperparam_inf_biv_marginal_beta}}
\end{figure}

Regrettably, the Bayesian approach employed here to infer the parameters $\gamma$ is computationally infeasible for all but the smallest data sets because every evaluation of the above target density involves computing the kernel matrix $K^{(\exp(\beta))}$, which has complexity $\O(n_{\text{data}}^2 d)$, as well as the inverse and determinant of $K^{(\exp(\beta))} + \sigma^2 I_{n_{\text{data}}}$, each of which has complexity $\O(n_{\text{data}}^3)$.

As data for the model we used a small subset of county-wise accumulations of some recent \textit{US census data}, which we obtained from Kaggle\footnote{\href{https://www.kaggle.com/datasets/muonneutrino/us-census-demographic-data?resource=download&select=acs2015_county_data.csv}{https://www.kaggle.com/datasets/muonneutrino/us-census-demographic-data?resource=download\&select=acs2015\_county \_data.csv}}. The raw data consists of ${n_{\text{raw}} = 3220}$ data points with $d_{\text{raw}} = 37$ features. These features represent statistics of various demographical, social and economical attributes, with each data point summarizing the population of a separate county. We decided to use it as regression data for the median household income of a county, given the other attributes. To this end we removed a few features that we deemed to closely related to this target variable (such as per capita income), as well as several categorical and identifier features, after which $d_{\text{data}} = 30$ features (plus the target variable) remained. Due to the aforementioned complexity issues, we randomly selected a small subset of just $n_{\text{data}} = 100$ data points. Finally, we normalized both the features and the target variable to each have sample mean zero and sample variance one.

We should acknowledge here that (according to several preliminary runs) the experiment's overall results (here meaning the relative sizes of the samplers' TDE/ES) appear to vary substantially (by a factor of at least $2$). Moreover, the results presented in Table \ref{Tab:hyperparam_inf} seemed to lie more towards the favorable end of this variation with regards to the performance of PATT. We chose to present PATT in such a favorable light because in our view the experiment should serve to demonstrate that PATT \textit{can} work very well for such targets, not that this is guaranteed to be the case. Unfortunately, at present we cannot provide any insight into which data-dependent characteristics of the target are to blame for the (comparatively) large variation in the results.

We set $n_{\text{its}} := 2 \cdot 10^4$ and generated each initial state by drawing from $\Nc_d(\fatzero, 10^2 I_d)$ and then applying the coordinate-wise absolute value, followed by the coordinate-wise logarithm. In order to end up with an experiment that terminates reasonably quickly (despite the target being very expensive to evaluate), we refrained from running GESS here and additionally restricted NUTS to performing only $n_{\text{its\_stan}} := 10^4$ iterations per chain.

Since we ended up letting our samplers simulate the variables $\beta$, whereas our main interest remained in the variables $\gamma = \exp(\beta)$, we found it more appropriate to analyze the sampling performance based on the corresponding $\gamma$-samples (i.e.~the entry-wise exponentials of the ``raw'' samples produced by simulating $\beta$) so this is what we did to obtain the mean IAT and MSS values (and by extension the TDE/ES and ES/s) presented in Table \ref{Tab:hyperparam_inf}.

For a brief discussion of the experiment's results, we refer back to Section \ref{Sec:experiments}.

\section{Ablation Studies} \label{App:ablation}

Generally speaking, an ablation study can be performed on any compound procedure possessing methodological components that are not strictly necessary for the procedure's well-definedness. In the simplest case, the ablation study is performed for a single such component and essentially consists of comparing two different procedures, the one with all methodological components left in place and the one with the component under consideration removed and all others left in place. In this section we perform ablation studies of various aspects of PATT.

\subsection{Adjustment Types} \label{SubApp:ablation_adj_types}

\begin{table*}[t] 
	\caption{Sampling statistics for the ablation study on adjustment types. The mean IATs were computed w.r.t.~the component-wise absolute values of the shifted samples generated after the initialization burn-in period.}
	\label{Tab:ablation_adj_types}
	\vskip 0.1in
	\begin{center}
		\begin{small}
			\begin{sc}
				\begin{tabular}{lrrrrrr}
					\toprule
					Sampler	& TDE/it	& Samples/s	& Mean IAT	& MSS	& TDE/ES	& ES/s \\
					\midrule
					Plain   & 16.50  	& 21536		&  679.88	&  9.27	& 11218.52	&   31.68 \\
					Cen 	& 11.70  	& 20433		&  910.44	& 29.36	& 10655.52	&   22.44 \\
					Var 	& 15.95  	& 16305		&  509.41	& 10.32	&  8127.33	&   32.01 \\
					Cov 	& 14.82  	& 12993		&  202.59	& 13.31	&  3002.90	&   64.14 \\
					Cen+Var &  9.98  	& 22912		&  191.18	& 48.66	&  1908.14	&  119.85 \\
					Cen+Cov &  7.06  	& 24866		&    2.91	& 93.13	&    20.53	& 8555.48 \\
					\bottomrule
				\end{tabular}
			\end{sc}
		\end{small}
	\end{center}
	\vskip -0.1in
\end{table*}

We begin by demonstrating that the three different types of adjustments proposed in Section \ref{Sec:ATT}, i.e.~centering, variance adjustments and covariance adjustments, can indeed all be beneficial to a PATT sampler's performance. To this end, we conducted an experiment in which each reasonable subset of the adjustment types is used for a separate sampler and all of these samplers are made to compete in sampling from the same target distribution.

Specifically, we set the target distribution $\nu$ to be the multivariate $t$-distribution with $\gamma = 10$ degrees of freedom, location parameter $\tau = (\sqrt{d},\dots,\sqrt{d})^{\top} \in \R^d$ and scale matrix $\Pi \in \R^{d \times d}$ given by
\begin{align*}
	&\Pi := \Pi^{(1)} \Pi^{(2)} \Pi^{(1)} , \\
	&\Pi^{(1)} = \diag(\sqrt{1},\dots,\sqrt{d}) , \\
	&\Pi^{(2)}_{i,j}
	= \begin{cases}
		1 & i = j , \\
		0.5 & i \neq j ,
	\end{cases}
\end{align*}
in dimension $d = 100$. For the reader's convenience, we recall that this $\nu$ has density
\begin{equation*}
	\varrho(x)
	= \left(1 + \frac{1}{\gamma} (x - \tau)^{\top} \Pi^{-1} (x - \tau)\right)^{-(d + \gamma)/2}.	
\end{equation*}
Note that $\nu$ is fairly heavy-tailed and has mean $\tau$ and covariance $3 \cdot \Pi$. In particular, it is far from being centered and has both highly inconsistent coordinate variances and rather strong correlations between the variables, so that all adjustment types can in principle be expected to improve performance.

As outlined above, we ran six different samplers for this target, ``plain'' GPSS and five variants of PATT-GPSS (one for each individual adjustment type as well as two for the meaningful combinations two adjustment types). Each used $p = 10$ parallel chains and was run for $n_{\text{burn}} = 2 \cdot 10^4$ initialization burn-in iterations and $n_{\text{its}} = 10^5$ regular iterations after that. All samplers were initialized with the same values, which were independently (w.r.t.~the parallel chains) drawn from $\Nc_d(\fatzero, d^2 \cdot I_d)$.

To capture the samplers' long-term performances (rather than their performances in the stage where they are still making major adjustments to their adaptively chosen parameters), we only considered the latter half of each sampler's after-burn-in samples when computing the usual performance metrics. Moreover, in order to properly measure the samples' autocorrelation despite certain pathologies of GPSS in cases where it struggles with the target, we shifted all samples by the true target mean $\tau$ and took entry-wise absolute values of them before computing the mean IATs. The results are shown in Table \ref{Tab:ablation_adj_types}. It can be seen that all the PATT samplers exhibit improved performance compared to the base sampler in at least two of the four non-aggregate performance metrics (TDE/it, samples/s, mean IAT, MSS). Moreover, combining two adjustment types is seen to produce much larger benefits than using just one. Given that the target has a relatively challenging covariance structure, we find it unsurprising that the sampler using a combination of centering and covariance adjustments achieved by far the highest sample quality in this experiment. We provide a glimpse into the sampling behind the table's statistics in Appendix \ref{App:plots}, Figure \ref{Fig:ablation_adj_types_overview}.

\subsection{Parallelization and Update Schedules} \label{SubApp:par_US}

Next we demonstrate that our proposed \textit{entangled parallelization} (EP) approach (i.e.~the one outlined in Section \ref{SubSec:PATT}) outperforms the more obvious \textit{naive parallelization} (NP) of ATT chains (the latter being the scheme in which a number of parallel chains each apply ATT without sharing any information between them, thus maintaining independence from one another). Because update schedules (US) are necessitated in large part by the use of EP, we incorporate them into this ablation study. That is, we consider not only the effect of replacing EP by NP, but also that of removing the US from either sampler (which is taken to mean that the sampler must update its parameters in every iteration). All in all, we thus needed to run four different samplers: NP without US, NP with US, EP without US and EP with US. Since the samplers without US have far larger tuning overhead than those with US, the focus in the analysis of this experiment has to be on measuring the samplers' costs by their physical runtimes.

To eliminate the need for an initialization burn-in phase, we considered a tractable target distribution and initialized all samplers with an exact draw from it. Specifically, we chose the Gaussian distribution $\nu = \Nc_d(\tau, \Pi)$ with ${\tau = (1,\dots,d)^{\top} \in \R^d}$ and $\Pi = \diag(1^2,\dots,d^2)^{\top}$ in dimension $d = 100$. All four samplers were made to use ATT with centering (via sample means) and variance adjustments. As this meant that all of them would achieve very high sample quality if left running for long enough, we examined their performances based on somewhat shorter chains than usual: Each sampler was allowed to use $p = 10$ parallel chains and run for $n_{\text{its}} = 2 \cdot 10^4$ iterations.

\begin{table}[t]
	\caption{Sampling statistics for the ablation study on parallelization and update schedules.}
	\label{Tab:ablation_par_us}
	\vskip 0.1in
	\begin{center}
		\begin{small}
			\begin{sc}
				\begin{tabular}{crrrr}
					\toprule
					Sampler & TDE/it	& Samples/s	& Mean IAT & ES/s \\
					\midrule
					NP		& 9.02		& 30926		& 47.25    &   654 \\
					NP+US	& 8.48		& 50130		& 28.52    &  1757 \\
					EP		& 6.65		&  2078		&  1.13    &  1844 \\
					EP+US	& 7.11		& 41072		&  3.44    & 11948 \\
					\bottomrule
				\end{tabular}
			\end{sc}
		\end{small}
	\end{center}
	\vskip -0.2in
\end{table}

This time we considered all samples when computing the performance statistics. The results are shown in Table \ref{Tab:ablation_par_us}. One can see at a glance that EP+US (which corresponds to PATT as proposed in Sections \ref{Sec:ATT} and \ref{Sec:par_sched}) substantially outperforms the three other samplers in terms of ES/s.

To get a better understanding of how this result came about, we may take a closer look at the other columns: The samplers' TDE/it are all roughly the same, meaning that differences in runtime are mostly due to parameter update overhead. The amounts of samples each sampler generated per second of runtime are also roughly similar, except that EP without US was more than an order of magnitude slower than all other samplers. This is not particularly surprising, as that sampler needs to synchronize all its parallel chains in each iteration in order to perform its parameter update, and so ends up spending an exorbitant amount of time (in total) waiting for whichever is the slowest chain in any given iteration. It is also unsurprising that NP+US and EP+US are both faster than NP without US, given that the latter performs orders of magnitude more tuning parameter updates than the two former ones.

For the most part, the relative sizes of the mean IAT values are also as one might have expected: The two NP samplers produce more strongly correlated samples than the two EP samplers, because they incorporate far less information into the choice of each tuning parameter they use. EP without US generates the least correlated samples, because it constantly updates its tuning parameters with all information it has available (across chains). The only aspect of the IAT values that was perhaps unexpected is that NP without US produced more strongly correlated samples than NP+US, even though the former updates its tuning parameters much more frequently than the latter. Our best guess as to what caused this would be that the rapid updating of NP without US initially (when there was still very little sampling information in total and each new sample had a large impact on sample means etc.) caused its chains to behave a bit erratically, thus impeding their sample quality. Under that hypothesis, NP without US would already substantially benefit from an update schedule that simply postpones the first update to a point at which there is already a sizeable set of samples available.

\begin{table*}[t] 
	\caption{Sampling statistics for the ablation study on initialization burn-in periods. The mean IATs were computed w.r.t.~the component-wise absolute values of the samples generated in the latter half of the iterations.}
	\label{Tab:ablation_IBP}
	\vskip -0.1in
	\begin{center}
		\begin{small}
			\begin{sc}
				\begin{tabular}{lrrrrrr}
					\toprule
					Sampler		& TDE/it	& Samples/s	& Mean IAT	& MSS   & TDE/ES & ES/s \\
					\midrule
					without IBP & 11.66  	& 12631		& 77.20		&  2.85 & 899.82 & 163 \\
					with IBP 	&  6.77		& 19981		&  3.01		& 11.18 &  20.38 & 6640 \\
					\bottomrule
				\end{tabular}
			\end{sc}
		\end{small}
	\end{center}
	\vskip -0.1in
\end{table*}

We conclude the ablation study by emphasizing that various aspects of an ATT sampling setup are relevant in determining the merits of entangled parallelization and (any particular) update schedules in that specific setup: Firstly, one needs to take into account the types of adjustments that are to be used. For example, if one aims to use covariance adjustments, especially in moderate to high dimension, it is very much advisable to use a suitable update schedule to prevent the tuning parameter updates from producing an astronomically large runtime overhead (cf.~Section \ref{App:param_choices}). Secondly, one needs to factor in how computationally expensive it is to evaluate the target density. For instance, if each evaluation is extremely costly, the update overhead is much less relevant overall and it may be advisable too use a somewhat ``denser'' update schedule to improve sample quality (and ideally decrease TDE/it) at the cost of increased tuning overhead. Thirdly, it is also relevant how many parallel chains are to be used. If the number of chains is very large, e.g.~because the sampler is supposed to run on a large-scale cluster, it becomes increasingly harder to synchronize them for the purpose of a tuning parameter update in the EP framework. Of course it should still be feasible to employ EP in such situations, but they necessitate using a fairly ``sparse'' update schedule, i.e.~one that updates relatively rarely. Moreover, if such a sparse update schedule is used, the overall performance advantage of EP over NP should actually be very pronounced, because the former chooses its tuning parameters based on vastly more information than the latter, due to the large number of chains.

\subsection{Initialization Burn-In} \label{SubApp:IBP}

To demonstrate the potential advantage of using an initialization burn-in period (IBP), we fixed a simple (but somewhat challenging) target distribution and ran two versions of PATT-GPSS for it, one with an IBP and one without, both generating the same total amount of samples.

Specifically, we considered the Gaussian target distribution $\nu = \Nc_d(\tau, \Pi)$ with $\tau = (2 d,0,\dots,0)^{\top} \in \R^d$ and
\begin{equation*}
	\Pi_{i,j} 
	= \begin{cases}
		1 & i = j \\
		0.75 & i \neq j
	\end{cases}
\end{equation*}
in dimension $d = 100$.

Both samplers were instructed to use centering (based on sample means) and covariance adjustments. For both samplers we used $p = 10$ parallel chains and let them run for $n_{\text{its}} = 10^5$ iterations each. For the sampler with IBP, these $n_{\text{its}}$ iterations were divided into $n_{\text{burn}} = 10^4$ initialization burn-in iterations plus $n_{\text{att}} = 9 \cdot 10^4$ PATT iterations. Both samplers were initialized with the same values, which were independently (w.r.t.~the parallel chains) drawn from the multivariate standard normal distribution $\Nc_d(\fatzero, I_d)$.

The samplers' performances, in terms of the usual sampling statistics, are summarized in Table \ref{Tab:ablation_IBP}. Needless to say, the sampler with IBP has a clear advantage over the one without, outperforming it by roughly a factor of $40-50$ in the categories TDE/ES and ES/s. This is also evident when looking at trace plots of the samples (omitted here).

However, the reason for this performance difference is not obvious: If one merely examines the distances in Euclidean / Frobenius norm between the learned means / covariances and the respective ground truth values, one will observe little difference between the two samplers. Only closer examination of these estimates reveals the source of the performance difference: The sampler without IBP vastly overestimated the first coordinate variance throughout its run, still thinking it to be around $30$ (when the true value is $\Pi_{1,1} = 1$) after all $n_{\text{its}}$ iterations, see Figure \ref{Fig:ablation_IBP_cov_devs}. The cause of this is clear: It is a lingering effect of the bad initialization. Since the parallel chains are initialized close to the origin, and the target distribution's mean $\tau$ has first entry $\tau_1 = 2 d = 200$ far from zero, the early samples generated by either method (which record the chains' wandering from the initialization region closer to the mean), tend to have large variations in their first vector entry. Thus, considering these early samples in the computation of empirical covariances naturally leads to a significant overestimation of the corresponding coordinate variance.

\begin{figure}[tb]
	\begin{center}
		\includegraphics[width=0.35\textwidth]{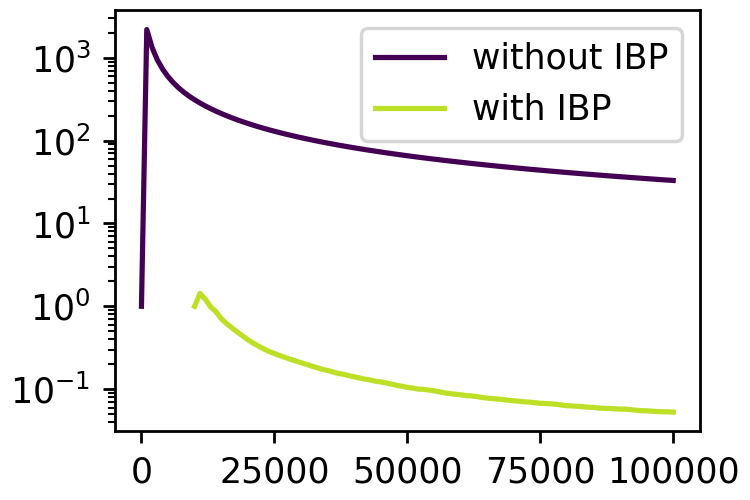}
	\end{center}
	\caption{Progression over iterations of the deviations between estimated and true first coordinate variance in the ablation study on initialization burn-in periods. The values on the $x$-axis mark the iteration indices at the update times, which for the sampler with IBP were made to include the initialization burn-in iterations. \label{Fig:ablation_IBP_cov_devs}}
	\vskip -0.1in
\end{figure}

The experiment should serve as a solid motivation for the use, in principle, of initialization burn-in periods. There are, of course, some caveats to this. Firstly, using an IBP only makes sense if the initial states are ``uninformed'', i.e.~if one is unsure how well they reflect the target. If the initial states are already representative of what samples from the target should look like, using an IBP may actually do more harm than good, as it can slow the rate (in terms of progress per total number of iterations, including IBP) at which the affine transformation's parameters approach their respective optimal values.

Secondly, even in cases of uninformed initialization, one should avoid choosing the length of the IBP much larger than it needs to be. Generally speaking, it is desirable for the IBP to last only as long as it takes the slowest chain to move from the initialization region to a region of high probability mass w.r.t.~the target. Of course it is impossible to know how long this takes beforehand, not least because it is random. It is therefore advisable to try and use an IBP length that, with high probability, is a relatively sharp upper bound for the length of the aforementioned process. If one simply uses a very long IBP, the benefit of excluding unrepresentative early samples is eventually overwhelmed by the detriment of using far fewer samples to learn the parameters of the affine transformation (assuming a fixed total number of iterations).

\onecolumn
\section{Additional Plots} \label{App:plots}

\begin{figure}[H]
	\begin{center}
		\includegraphics[width=\textwidth]{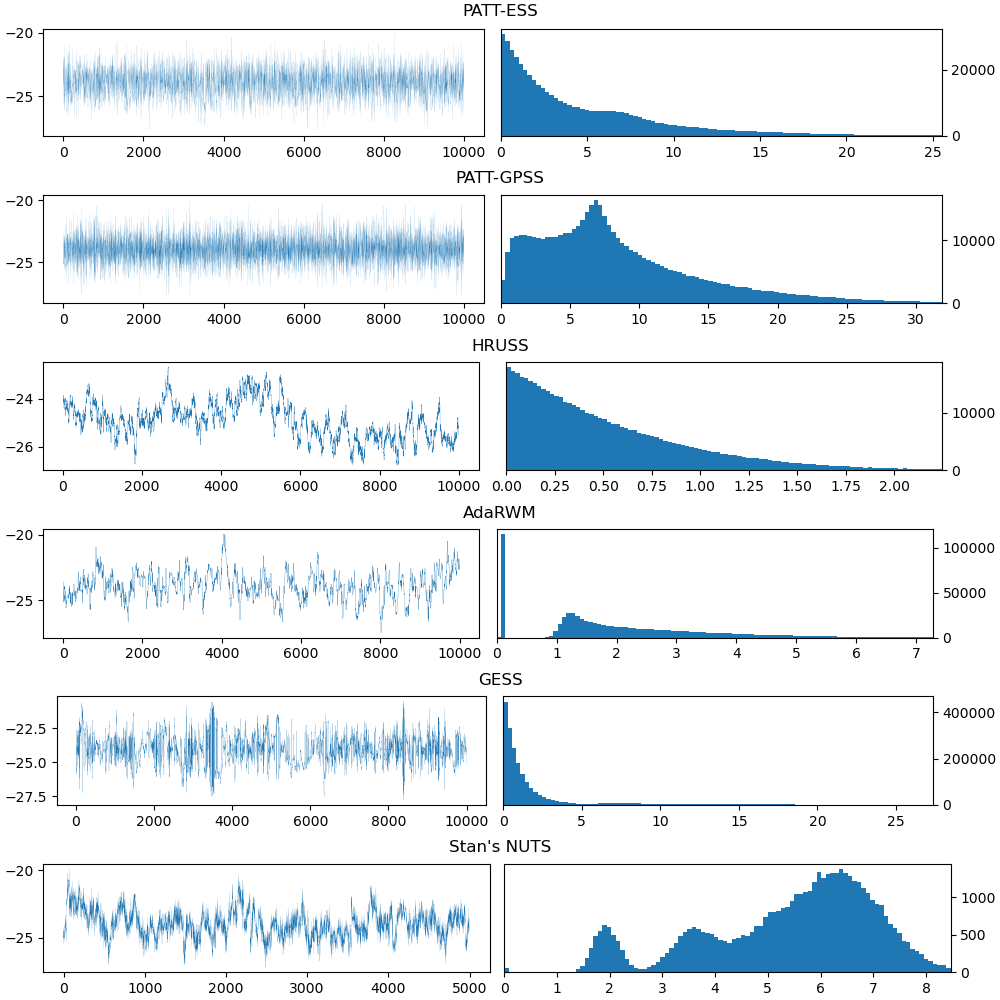}
	\end{center}
	\caption{Trace plots and step size histograms for the experiment on Bayesian inference with multivariate exponential distributions. The left column plots the progression of marginal samples for the 1st marginal (i.e.~the sequence of 1st vector entries) of the samples generated by each sampler's first chain in its final $n_{\text{window}} = 10^4$ iterations. The right column presents histograms of the step sizes (i.e.~Euclidean distances between consecutive samples) of each sampler during the latter half of its iterations. \label{Fig:multiv_exp_dists}}
\end{figure}

\phantom{} \\

\begin{figure}[H]
	\begin{center}
		\includegraphics[width=\textwidth]{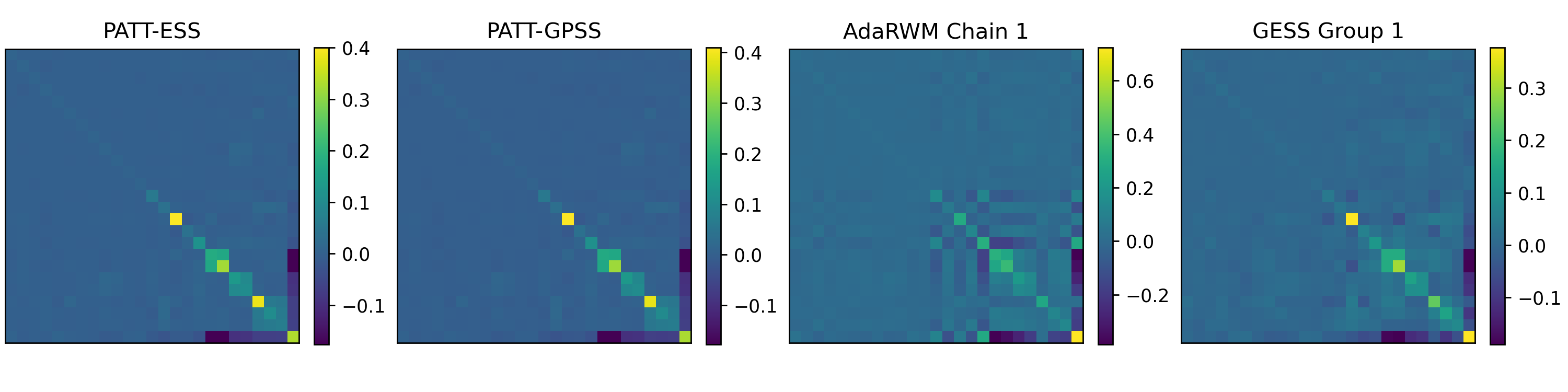}
	\end{center}
	\vspace{-0.75cm}
	\caption{Final covariance/scale matrices used by the adaptive samplers in the experiment on Bayesian logistic regression for German credit data. \label{Fig:BLR_German_credit_covs}}
\end{figure}

\begin{figure}[H]
	\begin{center}
		\includegraphics[width=\textwidth]{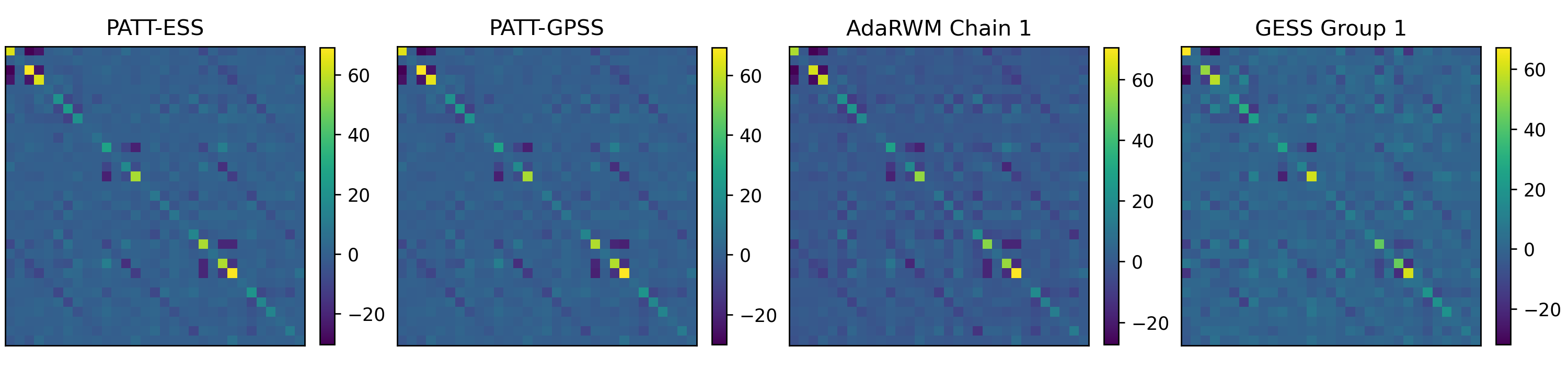}
	\end{center}
	\vspace{-0.75cm}
	\caption{Final covariance/scale matrices used by the adaptive samplers in the experiment on Bayesian logistic regression for breast cancer diagnostic data. \label{Fig:BLR_breast_cancer_covs}}
\end{figure}

\begin{figure}[H]
	\begin{center}
		\includegraphics[width=\textwidth]{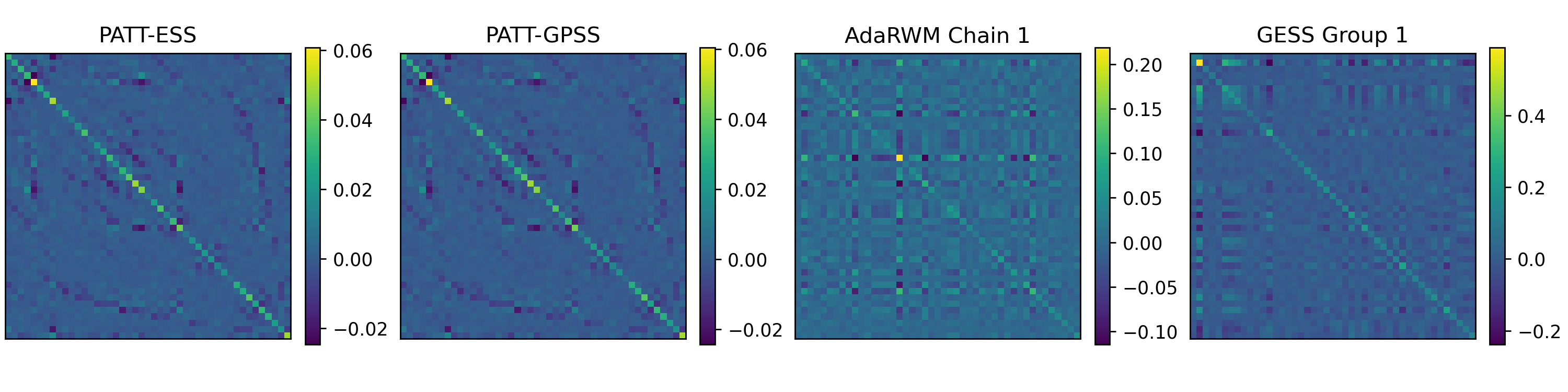}
	\end{center}
	\vspace{-0.75cm}
	\caption{Final covariance/scale matrices used by the adaptive samplers in the experiment on Bayesian logistic regression with feature engineering for Pima diabetes data. \label{Fig:BLR-FE_Pima_diabetes_covs}}
\end{figure}

\begin{figure}[H]
	\begin{center}
		\includegraphics[width=\textwidth]{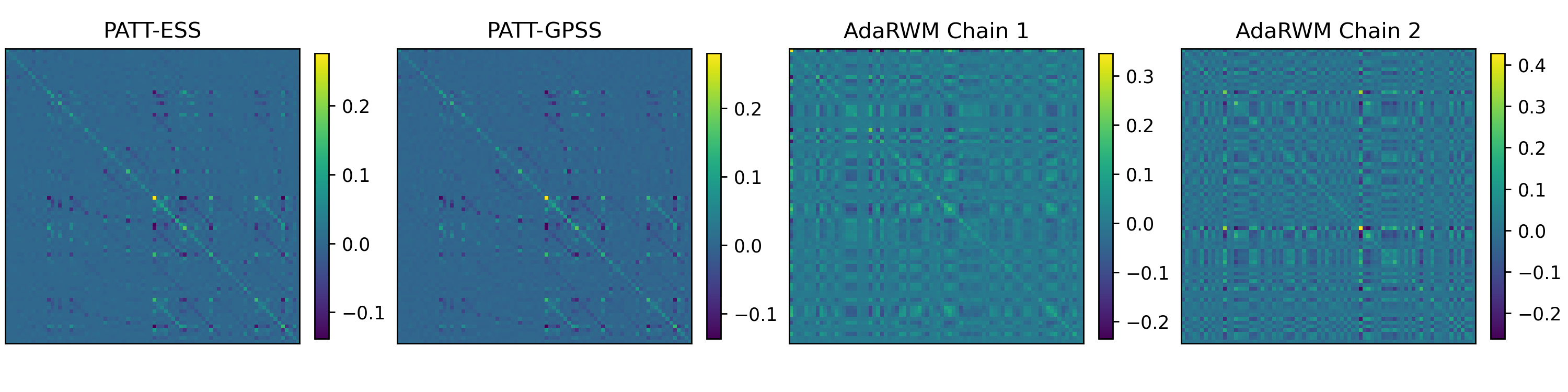}
	\end{center}
	\vspace{-0.75cm}
	\caption{Final covariance/scale matrices used by the adaptive samplers in the experiment on Bayesian logistic regression with feature engineering for wine quality data. \label{Fig:BLR-FE_wine_quality_covs}}
\end{figure}

\phantom{} \\

\begin{figure}[H]
	\begin{center}
		\includegraphics[width=\textwidth]{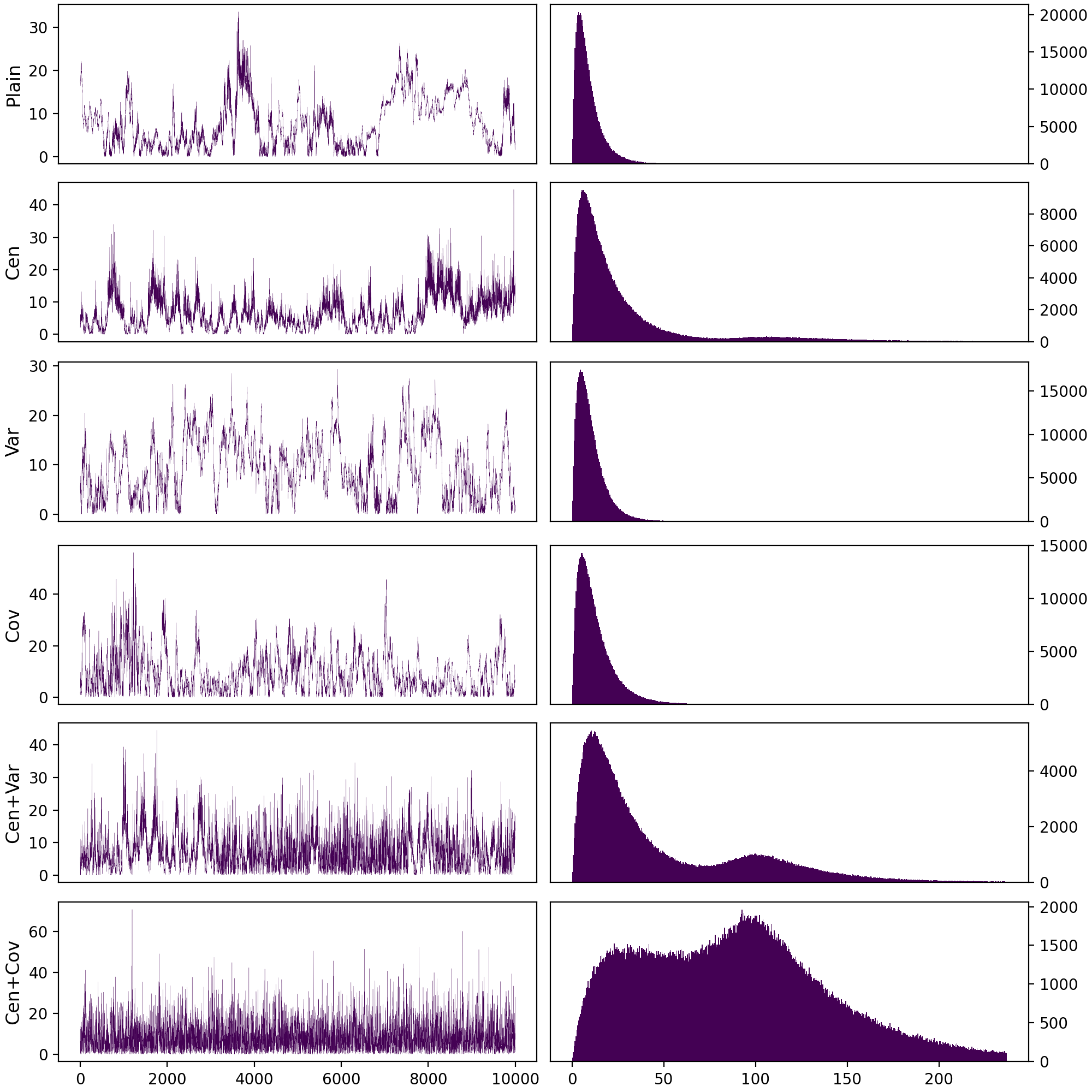}
	\end{center}
	\caption{Trace plots and step size histograms for the ablation study on adjustment types. The left column plots the progression of abs shifted marginal samples for the $d$-th marginal (i.e.~the absolute value of the last vector entry shifted by the last vector entry of the true target mean $\tau$) of the samples generated by each sampler's first chain in its final $n_{\text{window}} = 10^4$ iterations. The right column presents histograms of the step sizes (i.e.~Euclidean distances between consecutive samples) of each sampler during the latter half of sampling (excluding the initialization burn-in period). \label{Fig:ablation_adj_types_overview}}
\end{figure}

\end{document}